\newtheorem{theorem}{Theorem}
\newtheorem{lemma}{Lemma}
\newtheorem{corollary}{Corollary}
\theoremstyle{plain}
\newtheorem{definition}{Definition}
\theoremstyle{remark}
\newtheorem{remark}{Remark}
\setlist{topsep=1pt,itemsep=0pt,labelwidth=8pt,labelsep=5pt,leftmargin=16pt}
\newcommand{\tran}{\mathsf{T}}
\DeclareMathOperator*{\argmin}{arg\,min}
\title{Zeroth-Order Feedback-Based Optimization for Distributed Demand Response}
\date{}
\author{Ruiyang Jin, Yujie Tang, Jie Song}
\affil{Peking University, Department of Industrial Engineering \& Management\\
\{\href{mailto:jry@pku.edu.cn}{jry}, \href{mailto:yujietang@pku.edu.cn}{yujietang}, \href{mailto:jie.song@pku.edu.cn}{jie.song}\}@pku.edu.cn}
\begin{document}
% \affiliation{PKU}

% \author{Ruiyang Jin, Yuije Tang, Jie Song
% % <-this % stops a space
% \thanks{The authors are with Department of Industrial Engineering \& Management, College of Engineering, Peking University, Beijing, China (e-mail: \{jry, yujietang, jie.song\}@pku.edu.cn)}
% }
\maketitle
\begin{abstract}
Distributed demand response is a typical distributed optimization problem that requires coordination among multiple agents to satisfy demand response requirements. However, existing distributed algorithms for this problem still face challenges such as unknown system models, nonconvexity, privacy issues, etc. To address these challenges, we propose and analyze two distributed algorithms, in which the agents do not share their information and instead perform local updates using zeroth-order feedback information to estimate the gradient of the global objective function. One algorithm applies to problems with general convex and compact feasible sets but has higher oracle complexity bounded by $\mathcal{O}(d/\epsilon^2)$, while the other algorithm achieves lower complexity bound $\mathcal{O}(d/\epsilon)$ but is only applicable to problems with box constraints. We conduct empirical experiments to validate their performance.
\end{abstract}

\section{Introduction}

With the higher proportion of renewable energy integrated into the smart grid \cite{phuangpornpitak2013opportunities}, the real-time generation-demand balance in power systems necessitates the management of more flexible resources on both the generation side and the demand side. Demand response (DR) is an important approach for demand-side management that coordinates the end-users' electricity usage to change from their normal patterns by incentive-based or price-based methods \cite{albadi2008summary}. The benefits of DR include reduction in system operating costs and generation capacity requirements, increased economic efficiency, etc.~\cite{pinson2014benefits}, and can be enjoyed by both the power grid and end users.

Distributed demand response (DDR) aims to coordinate different types of distributed energy resources on the demand side, such as residential, commercial, and industrial loads, distributed generators, energy storage, etc. \cite{safdarian2014distributed,roldan2019improving}. %Coordination in DDR can be achieved in either a centralized manner or a decentralized manner. In the centralized manner, the computation and decision procedure is carried out mainly by a central coordinator \cite{ahmadi2015optimizing}, which is able to find the optimal decision relatively fast. However, it suffers from lack of robustness in the case of unexpected shutdown of the central coordinator, as well as heavy communication burden as the central coordinator needs to collect information from all agents \cite{yang2019survey}.
The distributed manner of coordination in DDR allows the distributed users to participate in the computation and decision procedure by communicating with neighboring users or aggregator iteratively. Existing works have proposed to adapt various distributed optimization methods for developing DDR algorithms, including alternative direction multiplier method (ADMM) \cite{tan2014optimal,scott2015distributed,kou2020scalable,shi2021distributed}, dual decomposition \cite{hu2018distributed}, consensus-based methods \cite{chen2014distributed, qin2018consensus}, game theoretic approaches~\cite{yaagoubi2014user, latifi2017fully}, etc.; see Section~\ref{subsec:related_works} for a review of related works.

However, there still remain challenges in the design of DDR algorithms that are yet to be fully addressed. Many formulations of DDR incorporate voltage and power flow constraints to ensure safe operation \cite{scott2015distributed,shi2021distributed}. Considering that most existing DDR algorithms are gradient-based or model-based, the optimization procedure of DDR will necessarily require knowledge of a detailed model of the power grid. However, in practical scenarios, it can be challenging to develop a detailed mathematical model describing the power grid's physics that is both accurate and computationally tractable, especially when the numbers of buses and lines in the power grid are huge and when the connected devices are highly heterogeneous~\cite{prostejovsky2016distribution,li2020learning}. Besides, the nonlinearity and nonconvexity of the power flow add further layers of difficulty in deriving performance guarantees of DDR algorithms. Furthermore, with the increasing level of digitization in the smart grid, privacy concerns become one of the main obstacles that hinder the development and adoption of DR~\cite{pop2020blockchain}. Users are concerned about privacy leakage when participating in DR and sharing load information with others \cite{tsaousoglou2020truthful}, but most distributed optimization algorithms require information sharing among agents or between agents and the aggregator, which can be a major source of privacy leakage. These challenges motivate our study of zeroth-order feedback-based optimization algorithms for distributed demand response.

\subsection{Related Works}
\label{subsec:related_works}

\paragraph*{Distributed optimization algorithms for DDR} As mentioned before, existing literature has adapted various distributed optimization techniques for designing DDR algorithms, and here we only provide an inexhaustive review. For example, the works ~\cite{tan2014optimal,scott2015distributed,kou2020scalable,shi2021distributed} employed distributed ADMM to decompose the full DDR problem into iterative subproblems that are solved successively on the users' side and the aggregator/utility's side, while \cite{hu2018distributed} used dual decomposition to formulate the subproblems. The consensus method is another class of distributed optimization techniques that drives local copies of the decision variable to simultaneously achieve consensus and optimality and was employed by works including~\cite{chen2014distributed, qin2018consensus}. Game theoretic approaches, on the other hand, model DDR as a game involving an operator and multiple distributed agents, and the goal is to achieve equilibrium by strategic interactions~\cite{yaagoubi2014user, latifi2017fully}. We mention that most of these algorithms require agents to share their load or preference information with others and require knowledge of a mathematical model of the system.

\paragraph*{Privacy preservation in DDR} The information sharing among agents and the aggregator in existing DDR algorithms can be a major source of privacy leakage, and existing literature has proposed different methods for privacy preservation. The techniques for privacy protection in DDR include information encryption~\cite{pop2020blockchain,li2013eppdr,rahman2017privacy} and differential privacy-based strategies~\cite{hassan2019differential,liu2017achieving}. Information encryption strategies prevent unauthorized users from accessing the encrypted information but have very high computational overhead and require auxiliary devices that may be costly. Differential privacy-based methods, on the other hand, ensure a controllable degree of privacy preservation with low computational overhead but data accuracy will be impaired.

\paragraph*{Zeroth-order optimization}
Zeroth-order gradient estimation is a promising technique in zeroth-order/derivative-free optimization that has recently attracted much attention for designing optimization algorithms in the model-free setting. The main idea behind zeroth-order gradient estimation is to construct a stochastic gradient from zeroth-order function values at randomly explored points, leading to zeroth-order algorithms that enjoy similar convergence guarantees as first-order methods~\cite{duchi2015optimal,nesterov2017random}. Due to its close relation to stochastic first-order methods, zeroth-order gradient estimation has also been successfully applied in distributed zeroth-order optimization~\cite{hajinezhad2019zone,wang2022distributed,tang2023zeroth}. We refer to~\cite{larson2019derivative,liu2020primer} for more detailed surveys of zeroth-order optimization methods and their applications.

\begin{table}
\centering
\begin{tabular}{c|c|c|c}
\toprule
& Feasible region & Complexity (convex) & Complexity (nonconvex) \\
\midrule
2-ZFGD & Convex \& compact & $\mathcal{O}\!\left(\mfrac{d}{\epsilon^2}\right)$ & $\mathcal{O}\!\left(\mfrac{d}{\epsilon^2}\right)$ \\
\midrule
RZFCD & Box & $\mathcal{O}\!\left(\mfrac{d}{\epsilon}\right)$ & $\mathcal{O}\!\left(\mfrac{d}{\epsilon}\right)$ \\
\bottomrule
\end{tabular}
\caption{Comparison of the two proposed DDR algorithms. The complexity is measured in terms of the number of zeroth-order queries needed to achieve $\min_{k\leq K}\mathbb{E}[F(x(k))-F^\ast]\leq\epsilon$ (convex setting) or $\min_{k< K}\mathbb{E}[\|\mathfrak{g}(x(k);M)\|^2]\leq\epsilon$ (nonconvex setting); see Section~\ref{section_3} for detailed definitions. $d$ is the dimension of the decision variable.}
\label{table:contribution}
\end{table}

\subsection{Our Contributions}
In this paper, we study distributed zeroth-order methods for distributed demand response. We formulate a generalized DDR problem, in which an aggregator needs to coordinate multiple distributed agents to minimize global and local objectives. The technical contributions of this paper can be summarized as follows:
\begin{enumerate}
\item We design two DDR algorithms that incorporate zeroth-order gradient estimation techniques to address the issue of lacking system models. The two proposed algorithms do not involve gradient computation of the global objective that requires a mathematical model of the power grid, but instead exploit observed feedback values (zeroth-order information) of the global objective to produce a stochastic gradient estimator. Furthermore, the participating agents do not need to upload their load or preference information, which helps to preserve their privacy during the optimization procedure.

\item We analyze and compare the performance of the two proposed DDR algorithms, which is summarized in Table~\ref{table:contribution}. Specifically, we derive the complexity bounds for the two algorithms for both the \emph{convex} and the \emph{nonconvex} settings, which quantitatively characterize the efficiency of the two algorithms.
The 2-ZFGD algorithm applies to situations where the feasible region is a general compact and convex set, but its complexities for both the convex and the nonconvex settings are upper bounded by $\mathcal{O}(d/\epsilon^2)$ which is inferior. Whereas the RZFCD algorithm achieves better complexity bounds $\mathcal{O}(d/\epsilon)$ for both the convex and the nonconvex settings, but only has performance guarantees when the feasible region is a multi-dimensional box.
\end{enumerate}

% Numerical experiments are also performed for convex and nonconvex objectives to validate the optimality and convergence of the proposed algorithms.

To the best of our knowledge, existing literature has not yet proposed distributed algorithms that can handle black-box system behavior and avoid sharing agents' preferences to solve the DDR problem. Moreover, our analysis includes both the convex and the nonconvex settings for constrained problems, and the design and analysis of RZFCD shed light on how to close the gap between zeroth-order smooth unconstrained and constrained optimization (see the discussion after Corollary~\ref{corollary:RZFCD_complexity_nonconvex}), which we believe has independent theoretical interest for researchers in the area of general zeroth-order optimization.

\paragraph*{Notations} For a subset $S\subseteq\mathbb{R}^p$ and a real number $\alpha\in\mathbb{R}$, denote $\alpha S\coloneqq \{\alpha x\mid x\in S\}$. The interior of $S\subseteq\mathbb{R}^p$ will be denoted by $\operatorname{int} S$. For a multivariate function $h(x)$ with $x=(x_1,\ldots,x_N)$ and each $x_i\in\mathbb{R}^{p_i}$, we let $\nabla_i h(x)$ denote the partial gradient of $h$ with respect to $x_i$ evaluated at $x$. To distinguish between subvectors and entries of a vector $x$, we use $i,j$ to denote indices of subvectors $x_i,x_j$, while Greek letters $\alpha,\beta$ are reserved for indices of entries $x_\alpha,x_\beta$. We let $\langle\cdot,\cdot\rangle$ denote the standard inner product and let $\|\cdot\|$ denote the $\ell_2$ norm on $\mathbb{R}^p$. %For a finite sequence of real numbers $a_1,\ldots,a_k$, we use $a_{1:k}$ to denote the vector formed by concatenating $a_1$ to $a_k$.
The closed unit ball in $\mathbb{R}^p$ will be denoted by $\mathbb{B}_p\coloneqq \{x\in\mathbb{R}^p: \|x\|\leq 1\}$, and the unit sphere in $\mathbb{R}^p$ will be denoted by $\mathbb{S}_{p-1}\coloneqq \{x\in\mathbb{R}^p: \|x\|=1\}$.

\section{Problem Formulation and Preliminaries}

\subsection{Formulation of the Distributed Demand Response Problem}

Consider an aggregator trying to satisfy certain DR requirements from a higher-level grid operator by coordinating $N$ distributed agents. On the one hand, the DR program needs to meet a certain global goal such as curtailing specific amounts of load, minimizing peak-to-average rate, etc. On the other hand, the discomfort losses or costs of users induced by participating in DR should also be considered because DR is acceptable only based on low influence on user experience. Generally, we can model DDR as an optimization problem, whose objective function consists of a global cost that quantifies how well the global goal is achieved, and a set of local costs that characterize the influence on user experience:
\begin{align}\label{objective}
%	\min F(x)=\phi (x)+\sum_{i=1}^{n}f_i(x_i)
	\min_{x=(x_1,\ldots,x_N)\in \mathcal{X}}\ \ &
	F(x)= \phi(x)+\sum_{i=1}^N f_i(x).
\end{align}
Here each agent is associated with a decision variable $x_i\in\mathcal{X}_i$ where $\mathcal{X}_i\subseteq \mathbb{R}^{d_i}$ is the corresponding feasible set; the feasible set $\mathcal{X}_i$ appears naturally in many practical scenarios, and can be used to model, for instance, the range of power generation of a distributed generator.
% In this work, we make the assumption that $\mathcal{X}_i$ represents a rectangular region of the form $\{x_i\in\mathbb{R}^{d_i}: l_i\leq x_i\leq u_i\}$ for some $l_i,u_i\in\mathbb{R}^{d_i}$.
We assume that each $\mathcal{X}_i$ is compact and convex, and has a nonempty interior in $\mathbb{R}^{d_i}$; we also assume that $0\in\operatorname{int}\mathcal{X}_i$ without loss of generality.
The joint decision variable is denoted by $x=(x_1,\ldots,x_N)$, and we also denote $\mathcal{X}=\prod_{i=1}^N\mathcal{X}_i$. Each $f_i:\mathcal{X}\rightarrow\mathbb{R}$ is the local cost function of agent $i$; the value of $f_i(x)$ only depends on the subvector $x_i$, and it is for notational purposes that we let the domain of $f_i$ be $\mathcal{X}$ instead of $\mathcal{X}_i$. The function $\phi:\mathcal{X}\rightarrow\mathbb{R}$ is the global objective. We assume that the value of the global objective function can only be observed by the aggregator, and each local cost function is only known to the associated agent. For notational simplicity, we also denote $d = \sum_{i=1}^N d_i$ and $f(x)=\sum_{i=1}^N f_i(x)$.

Next, we elaborate further details on the global objective $\phi$ and the local cost functions $f_i$.

\vspace{3pt}
\noindent{\bf Global objective function.} As mentioned before, different types of global objectives have been proposed in existing literature, which can be convex or nonconvex. In this work, we shall assume sufficient smoothness of the global objective without assuming its detailed formulation. On the other hand, we impose the following restrictions on the type of information that can be accessed about the global objective:
\begin{itemize}[leftmargin=10pt, itemindent=0pt]
\item The aggregator can only access the value of the global objective function $\phi$, and no gradient information of $\phi$ is available.
\item The global objective value can only be accessed by the aggregator but not by any of the agents.
\end{itemize}
We present one example to motivate the above restrictions: Suppose the aggregator needs to curtail the load of a distribution feeder by coordinating multiple users to meet a certain target of the total power consumption so that the safety of the distribution feeder will not be compromised. The target of the total power consumption is specified by the demand response signal $D$ sent by the grid operator, and the aggregator needs to minimize the difference between the true total power consumption of the distribution feeder and the demand response signal $D$, while also maintaining the voltage magnitudes of the buses within certain operational limits. In this case, the global objective function can be given by
\begin{subequations}
\label{eq:global_cost}
\begin{equation}
	\phi (x)=\alpha_D\cdot (p_c(x)-D)^2+\alpha_v\cdot \rho(x),
\end{equation}
where
\begin{equation}
\rho(x) = \sum_j
 \left(\max\{v_j(x)-\overline{v},0\}^2+
 \max\{\underline{v}-v_j(x),0\}^2\right).
\end{equation}
\end{subequations}
Here the mapping $p_c:\mathcal{X}\rightarrow\mathbb{R}$ maps the joint decision variable $x$ to the total power consumption of the distribution feeder measured at the substation, and $v_j:\mathcal{X}\rightarrow\mathbb{R}$ maps the joint decision $x$ to the voltage magnitude at bus $j$ in the distribution feeder. $\rho(x)$ denotes the penalty incurred when any voltage magnitude is out of the specified range $[\underline{v},\overline{v}]$. $\alpha_D$ and $\alpha_v$ are the positive linear weights of the deviation term and the penalty term. Note that the global objective function $\phi$ will in general not be available to the users, since it involves the structure of the distribution feeder as well as the confidential DR signal $D$. Moreover, the mappings $p_c$ and each $v_j$ may have an implicit or explicit relation with $x$ depending on the modeling method. For example, \cite{qin2018consensus} employs the simplified model $p_c(x)=\sum_{i=0}^{N}(1+\gamma_i)x_i$, where $\gamma_i$ is a simplified power loss-related coefficient depending on the network topology and parameters of transmission lines; however, the coefficients $\gamma_i$ in practice are hardly known, and employing this simplified model will inevitably compromise accuracy. In principle, as indicated by the AC power flow equations, $p_c$ and $v_j$ can be nonlinear and nonconvex and generally have no explicit forms. Furthermore, constructing a good mathematical model of the distribution feeder requires knowledge of the detailed topology and system parameters of the grid that are sufficiently accurate, which can be challenging when the number of buses is large and the connected devices are highly heterogeneous. In such cases, only the value of the global objective $\phi(x)$ can be observed/measured by the aggregator, and its gradient computation can be difficult. This hinders the application of traditional gradient-based optimization methods.

\vspace{3pt}
\noindent{\bf Local cost functions.}
Local costs are inevitable when users participate in DR by adjusting their load levels. Many existing works have proposed quadratic forms of local costs for different types of load resources. For example, quadratic utility functions have been widely adopted \cite{deng2015fast,qin2018consensus} for residential users of which the load resources are adjustable household appliances, and as discussed in~\cite{samadi2012advanced}, quadratic utility functions exhibit many excellent properties.
%The local cost is the utility loss due to load curtailment and it is also quadratic form.
For distributed generators, \cite{nicholson1973optimum,qin2018consensus} used quadratic functions to model the revenue loss and costs incurred by adjustment of the power generation. \cite{jin2020manage,qin2018consensus} used quadratic functions to model the local costs for energy storage load.

In this paper, we do not confine the local cost functions to be quadratic or even convex; our theoretical analysis will take into account both the convex and the nonconvex settings. However, we assume that a mathematical model of the local cost $f_i$ is known (and only known) to agent $i$, which allows agent $i$ to compute the partial gradient $\nabla_i f_i(x)$ whenever the subvector $x_i$ is given.

At the end of this subsection, we introduce the notions of Lipschitz continuity and smoothness that will be used for our theoretical analysis.
\begin{definition}
Let $h:\mathcal{X}\rightarrow\mathbb{R}$ be given.
\begin{enumerate}
\item We say that $h$ is $\Lambda$-Lipschitz for some $\Lambda>0$, if for all $x,y\in\mathcal{X}$, we have
\[
|h(x)-h(y)|\leq \Lambda\|x-y\|.
\]
\item We say that $h$ is $L$-smooth for some $L>0$, if $h$ is continuously differentiable over $\mathcal{X}$, and for all $x,y\in\mathcal{X}$,
\[
\|\nabla h(x)-\nabla h(y)\|\leq L\|x-y\|.
\]
\item We say that $h$ is $(L_1,\ldots,L_d)$-coordinatewise smooth, if for each $\alpha=1,\ldots,d$,
\[
\left|\frac{\partial h(x+te_\alpha)}{\partial x_\alpha}-
\frac{\partial h(x)}{\partial x_\alpha}\right|
\leq L_\alpha|t|
\]
for all $x\in\mathcal{X}$ and all $t\in\mathbb{R}$ such that $x+te_\alpha\in\mathcal{X}$, where $e_\alpha$ is a unit vector with the $\alpha$th entry being 1.
\end{enumerate}
\end{definition}

\subsection{Preliminaries on Zeroth-Order Optimization}
In order to solve the DDR problem~\eqref{objective} with the restriction that only function value information on $\phi$ is available, we resort to derivative-free optimization approaches, particularly the zeroth-order gradient estimation technique.

\begin{comment}
%\subsection{Gradient Estimation}
For an unconstrained problem
\begin{equation*}
	\min_{x\in\mathbf{R}^d} h(x),
\end{equation*}
the classic gradient descent algorithms have the following form
\begin{equation*}
\label{classic_ite}
	x(k+1)=x(k)-\eta \nabla h(x(k)),
\end{equation*}
where $x(k)$ is the $k$th point derived from the algorithm; $\eta$ and $-\nabla h(x(k))$ are the stepsize and the descent direction respectively. However, the gradient needs to be estimated when it is unattainable.
\end{comment}

Zeroth-order gradient estimation is a derivative-free optimization technique that has recently attracted researchers' attention. Existing works have shown that optimization methods based on zeroth-order gradient estimation can usually enjoy theoretical convergence guarantees that are similar to their first-order counterparts~\cite{duchi2015optimal,nesterov2017random}, and that it's relatively straightforward to adapt zeroth-order gradient estimation techniques for distributed optimization~\cite{hajinezhad2019zone,wang2022distributed,tang2023zeroth}. Given a continuously differentiable function $h:\mathbb{R}^p\rightarrow\mathbb{R}$, a commonly used zeroth-order gradient estimator for $h$ is the $2$-point gradient estimator given by
\begin{equation}
\label{grad_esti2}
G_h(x;r,z)=\frac{h(x+rz)-h(x)}{r}z.
\end{equation}
Here $z\in\mathbb{R}^p$ is a random perturbation vector whose distribution is usually chosen to be one of the following:
\begin{enumerate}
\item The standard Gaussian distribution $\mathcal{N}(0,I_p)$;
\item The uniform distribution on the sphere of radius $\sqrt{p}$, which we denote by $\mathcal{U}(\sqrt{p}\,\mathbb{S}_{p-1})$.
\end{enumerate}
The parameter $r>0$ is called the \emph{smoothing radius}, which controls the amount of perturbation in the gradient estimator. Note that to construct~\eqref{grad_esti2}, we need \emph{two} function evaluations of $h$, hence the name \emph{$2$-point gradient estimator}; in practice, these two quantities can be obtained by applying the decision variables $x$ and $x+rz$ to the system and then observe the corresponding \emph{feedback} values.

The following lemma bounds the bias of the $2$-point zeroth-order gradient estimator.

\begin{lemma}[{\cite{malik2020derivative}}]\label{lemma:zeroth-order_grad_est_bias}
Suppose $h:\mathbb{R}^p\rightarrow\mathbb{R}$ is $L$-smooth, and let $z$ be sampled from either $\mathcal{N}(0,I_p)$ or $\mathcal{U}(\sqrt{p}\,\mathbb{S}_{p-1})$. Then
\begin{align*}
\left\|\mathbb{E}\!\left[
G_h(x;r,z)-\nabla h(x)
\right]\right\|
\leq \sqrt{p}Lr.
\end{align*}
\end{lemma}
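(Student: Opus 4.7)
The plan is to recognize that $\mathbb{E}[G_h(x;r,z)]$ is itself a gradient of a smoothed surrogate of $h$, and then bound the difference between that gradient and $\nabla h(x)$ using the $L$-smoothness assumption. Since $\mathbb{E}[z]=0$ under both distributions, the $h(x)z/r$ term drops out of the expectation and one is left with $\mathbb{E}[G_h(x;r,z)] = \frac{1}{r}\mathbb{E}[h(x+rz)z]$, so the two cases reduce to identifying this quantity as a gradient.

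For the Gaussian case $z\sim\mathcal{N}(0,I_p)$, I would introduce the Gaussian smoothing $\hat h_r(x)\coloneqq\mathbb{E}_z[h(x+rz)]$. Exchanging expectation and differentiation (justified by $L$-smoothness and standard dominated convergence), one obtains both $\nabla\hat h_r(x)=\mathbb{E}[\nabla h(x+rz)]$ and $\nabla\hat h_r(x)=\frac1r\mathbb{E}[h(x+rz)z]$ (the latter via Stein's identity / integration by parts on the Gaussian density). Equating these gives $\mathbb{E}[G_h(x;r,z)]=\mathbb{E}[\nabla h(x+rz)]$, so by Jensen and $L$-smoothness,
\[
\bigl\|\mathbb{E}[G_h(x;r,z)]-\nabla h(x)\bigr\|
\le \mathbb{E}\|\nabla h(x+rz)-\nabla h(x)\|
\le Lr\,\mathbb{E}\|z\|
\le Lr\sqrt{p},
\]
using $\mathbb{E}\|z\|\le\sqrt{\mathbb{E}\|z\|^2}=\sqrt p$ by Jensen.

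For the sphere case $z\sim\mathcal{U}(\sqrt p\,\mathbb{S}_{p-1})$, I would write $z=\sqrt p\,\tilde z$ with $\tilde z\sim\mathcal{U}(\mathbb{S}_{p-1})$ and apply the Flaxman--Kalai--McMahan identity: for the ball-smoothed surrogate $\hat h_\delta(x)\coloneqq\mathbb{E}_{u\sim\mathcal{U}(\mathbb{B}_p)}[h(x+\delta u)]$, Stokes' theorem gives $\nabla\hat h_\delta(x)=(p/\delta)\,\mathbb{E}_{\tilde z}[h(x+\delta\tilde z)\tilde z]$. Setting $\delta=r\sqrt p$ and substituting yields $\mathbb{E}[G_h(x;r,z)]=\nabla\hat h_{r\sqrt p}(x)$. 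Again interchanging gradient and expectation one gets $\nabla\hat h_{r\sqrt p}(x)=\mathbb{E}_u[\nabla h(x+r\sqrt p\,u)]$, and then the bound
\[
\bigl\|\mathbb{E}[G_h(x;r,z)]-\nabla h(x)\bigr\|
\le Lr\sqrt p\,\mathbb{E}\|u\|\le Lr\sqrt p
\]
follows since $\|u\|\le 1$ on the unit ball.

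The only genuinely nontrivial step is the Stokes/divergence argument for the sphere case, and more specifically justifying that the surface expectation against $\tilde z$ equals (up to the factor $p/\delta$) the gradient of the ball average. Everything else is a one-line application of $L$-smoothness and Jensen. If I wanted to avoid the Flaxman identity entirely, an alternative route would be to directly Taylor-expand $h(x+rz)=h(x)+r\langle\nabla h(x),z\rangle+R(x,rz)$ with remainder $|R(x,rz)|\le\tfrac L2 r^2\|z\|^2$, note that $\mathbb{E}[\langle\nabla h(x),z\rangle z]=\nabla h(x)$ in both the Gaussian and the radius-$\sqrt p$ sphere case (since $\mathbb{E}[zz^{\tran}]=I_p$ for both), and then bound the contribution of $R$ by $\frac{Lr}{2}\mathbb{E}\|z\|^3$. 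This alternative gives the same scaling but a slightly worse constant and uses $\mathbb{E}\|z\|^3$ moments, which is why I would prefer the smoothed-surrogate route for tightness.
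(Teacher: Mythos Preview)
Your proposal is correct. The paper does not actually prove this lemma itself; it is quoted from~\cite{malik2020derivative} without proof, so there is nothing to compare against directly. That said, your smoothed-surrogate argument (identify $\mathbb{E}[G_h(x;r,z)]$ as $\nabla\hat h_r(x)$ via Stein/Stokes, then bound $\|\nabla\hat h_r(x)-\nabla h(x)\|$ by $Lr\,\mathbb{E}\|z\|$ using $L$-smoothness and Jensen) is the standard route, and it is exactly the technique the paper itself employs in the appendix when proving the closely related Lemma~\ref{2_point_estimator_error}: there the authors define $h^r(x)=\mathbb{E}_{y}[h(x+ry)]$ and bound $\|\nabla h^r(x)-\nabla h(x)\|\le rL\sqrt{d}$ by the same Jensen-plus-smoothness step you outline. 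Your alternative Taylor-remainder route is also valid and, as you note, would yield a constant governed by $\mathbb{E}\|z\|^3$ rather than $\mathbb{E}\|z\|$.
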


Lemma~\ref{lemma:zeroth-order_grad_est_bias} justifies that, $G_h(x;r,z)$ can serve as a stochastic gradient whose bias can be controlled by the smoothing radius $r$. Then, by plugging~\eqref{grad_esti2} into a stochastic first-order method (e.g., stochastic gradient descent or mirror descent), we obtain a zeroth-order optimization method.

\section{Algorithms} \label{section_3}

In this section, we design distributed algorithms for the DDR problem~\eqref{objective} leveraging tools from zeroth-order optimization. We shall present two zeroth-order feedback-based optimization algorithms, one called \emph{2-point Zeroth-order Feedback-based Gradient Descent} (2-ZFGD) which is based on the projected stochastic gradient descent framework, and the other called \emph{Randomized Zeroth-order Feedback-based Coordinate Descent} (RZFCD) which is based on the randomized projected coordinate descent framework. The details of the two algorithms as well as their advantages and disadvantages will be presented and discussed in the subsequent subsections.

\subsection{2-ZFGD}

Our first algorithm is based on the framework of stochastic projected gradient descent:
\[
x(k+1) = \mathcal{P}_{\mathcal{X}}\!\left[
x(k) - \eta\, g(k)
\right],
\]
where $\eta>0$ is the step size, and $g(k)$ is an estimator of the gradient $\nabla F(x(k))$. Since $\mathcal{X}$ is the Cartesian product of $\mathcal{X}_1,\ldots,\mathcal{X}_N$, we can rewrite the above iteration equivalently as
\[
x_i(k+1) = \mathcal{P}_{\mathcal{X}_i}\!\left[
x_i(k) - \eta\, g_i(k)
\right],
\]
where $x_i(k)$ is now the subvector of $x(k)$ associated with agent $i$, and each $g_i(k)$ is an estimator of the partial gradient $\nabla_i F(x(k)) = \nabla_i f_i(x(k))+\nabla_i\phi(x(k))$. Recall that in our problem setup, only zeroth-order information of the global objective $\phi$ can be accessed directly by the aggregator; moreover, in distributed demand response programs, the users may prefer not to reveal information on their local costs $f_i$ to the aggregator due to privacy issues. Meanwhile, it can be observed that $f_i(x)$ and $\nabla_i f_i(x)$ are known to agent $i$ and are not dependent on other agents' decision variables $x_j$ for $j\neq i$. Taking these considerations and observations into account, we propose the $2$-ZFGD algorithm presented in Algorithm~\ref{algorithm:2-zfgd}.

%the decentralized gradient descent can be performed as long as $\nabla_i \phi$ can be calculated or estimated locally by agent $i$.\\

\begin{algorithm}
\caption{2-point Zeroth-order Feedback-based Gradient Descent (2-ZFGD)}\label{algorithm:2-zfgd}
	\DontPrintSemicolon
	\SetAlgoLined
 \KwIn{Number of iterations $K$, step size $\eta$, smoothing radii $(r(k))_{k\geq 0}$, shrinkage factor $\delta$}
	\For{$k\leftarrow 0$ \KwTo $K-1$}{
		Aggregator observes $\phi(x(k))$ and broadcasts it to all agents.\;
		Each agent $i$ generates $z_i(k)$ according to \eqref{projected_guassian}.\;
		Each agent applies the perturbed iterate $x_i(k)+r(k) z_i(k)$ to the system.\;
		Aggregator observes $\phi(x(k)+r(k) z(k))$ and broadcasts it to all agents.\;
		Each agent $i$ calculates
		$$
  g_i(k)=\nabla_i f_i(x(k))+\frac{\phi(x(k)+r(k) z(k))-\phi(x(k))}{r(k)}\cdot z_i(k).
		$$\;
  \vspace{-12pt}
		Each agent $i$ updates
		$$
		x_i(k+1) =\mathcal{P}_{(1-\delta)\mathcal{X}_i}[x_i(k) - \eta g_i(k)],
		$$
		and applies $x_i(k+1)$ to the system.
	}
\end{algorithm}

The design of the 2-ZFGD algorithm employs zeroth-order feedback techniques to coordinate distributed agents. Based on the gradient estimation method~\eqref{grad_esti2}, we let
\begin{align*}
g_i(k)=\nabla_i f_i(x(k))+\frac{\phi(x(k)+r(k)z(k))-\phi(x(k))}{r(k)}\cdot z_i(k),
\end{align*}
where each $z_i(k)\in\mathbb{R}^{d_i}$ is a random vector and we let $z(k)\in\mathbb{R}^d$ denote the concatenation of $z_1(k),\ldots,z_N(k)$. The probability distribution of each $z_i(k)$ needs to be designed carefully: On the one hand, it is natural to sample $z_i(k)$ from the Gaussian distribution $\mathcal{N}(0,I_{d_i})$ independently, so that the resulting joint random perturbation $z(k)$ follows the Gaussian distribution $\mathcal{N}(0,I_d)$, and we have $\mathbb{E}[g_i(k)|x_i(k)]\approx \nabla_i f_i(x(k))+\nabla_i\phi(x(k))$ by Lemma~\ref{lemma:zeroth-order_grad_est_bias}. On the other hand, the distribution $\mathcal{N}(0,I_{d_i})$ is not compactly supported, meaning that the perturbed iterate $x(k)+r(k) z(k)$ is not guaranteed to lie in the feasible set $\mathcal{X}$.
To address this issue, we adapt the technique proposed in~\cite{tang2023zeroth} and slightly modify the sampling of the perturbation $z(k)$ as follows. For each agent $i$, define
\[
S_i(x_i,r)\coloneqq
\left\{
\left.\frac{s-x_i}{r}\,\right| s\in\mathcal{X}_i
\right\},
\qquad x_i\in \operatorname{int} \mathcal{X}_i,\ \ r>0,
\]
%$$
%S_i(x_i,r):=\frac{1}{r}\bar{\mathcal{X}}_i(x_i)\cap \left(-\frac{1}{r}\bar{\mathcal{X}}_i(x_i)\right), \ \ x_i\in \operatorname{int} \mathcal{X}_i,\ \ r>0,
%$$
%where $\bar{\mathcal{X}}_i(x_i)\coloneqq \{s-x_i\mid s\in\mathcal{X}_i\}$, and $\operatorname{int}\mathcal{X}_i$ is the interior of $\mathcal{X}_i$.
It is obvious that
%we have $-z_i\in S_i(x_i,r)$ if and only if $z_i\in S_i(x_i,r)$, and
$x_i + rz_i\in\mathcal{X}_i$ for any $z_i\in S_i(x_i,r)$. We then let $z_i(k)$ be sampled by
\begin{align}
\label{projected_guassian}
z_i(k)=\mathcal{P}_{S_i(x_i(k),r(k))}[\bar{z}_i(k)], \ \ \bar{z}_i(k)\sim\mathcal{N}(0,I_{d_i}),
\end{align}
i.e., we first generate a random vector from the distribution $\mathcal{N}(0, I_{d_i})$, and then project it onto the set $S_i(x_i(k),r(k))$. We denote the distribution of $z_i(k)$ and $z(k)$ by $\mathcal{Z}_i(x_i(k),r(k))$ and $\mathcal{Z}(x(k),r(k))$ respectively.
%Then the modified gradient estimator of objective function for agent $i$ is
%$$
%g_i(k)=\nabla_if_i(x_i(k))+\frac{\phi(x(k)+r\cdot z(k))-\phi(x)}{n}\cdot z_i(k), \ \ z_i(k)\sim\mathcal{Z}_i(x_i,r).
%$$

In order for the distribution $\mathcal{Z}(x(k),r(k))$ to be close to the original Gaussian distribution $\mathcal{N}(0,I_d)$, we require that $S_i(x_i(k),r(k))$ should contain a ball with a sufficiently large radius, so that projections in~\eqref{projected_guassian} happen rarely; in this case, the statistical properties of the partial gradient estimators $g_i(k)$ will not change much, and we still have $\mathbb{E}[g_i(k)|x_i(k)]\approx \nabla f_i(x_i(k))+\nabla_i\phi(x(k))$. In order for $S_i(x_i(k),r(k))$ to satisfy this requirement, we employ the following modified version of the projected gradient descent step
$$
x_i(k+1)=\mathcal{P}_{(1-\delta)\mathcal{X}_i}[x_i(k)-\eta\, g_i(k)],
$$
where we project $x_i(k)-\eta\, g_i(k)$ onto a shrunk set $(1-\delta)\mathcal{X}_i$ for some $\delta\in(0,1)$. As shown in~\cite[Observation 3.2]{flaxman2004online}, when the shrinkage factor $\delta$ is chosen properly, the distance between $x_i(k)$ and the boundary of $\mathcal{X}_i$ will be sufficiently large, and consequently, the set $S_i(x_i(k),r(k))$ will contain a ball with a sufficiently large radius.

After having explained the critical details in the design of the 2-ZFGD algorithm, we present theoretical results on its convergence behavior. We define the following auxiliary quantities
\[
\underline{R}\coloneqq \sup\{R>0:R\mathbb{B}_{d}\subseteq\mathcal{X}\},
\qquad
\overline{R} \coloneqq
\inf\{R>0: \mathcal{X}\subseteq R\mathbb{B}_{d}\}.
\]
Since we assume that $\mathcal{X}$ is compact and $0\in \operatorname{int}\mathcal{X}$ without loss of generality, we have $0<\underline{R}\leq\overline{R}<+\infty$.

We first provide the performance guarantees of 2-ZFGD for the convex case, summarized in the following theorem.

\begin{theorem}\label{theorem:2ZFGD_convergence}
Suppose $F$ is convex, $\Lambda_F$-Lipschitz and $L_F$-smooth, and $\phi$ is $\Lambda_\phi$-Lipschitz and $L_\phi$-smooth over $\mathcal{X}$. Without loss of generality, we let $\Lambda_\phi\leq\Lambda_F$ and $L_\phi\leq L_F$. Let $x^\ast$ be a minimizer of $F(x)$ over $x\in\mathcal{X}$. Then, for any sufficiently small $\epsilon>0$, if we choose the algorithmic parameters to satisfy
\[
\delta\leq \frac{\epsilon}{5\Lambda_F\left(\overline{R}+\Lambda_\phi/(2L_F d)\right)},
\qquad
\eta\leq \frac{1}{2(d+5)}\min\left\{
\frac{\epsilon}{5\Lambda_\phi^2},\frac{1}{L_F}
\right\},
\qquad K\geq \frac{10\overline{R}^2}{\eta\epsilon},
\]
\[
\sum\nolimits_{k=0}^\infty r(k) \leq 2\sqrt{d}\overline{R},
\qquad
\sum\nolimits_{k=0}^\infty r(k)^2\leq 
\frac{4\overline{R}^2}{d+5},
\]
and
\[
r(k) \leq\frac{\delta\underline{R}}{2\sqrt{\frac{d}{2}+4\ln\frac{8\overline{R}}{\underline{R}}+2\ln\frac{d}{\delta^3}}},
\qquad\forall k=0,\ldots,K-1,
\]
it can be guaranteed that the sequence $\{x(k)\}_{k=0}^K$ generated by 2-ZFGD satisfies
\[
\min_{1\leq k\leq K}\mathbb{E}\!\left[F(x(k))-F(x^\ast)\right]\leq\epsilon.
\]
\end{theorem}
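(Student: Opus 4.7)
The plan is to follow the standard one-step descent analysis for projected stochastic gradient descent with a biased gradient oracle, carefully adapted to account for (i) the shrinkage of the feasible set to $(1-\delta)\mathcal{X}$ and (ii) the truncation of the Gaussian perturbation via projection onto $S_i(x_i(k), r(k))$. As comparator I will use $\tilde{x}^\ast := (1-\delta)x^\ast \in (1-\delta)\mathcal{X}$, so that nonexpansiveness of the projection gives $\|x(k+1) - \tilde{x}^\ast\|^2 \leq \|x(k) - \eta g(k) - \tilde{x}^\ast\|^2$. Expanding, taking conditional expectation given $x(k)$, and writing $\hat{g}(k) := \mathbb{E}[g(k) \mid x(k)]$ yields the usual recursion in $\langle \hat{g}(k), x(k)-\tilde{x}^\ast\rangle$ and $\mathbb{E}[\|g(k)\|^2 \mid x(k)]$. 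Convexity of $F$ converts the inner product into $F(x(k)) - F(\tilde{x}^\ast)$ plus a bias cross-term, and $\Lambda_F$-Lipschitzness absorbs $F(\tilde{x}^\ast) - F(x^\ast) \leq \Lambda_F \delta \overline{R}$, which is the source of the $\delta \lesssim \epsilon/(\Lambda_F \overline{R})$ constraint in the statement.

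The main technical step is decomposing and bounding the bias $b(k) := \hat{g}(k) - \nabla F(x(k))$. Since $\nabla_i f_i$ enters $g_i(k)$ unbiasedly, only the zeroth-order piece contributes. Writing the expectation against the truncated distribution $\mathcal{Z}(x(k), r(k))$ as the expectation against the raw Gaussian plus a correction supported on the truncation event, I split
\begin{equation*}
b(k) = \bigl(\mathbb{E}_{\bar{z}\sim\mathcal{N}(0,I_d)}[G_\phi(x(k); r(k),\bar{z})] - \nabla\phi(x(k))\bigr) + \mathbb{E}\bigl[(G_\phi(x(k);r(k),z(k)) - G_\phi(x(k);r(k),\bar{z}(k)))\mathbf{1}_{\{\bar{z}(k)\notin S\}}\bigr].
\end{equation*}
The first (smoothing) term is controlled by Lemma~\ref{lemma:zeroth-order_grad_est_bias} and scales as $\sqrt{d}\,L_\phi\, r(k)$. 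For the truncation term, the Flaxman-style observation that $x(k) \in (1-\delta)\mathcal{X}$ implies $\delta\underline{R}\,\mathbb{B}_d \subseteq \mathcal{X} - x(k)$, so $S(x(k),r(k)) \supseteq (\delta\underline{R}/r(k))\mathbb{B}_d$; the truncation event is therefore contained in $\{\|\bar{z}(k)\| > \delta\underline{R}/r(k)\}$, and a chi-squared tail bound makes this probability exponentially small exactly when $r(k)$ satisfies the $\sqrt{d/2 + 4\ln(8\overline{R}/\underline{R}) + 2\ln(d/\delta^3)}$ condition. A Cauchy--Schwarz step together with $\|G_\phi(x;r,z)\| \leq \Lambda_\phi \|z\|$ converts this tail probability into an absolute bias bound, and a parallel argument gives $\mathbb{E}[\|g(k)\|^2 \mid x(k)] \lesssim \Lambda_F^2 + (d+5)\Lambda_\phi^2$, which together with $L_F$-smoothness yields the $\eta \leq 1/[2(d+5) L_F]$ stability condition.

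With the per-step inequality
\begin{equation*}
\mathbb{E}[F(x(k)) - F(\tilde{x}^\ast)] \leq \tfrac{1}{2\eta}\bigl(\mathbb{E}\|x(k)-\tilde{x}^\ast\|^2 - \mathbb{E}\|x(k+1)-\tilde{x}^\ast\|^2\bigr) + \tfrac{\eta}{2}\,\mathbb{E}\|g(k)\|^2 + \|b(k)\|\cdot 2\overline{R}
\end{equation*}
in hand, I telescope over $k=0,\ldots,K-1$, use $\|x(0) - \tilde{x}^\ast\|^2 \leq 4\overline{R}^2$, and invoke the summability conditions $\sum r(k) \leq 2\sqrt{d}\overline{R}$ and $\sum r(k)^2 \leq 4\overline{R}^2/(d+5)$ to absorb the smoothing and truncation contributions into an $O(\epsilon)$ budget. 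Dividing by $K$ and using that the minimum over $k$ is bounded by the average then yields $\min_{1\leq k\leq K}\mathbb{E}[F(x(k))-F(x^\ast)] \leq \epsilon$ once $K \geq 10\overline{R}^2/(\eta\epsilon)$. The specific constants in the theorem are precisely those that partition the total error into five pieces---the comparator gap $\Lambda_F \delta \overline{R}$, the initial-distance term $\overline{R}^2/(\eta K)$, the variance term $\eta(d+5)\Lambda_\phi^2$, and the two bias contributions driven by $\sum r(k)$ and $\sum r(k)^2$---each bounded by $\epsilon/5$.

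The main obstacle I anticipate is the clean bookkeeping of the truncation event inside \emph{both} the bias and the second-moment estimates simultaneously: one must track the interaction between $\delta$, $\underline{R}$, $\overline{R}$, $r(k)$, and $d$ through a chi-squared concentration inequality, and verify that the constraint $\delta \lesssim \epsilon$ remains compatible with the $r(k)$ bound (which itself contains a $\ln(1/\delta)$ term inside a square root) without breaking the summability conditions on $r(k)$. Once the decomposition is fixed and the tail bound calibrated, the remainder reduces to a mechanical---if tedious---tracking of constants to obtain the precise parameter prescriptions stated in the theorem.
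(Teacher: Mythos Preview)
Your plan is essentially the paper's: shrink to $(1-\delta)\mathcal{X}$, use $(1-\delta)x^\ast$ as comparator, split the zeroth-order bias into a smoothing part ($\sqrt{d}L_\phi r(k)$) and a truncation part (controlled by a chi-squared tail via $S(x(k),r(k))\supseteq(\delta\underline{R}/r(k))\mathbb{B}_d$), telescope, and partition the error into five pieces of size $\epsilon/5$. That is exactly what the paper does.

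There is one genuine technical difference you should be aware of. Your per-step inequality is the ``non-smooth'' PSGD expansion, which puts $\tfrac{\eta}{2}\mathbb{E}\|g(k)\|^2$ on the right. The paper instead uses an $L_F$-smoothness-based descent lemma (valid for $\eta\le 1/(2L_F)$) that yields
\[
\tfrac{1}{2\eta}\bigl(\|\tilde{x}-x(k{+}1)\|^2-\|\tilde{x}-x(k)\|^2\bigr)
\le \langle g(k),\tilde{x}-x(k)\rangle+\eta\|g(k)-\nabla F(x(k))\|^2+F(x(k))-F(x(k{+}1)).
\]
The point is that $g(k)-\nabla F(x(k))=G_\phi(x(k);r(k),z(k))-\nabla\phi(x(k))$ depends only on $\phi$, so its second moment is $2(d{+}2)\Lambda_\phi^2$ plus $r(k)$-terms---this is exactly the ``variance term $\eta(d{+}5)\Lambda_\phi^2$'' in your budget. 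Your $\tfrac{\eta}{2}\|g(k)\|^2$, by contrast, carries an additional $\|\nabla f(x(k))\|^2\lesssim(\Lambda_F+\Lambda_\phi)^2$ contribution that is \emph{not} made small by the stated constraints on $\eta$ (neither $\eta\le\epsilon/(10(d{+}5)\Lambda_\phi^2)$ nor $\eta\le 1/(2(d{+}5)L_F)$ forces $\eta\Lambda_F^2\lesssim\epsilon$ in general). So your remark that ``$L_F$-smoothness yields the $\eta\le 1/[2(d{+}5)L_F]$ stability condition'' should be sharpened: smoothness is used to obtain the variance-only term $\|g(k)-\nabla F(x(k))\|^2$, and \emph{that} is what makes the constants close. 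A minor related point: $\|G_\phi(x;r,z)\|\le\Lambda_\phi\|z\|^2$, not $\Lambda_\phi\|z\|$, which is why the paper's truncation bounds carry factors of $(\overline{R}/r(k))^2$ and $(\overline{R}/r(k))^4$.
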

The proof of Theorem~\ref{theorem:2ZFGD_convergence} is postponed to Appendix~\ref{proof_of_theorem1}. As a corollary, we have the following complexity bound of 2-ZFGD for the convex case.
\begin{corollary}\label{corollary:2ZFGD_complexity}
Suppose the functions $F$ and $\phi$ satisfy the conditions in Theorem~\ref{theorem:2ZFGD_convergence}. Let $\epsilon>0$ be arbitrary. Then the number of iterations needed to achieve
\[
\min_{1\leq k\leq K} \mathbb{E}\!\left[F(x(k))-F(x^\ast)\right]\leq\epsilon
\]
for 2-ZFGD can be upper bounded by
$
\mathcal{O}\!\left(d/\epsilon^2\right)$.
\end{corollary}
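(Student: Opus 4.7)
The plan is to obtain the stated complexity bound as a direct consequence of Theorem~\ref{theorem:2ZFGD_convergence} by making explicit choices for the algorithmic parameters that saturate the upper bounds permitted there, and then counting the resulting number of iterations. Since each iteration of 2-ZFGD uses a constant number of zeroth-order queries of $\phi$, the iteration count and the oracle complexity agree up to a constant factor, so it suffices to bound $K$.

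First, I would choose $\eta$ as large as Theorem~\ref{theorem:2ZFGD_convergence} allows, namely
\[
\eta = \frac{1}{2(d+5)}\min\!\left\{\frac{\epsilon}{5\Lambda_\phi^2},\,\frac{1}{L_F}\right\}.
\]
For any $\epsilon$ small enough that $\epsilon/(5\Lambda_\phi^2)\leq 1/L_F$ (which is exactly the regime we care about in a complexity statement), the minimum is attained by the first term, giving $\eta = \Theta(\epsilon/d)$ with a constant depending only on $\Lambda_\phi$. Plugging this into the lower bound $K\geq 10\overline{R}^2/(\eta\epsilon)$ from Theorem~\ref{theorem:2ZFGD_convergence} yields
\[
K \;=\; \Theta\!\left(\frac{\overline{R}^2\,\Lambda_\phi^2\,(d+5)}{\epsilon^2}\right) \;=\; \mathcal{O}\!\left(\frac{d}{\epsilon^2}\right),
\]
which is exactly the claimed bound. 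For the complementary regime where $\epsilon$ is large (i.e.\ above a fixed threshold determined by $\Lambda_\phi^2/L_F$), the bound $\mathcal{O}(d/\epsilon^2)$ is trivially satisfied by any bounded number of iterations, so no separate argument is needed beyond absorbing constants.

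It then remains to check that the remaining parameters $\delta$ and $\{r(k)\}$ can be selected consistently with the hypotheses of Theorem~\ref{theorem:2ZFGD_convergence}. I would simply set $\delta$ equal to its prescribed upper bound $\epsilon/\bigl(5\Lambda_F(\overline{R}+\Lambda_\phi/(2L_F d))\bigr)$, which is a positive quantity depending on the problem data and $\epsilon$. For the smoothing radii, a concrete constant choice $r(k)\equiv r$ for $k=0,\ldots,K-1$ works: taking $r$ small enough to meet simultaneously the summability conditions $Kr\leq 2\sqrt{d}\overline{R}$ and $Kr^2\leq 4\overline{R}^2/(d+5)$ and the pointwise ceiling involving $\delta$ and $\ln(1/\delta)$ is feasible because all three upper bounds are strictly positive. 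None of these choices affect the scaling of $K$; they only contribute to the constant hidden in the $\mathcal{O}(\cdot)$.

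I expect the only mildly delicate point to be a bookkeeping one: verifying that the pointwise bound on $r(k)$ can be met uniformly across all $K=\mathcal{O}(d/\epsilon^2)$ iterations once $\delta$ has been fixed in terms of $\epsilon$. Because the upper bound on $r(k)$ depends on $\delta$ only through $\delta\underline{R}$ and a $\sqrt{\log(1/\delta)}$ factor, the feasible ceiling shrinks only polynomially/logarithmically as $\epsilon\to 0$, so picking $r$ as the minimum of all the required upper bounds above keeps the parameter choice consistent. With that check in place, the corollary follows immediately from Theorem~\ref{theorem:2ZFGD_convergence}.
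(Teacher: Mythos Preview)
Your proposal is correct and matches the paper's (implicit) approach: the paper states this corollary without a standalone proof, treating it as an immediate consequence of Theorem~\ref{theorem:2ZFGD_convergence} obtained by saturating the step-size bound and reading off $K=\mathcal{O}(d/\epsilon^2)$ from $K\geq 10\overline{R}^2/(\eta\epsilon)$. One cosmetic point: as written, Theorem~\ref{theorem:2ZFGD_convergence} states the summability conditions as infinite sums $\sum_{k=0}^\infty r(k)$ and $\sum_{k=0}^\infty r(k)^2$, so a literally constant choice $r(k)\equiv r>0$ would violate them; but this is harmless, since the proof of the theorem only uses the partial sums up to $K-1$, and you can equally well take $r(k)=0$ for $k\geq K$ or a summable decreasing sequence without changing the iteration count.
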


To analyze the performance of 2-ZFGD for the nonconvex case, we introduce the following stationarity measure
\[
\mathfrak{g}(x;M)\coloneqq
M\!\left(x - \mathcal{P}_{\mathcal{X}}\!\left[
x -  \frac{1}{M}\nabla F(x)
\right]\right)
\]
for any $x\in\mathcal{X}$ and $M>0$. The following lemma suggests that we may employ $\|\mathfrak{g}(x;M)\|^2$ to quantify how close $x$ is to being a stationary point of $F$ over $\mathcal{X}$.
\begin{lemma}
\label{lemma:gradient_mapping_properties}
Suppose $F:\mathcal{X}\rightarrow\mathbb{R}$ is continuously differentiable, and let $M>0$ be arbitrary. We have
\begin{enumerate}
\item $x\mapsto \|\mathfrak{g}(x;M)\|^2$ is a continuous function over $x\in\mathcal{X}$.
\item Given $x^\ast\in\mathcal{X}$, we have $\|\mathfrak{g}(x^\ast;M)\|^2=0$ if and only if
\[
\left.\frac{d}{dt} F(x^\ast+t(x-x^\ast))\right|_{t=0}\geq 0
\]
for all $x\in\mathcal{X}$.
\end{enumerate}
\end{lemma}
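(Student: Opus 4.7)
The plan is to handle the two items separately, using standard facts about projections onto closed convex sets.

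For part 1 (continuity), I would argue that $\mathfrak{g}(\cdot;M)$ is a composition of continuous maps: $\nabla F$ is continuous by assumption, the map $x\mapsto x-\tfrac{1}{M}\nabla F(x)$ is therefore continuous, and the Euclidean projection $\mathcal{P}_{\mathcal{X}}$ onto the nonempty closed convex set $\mathcal{X}$ is $1$-Lipschitz (hence continuous). Subtracting and scaling by $M$ preserves continuity, and then $\|\cdot\|^2$ is continuous on $\mathbb{R}^d$. Composing these yields the continuity of $x\mapsto\|\mathfrak{g}(x;M)\|^2$. This part is routine.

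For part 2, the key observation is that $\|\mathfrak{g}(x^\ast;M)\|^2=0$ is equivalent to the fixed-point condition
\[
x^\ast = \mathcal{P}_{\mathcal{X}}\!\left[x^\ast - \tfrac{1}{M}\nabla F(x^\ast)\right].
\]
I would then invoke the standard variational characterization of the projection onto a closed convex set: $x^\ast = \mathcal{P}_{\mathcal{X}}[y]$ if and only if $x^\ast\in\mathcal{X}$ and $\langle y - x^\ast,\, x - x^\ast\rangle \leq 0$ for all $x\in\mathcal{X}$. Plugging in $y = x^\ast - \tfrac{1}{M}\nabla F(x^\ast)$ and multiplying through by $-M<0$ reverses the inequality and gives $\langle \nabla F(x^\ast),\, x-x^\ast\rangle\geq 0$ for every $x\in\mathcal{X}$. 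Finally, since $F$ is continuously differentiable, a direct chain-rule calculation yields
\[
\left.\tfrac{d}{dt}F(x^\ast+t(x-x^\ast))\right|_{t=0} = \langle \nabla F(x^\ast),\, x-x^\ast\rangle,
\]
which closes the equivalence in both directions.

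Neither part presents a real obstacle; the only subtlety worth being careful about is making sure the direction of the inequality is preserved correctly when passing between the projection characterization and the directional-derivative condition, and noting explicitly that $x^\ast\in\mathcal{X}$ is built into the hypothesis so the fixed-point equation is well posed. Overall I would expect the proof to be short, essentially a two-line application of continuity on one side and the variational inequality for projections on the other.
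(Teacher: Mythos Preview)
Your proposal is correct and follows exactly the standard argument one would expect here: continuity via composition with the nonexpansive projection for part~1, and the variational characterization of the projection combined with the chain rule for part~2. The paper itself does not supply a proof of this lemma, stating only that the results are standard in optimization theory and omitting the details, so your write-up in fact fills in what the paper leaves implicit.
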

The results in Lemma~\ref{lemma:gradient_mapping_properties} are standard in optimization theory and we omit the proofs here. We also mention that $\|\mathfrak{g}(x;M)\|^2$ has been adopted for measuring distance from stationarity for constrained nonconvex smooth problems in existing literature~\cite{nesterov2013gradient,scutari2019distributed,he2022zeroth}.

The following theorem provides performance guarantees of 2-ZFGD for the nonconvex case, whose proof is postponed to Appendix~\ref{proof_of_theorem2}.

\begin{theorem}
\label{theorem:2ZFGD_convergence_nonconvex}
Suppose $F$ is $L_F$-smooth, and $\phi$ is $\Lambda_\phi$-Lipschitz and $L_\phi$-smooth over $\mathcal{X}$. Without loss of generality, we let $L_\phi\leq L_F$. Let $F^\ast\coloneqq \min_{x\in\mathcal{X}} F(x)$. Then, for any sufficiently small $\epsilon>0$, if we choose the algorithmic parameters to satisfy
\[
\delta\leq \frac{\sqrt{\epsilon}}{5L_F(\overline{R}+\Lambda_\phi/(2L_Fd))},
\quad
\eta \leq \frac{1}{L_F(d+5)}\min\left\{
\frac{\epsilon}{30\Lambda_\phi^2},1
\right\},
\quad
K\geq\frac{15(F(x(0))-F^\ast)}{\eta\epsilon}
\]
and
\[
\sum_{k=0}^\infty r(k)^2
\leq\frac{F(x(0))-F^\ast}{L_\phi(d+6)},
\qquad
r(k)\leq \frac{\delta\underline{R}}{
2\sqrt{\frac{d}{2}+4\ln\frac{8\overline{R}}{\underline{R}}+\ln\frac{d^3}{\delta^7}}
},
\]
it can be guaranteed that the sequence $\{x(k)\}_{k=0}^K$ generated by 2-ZFGD satisfies
\[
\min_{0\leq k\leq K-1}\mathbb{E}\!\left[\|\mathfrak{g}(x(k);L_F)\|^2\right]\leq\epsilon.
\]
\end{theorem}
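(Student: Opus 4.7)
The plan is a descent-lemma analysis for projected stochastic gradient descent, adapted to three features of this setting: (i) the shrunken feasible set $(1-\delta)\mathcal{X}$ in the projection, (ii) the biased zeroth-order stochastic gradient $g(k)$, which estimates only $\nabla\phi$ since $\nabla_i f_i$ is known exactly to agent $i$, and (iii) the gradient mapping $\mathfrak{g}(x(k);L_F)$ as stationarity measure. From $L_F$-smoothness of $F$ I would start with
\[
F(x(k+1)) \leq F(x(k)) + \langle \nabla F(x(k)), x(k+1) - x(k)\rangle + \frac{L_F}{2}\|x(k+1) - x(k)\|^2,
\]
and use the first-order optimality of $x(k+1) = \mathcal{P}_{(1-\delta)\mathcal{X}}[x(k) - \eta g(k)]$ to reorganize the right-hand side into a negative multiple of $\|x(k+1)-x(k)\|^2$ plus cross terms involving $g(k) - \nabla F(x(k))$. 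Comparing $x(k+1)-x(k)$ with the reference step $\mathcal{P}_{\mathcal{X}}[x(k) - \nabla F(x(k))/L_F] - x(k) = -\mathfrak{g}(x(k);L_F)/L_F$, via nonexpansiveness of projection and Hausdorff-distance bounds between $(1-\delta)\mathcal{X}$ and $\mathcal{X}$, allows the negative drift to be recast in terms of $\|\mathfrak{g}(x(k);L_F)\|^2$.

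The main technical effort goes into bounding the resulting error terms. I would decompose $g(k) = \nabla F(x(k)) + b(k) + \nu(k)$ with bias $b(k) = \mathbb{E}[g(k)\mid x(k)] - \nabla F(x(k))$ and zero-mean noise $\nu(k)$. Lemma~\ref{lemma:zeroth-order_grad_est_bias} controls the bias under a pure Gaussian perturbation by $\sqrt{d}\,L_\phi r(k)$; the extra error from replacing $\bar z(k)$ by its projection $z(k)$ onto $S(x(k), r(k))$ is bounded via the total variation between $\mathcal{Z}(x(k), r(k))$ and $\mathcal{N}(0,I_d)$, which by Gaussian tail estimates is exponentially small in the square of the inradius of $S_i(x_i(k), r(k))$. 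The shrinkage step keeps $x_i(k)$ at distance at least $\delta \underline{R}$ from $\partial \mathcal{X}_i$, so choosing $r(k)$ smaller than a $1/\sqrt{d/2 + 4\ln(8\overline{R}/\underline{R}) + \ln(d^3/\delta^7)}$ multiple of $\delta \underline{R}$ makes the sampling projection inactive with overwhelming probability, giving exactly the stated hypothesis on $r(k)$. Gaussian concentration applied to $\phi(x(k)+r(k)z(k)) - \phi(x(k))$ then bounds the second moment of $g(k)$ by a constant times $\|\nabla F(x(k))\|^2 + (d+5)(\Lambda_\phi^2 + L_\phi^2 r(k)^2)$, while the shrinkage mismatch $\mathcal{P}_{(1-\delta)\mathcal{X}} - \mathcal{P}_{\mathcal{X}}$ contributes an extra $(L_F \delta \overline{R})^2$-type term.

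Taking conditional expectations, absorbing $\|\nabla F(x(k))\|^2$ into the $L_F$-smoothness coefficient via the $\eta \leq 1/[L_F(d+5)]$ bound, and telescoping from $k=0$ to $K-1$ produces an inequality of the form
\[
\min_{0\leq k\leq K-1}\mathbb{E}\!\left[\|\mathfrak{g}(x(k);L_F)\|^2\right] \leq \frac{C_1(F(x(0))-F^\ast)}{\eta K} + C_2 L_F\eta(d+5)\Lambda_\phi^2 + C_3 L_\phi(d+6)\sum_{k=0}^{K-1} r(k)^2 + C_4(L_F\delta\overline{R})^2,
\]
and the stated parameter choices make each term on the right at most $\epsilon/4$. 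The hardest part is the simultaneous calibration of $\delta$, $r(k)$, and $\eta$: the projected-Gaussian bias requires the inradius of $S_i$ to grow like $\sqrt{d+\ln(1/\delta)}$, coupling $r(k)$ to $\delta$ through a logarithmic factor; because the stationarity measure is squared, the shrinkage error enters as $(L_F\delta\overline{R})^2$ and forces $\delta = O(\sqrt{\epsilon})$ rather than the $O(\epsilon)$ that suffices in the convex Theorem~\ref{theorem:2ZFGD_convergence}; threading these constraints through the descent coefficient on $\|\mathfrak{g}(x(k);L_F)\|^2$ is what produces the precise numerical constants appearing in the hypotheses.
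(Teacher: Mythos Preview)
Your plan follows the standard projected-SGD descent lemma on $F$ itself, but this route cannot produce the term $C_2 L_F\eta(d+5)\Lambda_\phi^2$ that you write in your telescoped bound. If you carry out the calculation you sketch---smoothness of $F$, optimality of the projection, splitting $g(k)=\nabla F(x(k))+b(k)+\nu(k)$, and comparing the stochastic step to the deterministic one via nonexpansiveness---you obtain, per iteration,
\[
\mathbb{E}\bigl[F(x(k+1))\mid\mathcal{F}_k\bigr]\;\leq\;F(x(k))-\tfrac{\eta}{4}\,\|\mathfrak{g}_\delta(x(k);1/\eta)\|^2+C\eta\,\mathbb{E}\bigl[\|g(k)-\nabla F(x(k))\|^2\mid\mathcal{F}_k\bigr]+\text{(bias, shrinkage)}.
\]
The variance term here is $O(\eta\sigma^2)$ with $\sigma^2\asymp (d+2)\Lambda_\phi^2$; after telescoping and dividing through by $\eta$, it becomes $O(\sigma^2)$ \emph{without} the factor $\eta$. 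Because $\sigma^2$ is a fixed constant independent of the step size (this is precisely the phenomenon the paper discusses around~\eqref{eq:variance_two_point_zero_r}: in the constrained case $\|\nabla\phi\|$ does not vanish near the optimum), no choice of $\eta$ makes this term $\leq\epsilon/4$. Your claim of ``absorbing $\|\nabla F(x(k))\|^2$ into the $L_F$-smoothness coefficient'' does not rescue this, since the relevant variance scales with $\|\nabla\phi(x(k))\|^2\leq\Lambda_\phi^2$, not with $\|\mathfrak{g}(x(k);L_F)\|^2$, and there is no inequality relating the former to the latter in the constrained setting.

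The paper avoids this obstruction by working not with $F$ but with the Moreau envelope $\hat F(x)=\min_{y\in(1-\delta)\mathcal{X}}\bigl(F(y)+L_F\|y-x\|^2\bigr)$ and its minimizer $\hat x(k)$, using the Davis--Drusvyatskiy inequality $\|x(k+1)-\hat x(k)\|^2\leq\|(1-2\eta L_F)(x(k)-\hat x(k))-\eta(g(k)-\nabla F(\hat x(k)))\|^2$. Because this is a squared-distance recursion, the stochastic gradient enters as $\eta^2\|g(k)\|^2$, so the descent of $\hat F$ picks up $O(\eta^2 L_F\sigma^2)$ per step; after telescoping and dividing by $\eta$ one obtains the $O(\eta L_F\sigma^2)$ term you claimed, and then $\eta\leq\epsilon/(30 L_F(d+5)\Lambda_\phi^2)$ finishes the job. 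The passage from $L_F^2\|x(k)-\hat x(k)\|^2$ back to $\|\mathfrak{g}(x(k);L_F)\|^2$ uses \cite[Theorem~3.5]{drusvyatskiy2018error} together with~\eqref{eq:g_delta_bound}. Your treatment of the bias, the projected-Gaussian tail, and the $\delta=O(\sqrt{\epsilon})$ shrinkage is all fine, but the Lyapunov function has to be changed from $F$ to $\hat F$ for the argument to close.
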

\begin{corollary}\label{corollary:2ZFGD_complexity_nonconvex}
Suppose the functions $F$ and $\phi$ satisfy the conditions in Theorem~\ref{theorem:2ZFGD_convergence_nonconvex}. Let $\epsilon>0$ be arbitrary. Then the number of iterations needed to achieve
\[
\min_{0\leq k\leq K-1} 
\mathbb{E}\!\left[
\|\mathfrak{g}(x(k);L_F)\|^2
\right]\leq\epsilon
\]
for 2-ZFGD can be upper bounded by
$
\mathcal{O}\!\left(d/\epsilon^2\right)$.
\end{corollary}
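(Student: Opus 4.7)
The plan is to read off the iteration bound directly from Theorem~\ref{theorem:2ZFGD_convergence_nonconvex} by tracking how the admissible values of $\delta$, $\eta$, and $K$ scale with $\epsilon$ and $d$. Since the corollary only asks for an $\mathcal{O}$-bound, I would choose each parameter at the upper end of its allowed range (up to constants) and compute the resulting $K$. All of the substantive convergence analysis already lives in the theorem; the corollary is essentially a bookkeeping exercise on top of it.

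First I would observe that, for sufficiently small $\epsilon$, the shrinkage factor satisfies $\delta=\Theta(\sqrt{\epsilon})$, since $\overline{R}+\Lambda_\phi/(2L_Fd)$ depends only on problem constants (the $1/d$ term in the additive constant is harmless). Next, the step-size bound
\[
\eta\leq\frac{1}{L_F(d+5)}\min\!\left\{\frac{\epsilon}{30\Lambda_\phi^2},\,1\right\}
\]
is dominated by the $\epsilon$-term when $\epsilon$ is small, so I would take $\eta=\Theta(\epsilon/d)$. Plugging these into $K\geq 15(F(x(0))-F^\ast)/(\eta\epsilon)$ immediately yields the claimed bound $K=\mathcal{O}(d/\epsilon^2)$.

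The one thing requiring actual checking is that a feasible smoothing-radii schedule exists under these choices. The per-iteration cap
\[
r(k)\leq\frac{\delta\underline{R}}{2\sqrt{d/2+4\ln(8\overline{R}/\underline{R})+\ln(d^3/\delta^7)}}
\]
scales as $\Theta\!\bigl(\sqrt{\epsilon}/\sqrt{d\log(d/\epsilon)}\bigr)$, which is mild; the square-summability requirement $\sum_{k=0}^\infty r(k)^2\leq(F(x(0))-F^\ast)/(L_\phi(d+6))$ is the binding one once spread across $K=\Theta(d/\epsilon^2)$ steps. Taking a constant schedule $r(k)\equiv r_0$ for $k<K$ with $r_0=\Theta(\epsilon/d)$ gives $Kr_0^2=\Theta(1/d)$, and since $\epsilon/d\ll\sqrt{\epsilon}/\sqrt{d\log(d/\epsilon)}$ for small $\epsilon$, both constraints are satisfied simultaneously.

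The only real obstacle is verifying that all of the listed parameter inequalities can be satisfied at once with $K$ of the claimed order, and this reduces to noting that the $\epsilon$-dependence of $\delta$ is $\sqrt{\epsilon}$ while that of $\eta$ is $\epsilon$. The $1/\epsilon^2$ rate in $K$ therefore dominates, and the smoothing-radii constraints, together with the dimension factors inside the logarithms, contribute only lower-order noise that is absorbed into the $\mathcal{O}$ notation, matching the bound reported in Table~\ref{table:contribution}.
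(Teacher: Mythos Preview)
Your proposal is correct and matches the paper's approach: the paper does not spell out a separate proof of Corollary~\ref{corollary:2ZFGD_complexity_nonconvex}, treating it as an immediate consequence of the parameter conditions in Theorem~\ref{theorem:2ZFGD_convergence_nonconvex}, and your argument simply makes that reading-off explicit, including the necessary check that a feasible smoothing-radius schedule exists.
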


Theorems~\ref{theorem:2ZFGD_convergence}--\ref{theorem:2ZFGD_convergence_nonconvex} and Corollaries~\ref{corollary:2ZFGD_complexity}--\ref{corollary:2ZFGD_complexity_nonconvex} establish the convergence guarantees and iteration complexity bounds for 2-ZFGD. Particularly, since each iteration of 2-ZFGD requires accessing two values of $\phi$, the bounds in Corollaries~\ref{corollary:2ZFGD_complexity}--\ref{corollary:2ZFGD_complexity_nonconvex} are also oracle complexity bounds in the sense that they bound the number of zeroth-order queries needed to achieve certain degree of optimality/stationarity for 2-ZFGD. These complexity bounds provide quantitative characterizations of the efficiency of 2-ZFGD.

We notice that the oracle complexity bound $\mathcal{O}(d/\epsilon^2)$ suggests that 2-ZFGD may still have room for improvement. Specifically, recalling that the oracle complexity of zeroth-order optimization for \emph{unconstrained} smooth problems can be upper bounded by $\mathcal{O}(d/\epsilon)$ \cite{nesterov2017random}, we can clearly see a gap between the bound $\mathcal{O}(d/\epsilon^2)$ of 2-ZFGD and the bound $\mathcal{O}(d/\epsilon)$ in terms of the dependence on $\epsilon$. This gap does not occur in first-order methods, as the oracle complexities of the first-order deterministic (projected) gradient descent are $O(1/\epsilon)$ \cite{nesterov2018lectures} regardless of whether the problem is constrained or not. Theoretical analysis reveals that this gap is not due to the distributed setting but results from the particular form of the variance of the zeroth-order gradient estimator: Given a smooth function $h:\mathbb{R}^p\rightarrow\mathbb{R}$, it can be derived that
\begin{equation}\label{eq:variance_two_point_zero_r}
\lim_{r\downarrow 0}\mathbb{E}\!\left[
\|G_h(x;r,z)-\nabla h(x)\|^2
\right]
=
(d+1)\|\nabla h(x)\|^2,
\end{equation}
(see Appendix~\ref{appendix:proof_variance_two_point_zero_r}). For unconstrained optimization $\min_{x\in\mathbb{R}^p} h(x)$, as we approach an optimal point $x^\ast$, the gradient will converge to zero. Consequently, as long as the smoothing radii are chosen appropriately, the variance of $G_h(x(k);r(k),z(k))$ will be negligible and the convergence of the zeroth-order iteration $x(k+1) = x(k)-\eta\,G_h(x(k);r,z(k))$ resembles deterministic gradient descent, leading to a complexity bound proportional to $\epsilon^{-1}$. However, for the constrained problem $\min_{x\in\mathcal{X}} h(x)$, the optimal point $x^\ast$ may lie on the boundary of $\mathcal{X}$ with a nonzero gradient, meaning that the variance of 2-point gradient estimation will be approximated by $(d+1)\|\nabla h(x^\ast)\|^2>0$ as we approach $x^\ast$. Consequently, the convergence of the iteration $x(k+1)=\mathcal{P}_{\mathcal{X}}[x(k)-\eta\,G_h(x(k);r,z(k))]$ resembles stochastic projected gradient descent, and the complexity bound is proportional to $\epsilon^{-2}$ which is strictly inferior. In Section~\ref{section:case_studies}, we will provide experimental results on certain numerical test cases for 2-ZFGD, showing that the convergence of 2-ZFGD can indeed be slow and may not meet the requirement on efficiency for practical applications.

The gap in the oracle complexity and the slow convergence of 2-ZFGD naturally raises the following interesting and important question: \emph{Can we further improve the convergence behavior of the distributed zeroth-order optimization algorithm and close the aforementioned gap in the oracle complexity}? We shall see in the next subsection that the answer to this question is positive, provided that we impose further assumptions on the feasible set $\mathcal{X}$.

\subsection{RZFCD}
To solve the above problem of slow convergence, we propose another distributed zeroth-order optimization method called \emph{Randomized Zeroth-order Feedback-based Coordinate Descent} (RZFCD). 
%The key difference is that only the partial gradient of one dimension needs to be estimated based on \eqref{grad_esti1}.
We impose the critical assumption in the design of RZFCD that each feasible set $\mathcal{X}_i$ is of the form $\{x_i\in\mathbb{R}^{d_i}:l_i\leq x_i\leq u_i\}$ for some $l_i,u_i\in\mathbb{R}^{d_i}$, i.e., each $\mathcal{X}_i$ is a multi-dimensional box. For simplicity of exposition, we let $d_i=1$ for each $i$, but the extension to the situations with $d_i\geq 1$ is straightforward.

\begin{algorithm}[tb]
\caption{Randomized Zeroth-order Feedback-based Coordinate Descent (RZFCD)}\label{algorithm:rzfcd}
\DontPrintSemicolon
\SetAlgoLined
\KwIn{Number of iterations $K$, step sizes $\eta_\alpha$ and smoothing radii $(r_\alpha(k))_{k\geq 0}$ for each $\alpha$}
\For{$k\leftarrow 0$ \KwTo $K-1$}{

The aggregator samples $\alpha(k)$ uniformly from $\{1,\ldots,d\}$.\;% and informs the associated agent $\alpha(k)$.\;

The aggregator sends $\phi(x(k))$ to agent $\alpha(k)$.\;

\uIf{$x_{\alpha(k)}(k) + r_{\alpha(k)}(k) > u_{\alpha(k)}$}{
Agent $\alpha(k)$ sets $z_{\alpha(k)}(k) = -1$.\;
}
\uElseIf{$x_{\alpha(k)}(k) - r_{\alpha(k)}(k) < l_{\alpha(k)}$}{
Agent $\alpha(k)$ sets $z_{\alpha(k)}(k) = 1$.\;
}
\Else{
Agent $\alpha(k)$ samples $z_{\alpha(k)}(k)$ uniformly from $\{+1,-1\}$.\;
}
$z_\beta(k) \leftarrow 0$ for $\beta\neq \alpha(k)$.\;

Agent $\alpha(k)$ applies $x_{\alpha(k)}(k) + r_{\alpha(k)}(k) z_{\alpha(k)}(k)$ to the system, while other agents keep their decision variables unchanged.\;

The aggregator observes $\phi(x(k)+r_{\alpha(k)}(k) z(k))$ and sends it to agent $\alpha(k)$.\;

Agent $\alpha(k)$ calculates
$$
g_{\alpha(k)}(k) = \frac{\partial f_{\alpha(k)}(x(k))}{\partial x_{\alpha(k)}} + \frac{\phi(x(k)+r_{\alpha(k)}(k) z(k))-\phi(x(k))}{r_{\alpha(k)}(k)}z_{\alpha(k)}(k).
$$\;\vspace{-16pt}
Each agent $\beta$ updates
$$
x_\beta(k+1) =\left\{
\begin{aligned}
& \mathcal{P}_{[l_\beta,u_\beta]}[x_\beta(k) - \eta_\beta g(k)], & & \beta=\alpha(k), \\
& x_\beta(k), & & \beta\neq \alpha(k),
\end{aligned}
\right.
$$
and applies $x_\beta(k+1)$ to the system.\;
}

\end{algorithm}

The details of the RZFCD algorithm are presented in Algorithm~\ref{algorithm:rzfcd}. The key difference between RZFCD and 2-ZFGD is that the design of RZFCD employs randomized coordinate descent as the framework. For each iteration $k$, the aggregator first randomly selects an agent $\alpha(k)$ uniformly from the set $\{1,2,\ldots,d\}$. We then fix all other entries of $x(k)$ and consider optimizing only over the $\alpha(k)$'th entry. The estimation of the partial gradient of $F$ with respect to $x_{\alpha(k)}$ will be carried out by the corresponding agent $\alpha(k)$ together with the aggregator, which is given by
\begin{align*}
g_{\alpha(k)}(k)=\frac{\partial f_{\alpha(k)}(x(k))}{\partial x_{\alpha(k)}}+\frac{\phi(x(k)+r_{\alpha(k)}(k)z(k))-\phi(x(k))}{r_{\alpha(k)}(k)}z_{\alpha(k)}(k),
\end{align*}
i.e., only the partial gradient of one dimension is estimated. The error of gradient estimation is bounded by controlling the smoothing radius $r_{\alpha(k)}(k)$. Here we let $r_\beta(k)$ denote the sequence of smoothing radii used for the $\beta$'th coordinate, and we allow different coordinates to employ different sequences of smoothing radii. The random perturbation $z(k)\in\mathbb{R}^d$ is a vector with only the $\alpha(k)$'th entry being nonzero, with $z_{\alpha(k)}(k)$ given by
$$
z_{\alpha(k)}(k)\left\{
\begin{aligned}
& =1, & & x_{\alpha(k)}(k) - r_{\alpha(k)}(k) < l_{\alpha(k)},\\
& =-1, & & x_{j(k)}(k) + r_{\alpha(k)}(k) > u_{\alpha(k)},\\
& \sim\mathcal{U}\{1,-1\}, & & \text{otherwise}.
\end{aligned}
\right.
$$
This sampling strategy for the random perturbation $z(k)$ is different from 2-ZFGD, and is based on the uniform distribution on the sphere $\sqrt{p}\,\mathbb{S}_{p-1}=\{1,-1\}$ rather than the Gaussian distribution (we have $p=1$ since all entries but $x_{\alpha(k)}$ are fixed); we have also made slight modifications to ensure that $x(k)+r_{\alpha(k)}(k)z(k)\in\mathcal{X}$, which is simpler than 2-ZFGD as each $\mathcal{X}_i$ is assumed to be a box.
We then apply the projected coordinate descent step to update $x_{\alpha(k)}$:
\[
x_{\alpha(k)}(k+1) = \mathcal{P}_{[l_{\alpha(k)},u_{\alpha(k)}]}
\!\left[
x_{\alpha(k)}(k) - \eta_{\alpha(k)} g_{\alpha(k)}(k)
\right].
\]
The quantities $\eta_\beta$ for each $\beta=1,\ldots,d$ are the step sizes, and we allow them to differ when different entries of the decision variable are updated. %Note that the step sizes remain constant for all iterations in Algorithm~\ref{algorithm:rzfcd}. %In this case, only the partial gradient of one dimension is estimated but the error of gradient estimation is bounded by controlling the smoothing radius $r(k)$. In the next section, we will perform theoretical analysis of RZFCD to show its convergence rate is considerably improved to $\mathcal{O}(d/K)$\\

After having explained the rationale of RZFCD, we present theoretical convergence guarantees for RZFCD. The proofs of these theoretical results will be given in Section~\ref{sec:analysis}.

For the convex case and $\{x(k)\}_{k\geq 0}$ derived from Algorithm~\ref{algorithm:rzfcd}, we have the following theorem.
\begin{theorem}\label{theorem:RZFCD_convex}
Suppose that the function $F$ is convex and
$(L_{F,1},\ldots,L_{F,d})$-coordinatewise smooth, and $x^\ast$ is a minimizer of $F(x)$ over $x\in\mathcal{X}$. Further, suppose $\phi$ is $(L_{\phi,1},\ldots,L_{\phi,d})$-coordinatewise smooth. Let the step sizes satisfy $\eta_\beta L_{F,\beta}\leq 1$ for all $\beta=1,\ldots,d$ and the smoothing radii satisfy
\[
\sum_{\beta=1}^d\sum_{k=1}^\infty r_\beta(k)
<+\infty.
\]
Then, for the sequence $\{x(k)\}_{k\geq 0}$ generated by RZFCD, we have
\[
\min_{0\leq k\leq K}\mathbb{E}\!\left[
F(x(k)) - F(x^\ast)
\right]
\leq \mathcal{O}\!\left(\frac{d}{K}\right),
\]
\end{theorem}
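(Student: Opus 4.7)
The plan is to follow a Nesterov-style analysis of randomized projected coordinate descent for convex, coordinatewise-smooth objectives, with careful bookkeeping of the bias introduced by the two-point zeroth-order estimator of $\partial_{\alpha(k)}\phi$. A crucial structural observation is that the perturbation $z_{\alpha(k)}(k)$ takes values only in $\{+1,-1\}$, so in every branch of the algorithm $g_{\alpha(k)}(k)$ equals either a forward or a backward one-dimensional finite difference in the $\alpha(k)$-th coordinate, plus the exact partial derivative of the local cost. Coordinatewise smoothness of $\phi$ then yields the \emph{almost-sure} bound
\[
|\epsilon_k|:=|g_{\alpha(k)}(k)-\partial_{\alpha(k)}F(x(k))|\le \tfrac{1}{2}L_{\phi,\alpha(k)}\,r_{\alpha(k)}(k),
\]
which is much stronger than the in-expectation estimate of Lemma~\ref{lemma:zeroth-order_grad_est_bias} in the Gaussian setting and is ultimately what allows the faster $\mathcal{O}(d/K)$ rate.

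With this bound in hand, I would derive a one-step descent inequality. Writing $\Delta_k:=x_{\alpha(k)}(k+1)-x_{\alpha(k)}(k)$, the coordinatewise smoothness of $F$ gives $F(x(k+1))\le F(x(k))+\partial_{\alpha(k)}F(x(k))\,\Delta_k+\tfrac{L_{F,\alpha(k)}}{2}\Delta_k^2$, while the optimality condition of the scalar projection onto $[l_{\alpha(k)},u_{\alpha(k)}]$ supplies $g_{\alpha(k)}(k)\Delta_k\le -\Delta_k^2/\eta_{\alpha(k)}$. Combining these with $\eta_\beta L_{F,\beta}\le 1$ and applying AM--GM to the cross term $|\epsilon_k||\Delta_k|$ produces a clean inequality of the form
\[
F(x(k+1))\le F(x(k))-\tfrac{1}{4\eta_{\alpha(k)}}\Delta_k^2+\eta_{\alpha(k)}\epsilon_k^2.
\]

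To turn this per-step descent into a rate I would use Nesterov's randomized-coordinate argument: introduce the coordinatewise gradient mapping $\tilde g_\beta(x):=\eta_\beta^{-1}(x_\beta-\mathcal{P}_{[l_\beta,u_\beta]}[x_\beta-\eta_\beta\partial_\beta F(x)])$, then exploit convexity and the separable box structure to obtain a proximal-type inequality of the form $F(x)-F(x^\ast)\le\|\tilde g(x)\|\,\|x-x^\ast\|$. Taking the conditional expectation over the uniformly chosen $\alpha(k)$ contributes the factor $1/d$, and, combined with non-expansiveness of the coordinate projection (which keeps $\|x(k)-x^\ast\|$ bounded by its initial value up to a summable bias correction), yields a Nesterov-type recursion in $a_k:=\mathbb{E}[F(x(k))-F(x^\ast)]$ whose standard inductive resolution delivers the claimed $\mathcal{O}(d/K)$ rate. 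The cumulative bias $\sum_{k}\eta_{\alpha(k)}\epsilon_k^2 \lesssim \sum_{\beta,k} r_\beta(k)^2$ is absorbed using the summability hypothesis; note that $\sum_{\beta,k}r_\beta(k)<\infty$ implies $\sum_{\beta,k}r_\beta(k)^2<\infty$ since the $r_\beta(k)$ are necessarily bounded.

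The main obstacle I anticipate is handling the mismatch between the algorithmic step $\Delta_k$, which uses the biased $g_{\alpha(k)}(k)$ and the associated projection, and the ideal step $-\eta_{\alpha(k)}\tilde g_{\alpha(k)}(x(k))$ driven by the exact partial gradient: non-expansiveness of the scalar projection bounds the discrepancy between the two by $\eta_{\alpha(k)}|\epsilon_k|$, but this estimate has to be inserted in the right place so that the descent contribution $\tfrac{1}{4\eta_{\alpha(k)}}\Delta_k^2$ is preserved and only summable bias residuals remain. A secondary subtlety is establishing the proximal convex inequality $F(x)-F(x^\ast)\le\|\tilde g(x)\|\,\|x-x^\ast\|$ on a box, which I would obtain by combining the KKT/first-order optimality condition of the scalar projection in each coordinate with the subgradient inequality for the convex $F$.
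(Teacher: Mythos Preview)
Your bias bound $|\epsilon_k|\le\tfrac12 L_{\phi,\alpha(k)}r_{\alpha(k)}(k)$ is exactly the paper's Lemma~\ref{lemma:RZFCD_zeroth-order_surrogate_func}, and the one-step descent inequality you derive is essentially~\eqref{eq:RZFCD_smoothness_intermediate_1}. The route you propose after that, however, has a genuine gap.

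The proximal inequality you plan to establish, $F(x)-F(x^\ast)\le\|\tilde g(x)\|\,\|x-x^\ast\|$, is \emph{false} for box-constrained problems. Take $F(x)=-x$ on $[0,1]$, $x^\ast=1$, $x=0.7$, $\eta=0.5$. Then $\tilde g(x)=\eta^{-1}\bigl(x-\mathcal P_{[0,1]}[x+\eta]\bigr)=2(0.7-1)=-0.6$, so $|\tilde g(x)|\,|x-x^\ast|=0.18<0.3=F(x)-F(x^\ast)$. What the projection KKT condition together with convexity \emph{does} give is the post-step bound $F(x^+)-F(x^\ast)\le\langle\tilde g(x),x-x^\ast\rangle$, with $x^+$ the result of a \emph{full} projected gradient step; but this does not feed a recursion $a_{k+1}\le a_k-c\,a_k^2/(dR^2)$ for randomized coordinate descent, since $x(k+1)$ updates only one coordinate and is not $x^+$. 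This is exactly why Nesterov's own analysis for the constrained/composite case (2012, \S4) abandons the unconstrained recursion and switches to a potential-function argument.

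The paper bypasses this obstruction by telescoping the weighted distance $\sum_\beta\eta_\beta^{-1}|x^\ast_\beta-x_\beta(k)|^2$. From~\eqref{eq:proj_intermediate_2} with $y=x^\ast_{\alpha(k)}$ combined with~\eqref{eq:RZFCD_smoothness_intermediate_1} and $\eta_\beta L_{F,\beta}\le 1$ one obtains
\[
F(x(k+1))-F(x(k))\le g_{\alpha(k)}(k)\bigl(x^\ast_{\alpha(k)}-x_{\alpha(k)}(k)\bigr)+\tfrac{1}{2\eta_{\alpha(k)}}\bigl(|x^\ast_{\alpha(k)}-x_{\alpha(k)}(k)|^2-|x^\ast_{\alpha(k)}-x_{\alpha(k)}(k+1)|^2\bigr)+\text{bias};
\]
replacing $g_{\alpha(k)}(k)$ by $\partial_{\alpha(k)}F(x(k))$ costs another $\tfrac12 L_{\phi,\alpha(k)}r_{\alpha(k)}(k)(u_{\alpha(k)}-l_{\alpha(k)})$, and taking conditional expectation over $\alpha(k)$ turns the linear term into $\tfrac1d\langle\nabla F(x(k)),x^\ast-x(k)\rangle\le\tfrac1d(F(x^\ast)-F(x(k)))$ by convexity. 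Summing over $k$ then yields the $\mathcal O(d/K)$ bound directly, with all bias contributions controlled by the summability of $\sum_{\beta,k}r_\beta(k)$. If you want to repair your argument, the clean fix is to switch to this distance-telescoping scheme rather than try to rescue the recursion.
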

\begin{corollary}
\label{corollary:RZFCD_complexity_convex}
Suppose the functions $F$ and $\phi$ satisfy the conditions in Theorem~\ref{theorem:RZFCD_convex}. Let $\epsilon>0$ be arbitrary. Then the number of iterations needed to achieve
\[
\min_{0\leq k\leq K}\mathbb{E}\!\left[
F(x(k)) - F(x^\ast)
\right]\leq\epsilon
\]
for RZFCD can be upper bounded by
$\mathcal{O}\!\left(d/\epsilon\right)$.
\end{corollary}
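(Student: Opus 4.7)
The corollary is obtained simply by inverting the convergence rate established in Theorem~\ref{theorem:RZFCD_convex}. The plan is to unpack the $\mathcal{O}(d/K)$ bound of the theorem into an explicit inequality of the form $\min_{0\leq k\leq K}\mathbb{E}[F(x(k))-F(x^\ast)]\leq C\,d/K$, where the constant $C$ depends on the coordinatewise smoothness parameters $L_{F,\beta}$ and $L_{\phi,\beta}$, the chosen step sizes $\eta_\beta$, the box dimensions $u_\beta-l_\beta$, the initial suboptimality $F(x(0))-F(x^\ast)$, and the cumulative smoothing radii $\sum_{\beta=1}^d\sum_{k=1}^\infty r_\beta(k)$. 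All of these quantities are finite and independent of both $K$ and $\epsilon$ under the hypotheses inherited from Theorem~\ref{theorem:RZFCD_convex}, so $C$ is a legitimate constant.

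Given such a $C$, the displayed inequality of the corollary is enforced by requiring $C d/K\leq\epsilon$, i.e.,
\[
K \;\geq\; \frac{Cd}{\epsilon}.
\]
Absorbing $C$ into the $\mathcal{O}(\cdot)$ notation then yields the claimed iteration complexity $\mathcal{O}(d/\epsilon)$. Since each iteration of RZFCD uses exactly two zeroth-order queries of $\phi$ (namely $\phi(x(k))$ and $\phi(x(k)+r_{\alpha(k)}(k)z(k))$), the same asymptotic order bounds the oracle complexity as well.

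There is essentially no new obstacle at the level of the corollary itself; the heavy lifting is carried out in the proof of Theorem~\ref{theorem:RZFCD_convex}. The only item to be careful about is verifying that $C$ is genuinely independent of $K$ and $\epsilon$: this relies on the summability condition $\sum_{\beta}\sum_{k}r_\beta(k)<+\infty$, which caps the cumulative contribution of the zeroth-order bias uniformly in $K$, together with the step-size constraint $\eta_\beta L_{F,\beta}\leq 1$ that yields a per-iteration descent estimate whose constants do not degrade with $K$. Once these two points are confirmed from the theorem's hypotheses, the corollary follows by pure algebra.
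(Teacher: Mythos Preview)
Your proposal is correct and matches the paper's approach: the corollary is stated without a separate proof in the paper, being an immediate consequence of the explicit bound derived in Section~\ref{sec:analysis} (the convex case), namely
\[
\min_{0\leq k\leq K}\mathbb{E}[F(x(k))-F(x^\ast)]
\leq
\frac{d}{K+d}
\left(
F(x(0))-F(x^\ast)
+ \frac{1}{d}\sum_{\beta=1}^d L_{\phi,\beta}(u_\beta-l_\beta)\sum_{k=0}^{K-1}r_\beta(k)
+\frac{1}{2d}\sum_{\beta=1}^d \frac{(u_\beta-l_\beta)^2}{\eta_\beta}
\right),
\]
whose bracketed constant is exactly the $C$ you describe; inverting it gives $K=\mathcal{O}(d/\epsilon)$.
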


The following theorem summarizes the convergence results for RZFCD for the nonconvex case.
\begin{theorem}\label{theorem:RZFCD_nonconvex}
Suppose $F$ is $(L_{F,1},\ldots,L_{F,d})$-coordinatewise smooth, and $\phi$ is $(L_{\phi,1},\ldots,L_{\phi,d})$-coordinatewise smooth. Without loss of generality we let $L_{F,\beta}\geq L_{\phi,\beta}$ for all $\beta$. Let the step sizes satisfy $\eta_\beta L_{F,\beta}\leq 1$ for all $\beta=1,\ldots,d$, and let the smoothing radii satisfy
\[
\sum_{\beta=1}^d\sum_{k=1}^\infty r_\beta(k)<+\infty.
\]
Then, for the sequence $\{x(k)\}_{k\geq 0}$ generated by RZFCD, we have
\[
\min_{0\leq k\leq K-1}
\mathbb{E}\!\left[\|\mathfrak{g}(x(k);\underline{L}_{F})\|^2\right]
\leq \mathcal{O}\!\left(\frac{d}{K}\right),
\]
where $\underline{L}_F=\min_{1\leq \beta\leq d} L_{F,\beta}$.
\end{theorem}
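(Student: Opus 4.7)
The plan is to derive a one-step conditional descent inequality for $F$ along the RZFCD iterates and telescope it. Fix $k$ and condition on the sigma-algebra $\mathcal{F}_k$ generated by the history. Because only coordinate $\alpha(k)$ is updated in one step, coordinate-wise $L_{F,\alpha(k)}$-smoothness of $F$ yields
\[
F(x(k+1)) \leq F(x(k)) + \partial_{\alpha(k)} F(x(k))\,\Delta + \tfrac{L_{F,\alpha(k)}}{2}\Delta^2,
\]
where $\Delta := x_{\alpha(k)}(k+1) - x_{\alpha(k)}(k)$. The projection optimality of the coordinate step gives $g_{\alpha(k)}(k)\,\Delta \leq -\Delta^2/\eta_{\alpha(k)}$; writing $e_{\alpha(k)}(k) := g_{\alpha(k)}(k) - \partial_{\alpha(k)} F(x(k))$ for the estimation error, applying Young's inequality, and using the step size condition $\eta_{\alpha(k)} L_{F,\alpha(k)} \leq 1$ gives the per-iteration bound $F(x(k+1)) - F(x(k)) \leq -\Delta^2/(4\eta_{\alpha(k)}) + \eta_{\alpha(k)} e_{\alpha(k)}(k)^2$.

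Next I would bound the error $e_{\alpha(k)}(k)$ \emph{almost surely}. The critical structural observation is that $z_{\alpha(k)}(k)\in\{-1,+1\}$ in every branch of Algorithm~\ref{algorithm:rzfcd}, so the zeroth-order part of $g_{\alpha(k)}(k)$ always reduces to a one-sided (or central) finite difference of $\phi$ along coordinate $\alpha(k)$. Coordinate-wise $L_{\phi,\alpha(k)}$-smoothness of $\phi$ therefore gives the deterministic bound $|e_{\alpha(k)}(k)| \leq L_{\phi,\alpha(k)} r_{\alpha(k)}(k)/2$. This is precisely the property of RZFCD that avoids the $d$-dependent variance term~\eqref{eq:variance_two_point_zero_r} that appeared in 2-ZFGD. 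I would then relate $\Delta^2$ to the deterministic coordinate gradient mapping $G_\beta(x;1/\eta_\beta) := [x_\beta - \mathcal{P}_{[l_\beta,u_\beta]}(x_\beta - \eta_\beta\partial_\beta F(x))]/\eta_\beta$ by non-expansiveness of projection, obtaining $\Delta^2/(4\eta_{\alpha(k)}) \geq (\eta_{\alpha(k)}/8)\,G_{\alpha(k)}(x(k);1/\eta_{\alpha(k)})^2 - (\eta_{\alpha(k)}/4)\,e_{\alpha(k)}(k)^2$.

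Taking conditional expectation over $\alpha(k)\sim\mathcal{U}\{1,\ldots,d\}$ produces
\[
\mathbb{E}_k[F(x(k+1))] - F(x(k)) \leq -\tfrac{1}{8d}\sum_{\beta=1}^d \eta_\beta\,G_\beta(x(k);1/\eta_\beta)^2 + \tfrac{5}{16d}\sum_{\beta=1}^d \eta_\beta L_{\phi,\beta}^2\,r_\beta(k)^2.
\]
To convert the weighted sum into $\|\mathfrak{g}(x(k);\underline L_F)\|^2$ I invoke the standard monotonicity that $M\mapsto\|\mathfrak{g}(x;M)\|$ is non-decreasing on a convex feasible set, which holds coordinate-wise for a box. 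Since $1/\eta_\beta \geq L_{F,\beta} \geq \underline L_F$, this yields $G_\beta(x(k);1/\eta_\beta)^2 \geq G_\beta(x(k);\underline L_F)^2$, and hence $\sum_\beta \eta_\beta G_\beta(x(k);1/\eta_\beta)^2 \geq \underline\eta\,\|\mathfrak{g}(x(k);\underline L_F)\|^2$ with $\underline\eta := \min_\beta\eta_\beta$. Summing the descent inequality from $k=0$ to $K-1$, taking total expectation, and using $F(x(K)) \geq F^\ast$ together with the fact that $\sum_{\beta,k} r_\beta(k) < \infty$ implies $\sum_{\beta,k} r_\beta(k)^2 < \infty$ (since $r_\beta(k)\to 0$) then gives the desired $\min_k \mathbb{E}[\|\mathfrak{g}(x(k);\underline L_F)\|^2] \leq \mathcal{O}(d/K)$.

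The main obstacle is orchestrating three intertwined pieces simultaneously: (i) exploiting $z_{\alpha(k)}(k)\in\{-1,+1\}$ to obtain a purely bias-type error bound with per-coordinate constants $L_{\phi,\beta}$; (ii) handling the heterogeneous step sizes $\eta_\beta$ and the coordinate randomization so that the descent bound emerges in the correct mixed form $\sum_\beta \eta_\beta G_\beta^2$; and (iii) using monotonicity of the gradient mapping in $M$ to pass from this mixed quantity to the uniform stationarity measure $\|\mathfrak{g}(x;\underline L_F)\|^2$ without losing an extra factor of $d$. Preserving the linear (rather than quadratic) dependence on $d$ throughout this bookkeeping is the delicate part of the argument.
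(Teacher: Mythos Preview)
Your proposal is correct and follows essentially the same route as the paper's proof: coordinate-wise smoothness plus the projection optimality condition to get a descent inequality, the deterministic bound $|e_{\alpha(k)}(k)|\leq \tfrac12 L_{\phi,\alpha(k)}r_{\alpha(k)}(k)$ from $z_{\alpha(k)}(k)\in\{-1,+1\}$, nonexpansiveness of projection to relate $\Delta^2$ to the coordinate gradient mapping, monotonicity of $M\mapsto\|\mathfrak{g}(x;M)\|$ (from \cite[Lemma~2]{nesterov2013gradient}) to pass to $\underline L_F$, and telescoping. The only notable difference is how the cross term $e_{\alpha(k)}(k)\,\Delta$ is handled: you use Young's inequality, producing an $\eta_\beta L_{\phi,\beta}^2 r_\beta(k)^2$ residual (and then argue $\sum r_\beta(k)^2<\infty$ from $\sum r_\beta(k)<\infty$), whereas the paper bounds $|\Delta|\leq u_{\alpha(k)}-l_{\alpha(k)}$ directly, giving a residual linear in $r_\beta(k)$ so that the hypothesis $\sum_{\beta,k} r_\beta(k)<\infty$ applies immediately; both yield the same $\mathcal{O}(d/K)$ conclusion.
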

\begin{corollary}\label{corollary:RZFCD_complexity_nonconvex}
Suppose the functions $F$ and $\phi$ satisfy the conditions in Theorem~\ref{theorem:RZFCD_nonconvex}. Let $\epsilon>0$ be arbitrary. Then the number of iterations needed to achieve
\[
\min_{0\leq k\leq K-1}
\mathbb{E}\!\left[\|\mathfrak{g}(x(k);\underline{L}_{F})\|^2\right]
\leq\epsilon
\]
for RZFCD can be upper bounded by
$\mathcal{O}\!\left(d/\epsilon\right)$.
\end{corollary}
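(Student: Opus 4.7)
The plan is to invoke Theorem~\ref{theorem:RZFCD_nonconvex} directly and invert the convergence rate. Under the hypotheses of the corollary, that theorem guarantees that the sequence generated by RZFCD satisfies
\[
\min_{0\leq k\leq K-1} \mathbb{E}\!\left[\|\mathfrak{g}(x(k);\underline{L}_F)\|^2\right]\leq \frac{C\,d}{K}
\]
for some constant $C>0$ depending only on problem-dependent quantities (e.g., $F(x(0))-F^\ast$, the coordinatewise smoothness constants $L_{F,\beta}$ and $L_{\phi,\beta}$, the box widths $u_\beta-l_\beta$, and the prescribed tail sums of the smoothing radii) but independent of $\epsilon$ and $K$. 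Requiring the right-hand side to be at most $\epsilon$ and solving for $K$ yields $K\geq C d/\epsilon$, which is exactly the claimed $\mathcal{O}(d/\epsilon)$ bound.

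To make the big-$\mathcal{O}$ honest, I would also check that the parameter conditions in Theorem~\ref{theorem:RZFCD_nonconvex} admit a single fixed choice that does not depend on $\epsilon$. The step-size condition $\eta_\beta L_{F,\beta}\leq 1$ is enforced once and for all by, say, $\eta_\beta=1/L_{F,\beta}$, and the coordinatewise summability requirement $\sum_{\beta=1}^d\sum_{k=1}^\infty r_\beta(k)<+\infty$ is satisfied by any choice such as $r_\beta(k)=c/(k+1)^2$ with a fixed small $c>0$. Since none of these choices involve $\epsilon$, the constant $C$ above is genuinely $\epsilon$-independent and the inversion step is legitimate.

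Finally, although the corollary is stated in terms of iteration count, Table~\ref{table:contribution} reports oracle complexity; these agree up to a factor of two, because every iteration of RZFCD issues exactly two zeroth-order queries of $\phi$ (at $x(k)$ and at $x(k)+r_{\alpha(k)}(k)z(k)$). I do not expect any hard step in this proof: all of the analytical work is done in Theorem~\ref{theorem:RZFCD_nonconvex}, and the corollary follows by a one-line inversion of the rate together with the observation that the per-iteration oracle cost is constant.
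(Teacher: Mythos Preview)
Your proposal is correct and matches the paper's approach: the paper does not give a separate proof of Corollary~\ref{corollary:RZFCD_complexity_nonconvex}, but treats it as an immediate consequence of the $\mathcal{O}(d/K)$ bound established in Theorem~\ref{theorem:RZFCD_nonconvex} (whose proof in Section~\ref{sec:analysis} yields an explicit constant depending on $F(x(0))-F^\ast$, the step sizes $\eta_\beta$, the box widths $u_\beta-l_\beta$, and the summable tail $\sum_k r_\beta(k)$), and your inversion argument together with the remark that the parameter choices are $\epsilon$-independent is exactly the right justification.
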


We provide several discussions about the above theoretical results, particularly on the comparison of complexity bounds and convergence conditions.

\begin{enumerate}
\item \textbf{Comparison of complexity 
bounds with 2-ZFGD}. 
By comparing Corollaries~\ref{corollary:RZFCD_complexity_convex}--\ref{corollary:RZFCD_complexity_nonconvex} with Corollaries~\ref{corollary:2ZFGD_complexity}--\ref{corollary:2ZFGD_complexity_nonconvex}, we clearly see that the complexity bounds of RZFCD are superior to those of 2-ZFGD for both the convex and the nonconvex cases. We shall later see that these theoretical implications accord with the numerical results presented in Section~\ref{section:case_studies}. On the other hand, RZFCD requires that the feasible set $\mathcal{X}$ is a compact box, while 2-ZFGD only requires that $\mathcal{X}$ is a compact convex set, indicating that 2-ZFGD may have wider applicability than RZFCD.

\item \textbf{Comparison of complexity bounds with unconstrained optimization}. Recalling that the complexity of zeroth-order optimization for deterministic unconstrained smooth problems can be bounded by $\mathcal{O}(d/\epsilon)$ \cite{nesterov2017random}, we see that RZFCD is able to close the gap between constrained and unconstrained smooth optimization that was mentioned in the previous subsection when the feasible set is a box.

Interestingly, the existing literature does not seem to have paid enough attention to this gap between zeroth-order constrained and unconstrained optimization. To the best of our knowledge, without exploiting acceleration or variance reduction techniques, the best-known complexity bound of two-point zeroth-order methods for deterministic constrained smooth convex optimization prior to this work is given by $\mathcal{O}(d/\epsilon^2)$~\cite{duchi2015optimal}; regarding zeroth-order optimization for constrained smooth nonconvex optimization, the only relevant work that we are aware of is~\cite{he2022zeroth} that adopted a cyclic block coordinate descent approach, leading to a complexity bound proportional to $\epsilon^{-1}$ but exponential in the number of blocks.\footnote{
We also note that there are existing works on zeroth-order Frank-Wolfe methods~\cite{sahu2019towards} that achieve the $\mathcal{O}(d/\epsilon)$ complexity bound. However, these methods require a linear minimization oracle (LMO) rather than a projection oracle, which is different from our setting. In addition, these methods need $\Omega(d)$ function values to construct one gradient estimator, which can be inefficient when $d$ is large.
}  We believe that our design and analysis of RZFCD will be of independent interest to researchers in the area of general zeroth-order optimization, and can provide important insight on how to close this gap for more general settings.

\item \textbf{Coordinatewise smoothness.} In Theorems~\ref{theorem:RZFCD_convex}--\ref{theorem:RZFCD_nonconvex}, we impose \emph{coordinatewise smoothness} on the functions $F$ and $\phi$, rather than ordinary smoothness; similar conditions have been employed in~\cite{nesterov2012efficiency} for analyzing first-order coordinate descent algorithms. It's not hard to see that $L$-smoothness implies $(L,\ldots,L)$-coordinatewise smoothness, while $(L,\ldots,L)$-coordinatewise smoothness only implies $\sqrt{d}L$-smoothness. In many situations, an $L$-smooth objective function may be $(L_1,\ldots,L_d)$-coordinatewise smooth with $L_\alpha$ much smaller than $L$ for all $\alpha\in \{1,\ldots,d\}$. Considering that the step sizes of RZFCD are chosen according to $\eta_\alpha\leq 1/L_\alpha$, we see that imposing coordinatewise smoothness on the objective functions allows larger step sizes, which may lead to faster convergence.

%\item \textbf{Dependence on $d$}. Gradient descent is performed uniformly over all dimensions, thus, it is natural to perform more iterations proportional to dimension $d$ to achieve adequate gradient descent for each dimension. From another perspective, the convergence rate for vanilla gradient descent in convex case is $\mathcal{O}(1/K)$, where full gradient descent is carried out in each iteration. Such gradient descent needs to be done over $d$ dimensions separately in RZFCD, so it is intuitive to have convergence rate $\mathcal{O}(d/K)$, which is also a general convergence rate for random coordinate descent algorithms. It is the same case when the objective function is nonconvex, where the convergence rate for gradient mapping is $\mathcal{O}(1/K)$. That is because the gradient mapping is defined only in one dimension.

%\item \textbf{Convergence rate in strongly convex case}. RZFCD can achieve the same convergence rate as the first-order random coordinate descent algorithm without gradient estimation \cite{nesterov2012efficiency} when $r(k)$ satisfies specific conditions. However, the geometric convergence rate in the strongly convex case cannot be achieved as in \cite{nesterov2012efficiency} for the gradient error brought by $r(k)$. As a result, there is a term relating to $r(k)$ in the theorem, and the final convergence rate is determined by the decreasing rate of $r(k)$. The geometric convergence rate can be achieved only when $r(k)$ is decreasing with geometric rate.
\end{enumerate}

\begin{remark}
The theoretical analysis in this paper assumes that the values of the function $\phi$ can be accessed accurately without being corrupted by noise or error. This assumption provides convenience for theoretical analysis but is only an approximation to practical situations. Apart from the limited precision of numerical computation, noise and error from the sensors may also render the obtained values of $\phi$ inaccurate. We expect that when the zeroth-order information has relatively large noise/error, the choice of the smoothing radii $r(k)$ needs to be more conservative, which may lead to slower convergence. Detailed analysis of our proposed algorithms in the presence of noise and/or error is beyond the scope of this paper and will be an interesting future direction.
\end{remark}

\begin{remark}
Note that in the two proposed algorithms, the aggregator does not need to know or collect any information on the utility functions $f_i$ of the agents; the local gradient $\nabla_i f_i(x(k))$ is only computed locally by each agent and will not be uploaded to the aggregator. This feature helps preserve the agents' privacy during the optimization procedure, which will be important for the adoption of distributed demand response programs in practical scenarios. It will be interesting to investigate whether our algorithms will enjoy theoretically guaranteed and quantified degrees of privacy preservation from the perspective of, e.g., differential privacy, but we leave it to future work.
\end{remark}

\section{Analysis of RZFCD}
\label{sec:analysis}

Define the filtration $\mathcal{F}_k = \sigma(x(0),\alpha(0),x(1),\alpha(1),\ldots,x(k))$. For notational simplicity, we denote $r(k)\coloneqq r_{\alpha(k)}(k)$.

We first derive some auxiliary results that will be used for subsequent analysis. Note that the identity
$$
x_{\alpha(k)}(k+1) = \mathcal{P}_{[l_{\alpha(k)},u_{\alpha(k)}]}
\!\left[
x_{\alpha(k)}(k) - \eta_{\alpha(k)}\,g_{\alpha(k)}(k)
\right]
$$
implies
\begin{equation*}%\label{eq:proj_intermediate_1}
(y-x_{\alpha(k)}(k+1))\cdot
\left(
x_{\alpha(k)}(k) - \eta_{\alpha(k)}\,g_{\alpha(k)}(k)
-x_{\alpha(k)}(k+1)
\right)\leq 0
\end{equation*}
for any $y\in[l_{\alpha(k)},u_{\alpha(k)}]$. Particularly,
\begin{equation}\label{eq:proj_intermediate_1}
(x_{\alpha(k)}(k)-x_{\alpha(k)}(k+1))\cdot
\left(
x_{\alpha(k)}(k) - \eta_{\alpha(k)}\,g_{\alpha(k)}(k)
-x_{\alpha(k)}(k+1)
\right)\leq 0
\end{equation}
and, if $x^\ast\in\mathcal{X}$ is a locally optimal point of $F$,
\begin{equation}\label{eq:proj_intermediate_2}
(x^\ast_{\alpha(k)}-x_{\alpha(k)}(k+1))\cdot
\left(
x_{\alpha(k)}(k) - \eta_{\alpha(k)}\,g_{\alpha(k)}(k)
-x_{\alpha(k)}(k+1)
\right)\leq 0.
\end{equation}

Our analysis of RZFCD will be based on the following lemma that characterizes how well the zeroth-order estimators approximate the true partial derivatives.
\begin{lemma}\label{lemma:RZFCD_zeroth-order_surrogate_func}
For each $k=1,2,\ldots$, we have
\begin{equation}\label{eq:RZFCD_grad_diff}
\left|
g_{\alpha(k)}(k)
-\frac{\partial F(x(k))}{\partial x_{\alpha(k)}}
\right|
\leq \frac{1}{2}L_{\phi,\alpha(k)}r(k).
\end{equation}
\end{lemma}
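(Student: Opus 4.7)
The plan is to expand the difference $g_{\alpha(k)}(k) - \partial F(x(k))/\partial x_{\alpha(k)}$ using the decomposition $F = \phi + \sum_{i=1}^N f_i$ together with the fact that each $f_i$ depends only on the subvector $x_i$. Since $\partial F(x(k))/\partial x_{\alpha(k)} = \partial \phi(x(k))/\partial x_{\alpha(k)} + \partial f_{\alpha(k)}(x(k))/\partial x_{\alpha(k)}$, the deterministic local-gradient terms in $g_{\alpha(k)}(k)$ cancel exactly. This reduces the proof to bounding the bias of the one-dimensional finite-difference estimator of $\partial \phi/\partial x_{\alpha(k)}$ at $x(k)$.

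Next I would exploit the structure of the perturbation $z(k)$: only its $\alpha(k)$-th entry is nonzero, and that entry $z_{\alpha(k)}(k)$ always satisfies $|z_{\alpha(k)}(k)| = 1$, regardless of whether it is forced by the boundary test or sampled uniformly from $\{\pm 1\}$. Writing $r := r_{\alpha(k)}(k)$, $z := z_{\alpha(k)}(k)$, and $\alpha := \alpha(k)$, and introducing $t := rz$ so that $z/r = 1/t$, the remaining finite-difference quantity simplifies to
\[
\frac{\phi(x(k) + t e_\alpha) - \phi(x(k))}{t} - \frac{\partial \phi(x(k))}{\partial x_\alpha},
\]
with $|t| = r$. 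This sign-absorption step is the one piece of bookkeeping that matters; the three cases of the algorithm's sampling rule all collapse into it.

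Finally, I would apply the fundamental theorem of calculus along the $\alpha$-th coordinate to write $\phi(x(k) + t e_\alpha) - \phi(x(k)) = \int_0^t \partial \phi(x(k) + s e_\alpha)/\partial x_\alpha \, ds$, so that the displayed quantity equals $t^{-1}\int_0^t [\partial \phi(x(k) + s e_\alpha)/\partial x_\alpha - \partial \phi(x(k))/\partial x_\alpha]\, ds$. Invoking the $(L_{\phi,1},\ldots,L_{\phi,d})$-coordinatewise smoothness of $\phi$ bounds the integrand by $L_{\phi,\alpha}|s|$, so the absolute value of the integral is at most $L_{\phi,\alpha} t^2/2$; dividing by $|t| = r$ gives the advertised bound $\tfrac{1}{2} L_{\phi,\alpha(k)} r(k)$. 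I do not expect any genuine obstacle here: the lemma is the single-coordinate analogue of the standard two-point bias bound in Lemma~\ref{lemma:zeroth-order_grad_est_bias}, and the coordinatewise smoothness hypothesis is tailored precisely to yield the constant $\tfrac{1}{2} L_{\phi,\alpha(k)}$ without any $\sqrt{d}$ dimensional overhead.
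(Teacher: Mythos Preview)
Your proposal is correct and follows essentially the same approach as the paper: cancel the $\partial f_{\alpha(k)}/\partial x_{\alpha(k)}$ term, use $z_{\alpha(k)}(k)^2=1$ to reduce to a one-dimensional finite-difference error, apply the fundamental theorem of calculus along $e_{\alpha(k)}$, and bound the integrand via coordinatewise smoothness. The only cosmetic difference is that the paper parameterizes the integral over $[0,1]$ (writing $x(k)+s\cdot r(k)z(k)$) rather than over $[0,t]$, which amounts to the same computation.
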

\begin{proof}
We have
\begin{align*}
\left|g_{\alpha(k)}(k) -
\frac{\partial F(x(k))}{\partial x_{\alpha(k)}}\right|
=\ &
\left|\frac{\phi(x(k)+r(k)z(k))-\phi(x(k))}{r(k)}z_{\alpha(k)}(k)
-\frac{\partial\phi(x(k))}{\partial x_{\alpha(k)}}\right| \\
=\ &
\left|\int_0^1
\left(\frac{\partial\phi(x(k)+s\cdot r(k)z(k))}{\partial x_{\alpha(k)}}
-\frac{\partial\phi(x(k))}{\partial x_{\alpha(k)}}
\right)\,ds\right| \\
\leq\ &
\int_0^1
\left|\frac{\partial\phi(x(k)+s\cdot r(k)z(k))}{\partial x_{\alpha(k)}}
-\frac{\partial\phi(x(k))}{\partial x_{\alpha(k)}}
\right|\,ds \\
\leq\ &
\int_0^1 L_{\phi,\alpha(k)}\cdot |s\cdot r(k)z_{\alpha(k)}(k)|\,ds
\leq \frac{1}{2}L_{\phi,\alpha(k)}r(k).
\qedhere
\end{align*}
\end{proof}

We now apply the coordinatewise smoothness of $F$ to get
\begin{align*}
& F(x(k+1))-F(x(k)) \\
\leq\ &
\frac{\partial F(x(k))}{\partial x_{\alpha(k)}}
(x_{\alpha(k)}(k+1)-x_{\alpha(k)}(k))
+\frac{L_{F,\alpha(k)}}{2}\left|x_{\alpha(k)}(k+1)-x_{\alpha(k)}(k)\right|^2 \\
\leq\ &
g_{\alpha(k)}(k)\,(x_{\alpha(k)}(k+1)-x_{\alpha(k)}(k))
+\frac{L_{F,\alpha(k)}}{2}\left|x_{\alpha(k)}(k+1)-x_{\alpha(k)}(k)\right|^2 \\
& + \left|\frac{\partial F(x(k))}{\partial x_{\alpha(k)}}
-g_{\alpha(k))}(k)\right|
\cdot \left|x_{\alpha(k)}(k+1)-x_{\alpha(k)}(k)\right|,
\end{align*}
which, combined with~\eqref{eq:RZFCD_grad_diff}, leads to
\begin{equation}\label{eq:RZFCD_smoothness_intermediate_1} 
\begin{aligned}
& F(x(k+1))-F(x(k)) \\
\leq\ &
g_{\alpha(k)}(k)\,(x_{\alpha(k)}(k+1)-x_{\alpha(k)}(k))
+\frac{L_{F,\alpha(k)}}{2}\left|x_{\alpha(k)}(k+1)-x_{\alpha(k)}(k)\right|^2 \\
& + \frac{1}{2} L_{\phi,\alpha(k)} r(k)(u_{\alpha(k)}-l_{\alpha(k)}).
\end{aligned}
\end{equation}

\subsection{The Convex Case}
In this subsection, we assume that each $f_i$ and $\phi$ are convex functions on $\mathcal{X}$.

We start our analysis by noting that the inequality~\eqref{eq:proj_intermediate_2} implies
\begin{align*}
& g_{\alpha(k)}(k)
\cdot(x_{\alpha(k)}(k+1)-x_{\alpha(k)}(k))
+\frac{1}{2\eta_{\alpha(k)}}|x_{\alpha(k)}(k+1)-x_{\alpha(k)}(k)|^2 \\
\leq\ & g_{\alpha(k)}(k)
\cdot(x^\ast_{\alpha(k)}-x_{\alpha(k)}(k))
+\frac{1}{2\eta_{\alpha(k)}}(x_{\alpha(k)}(k+1)-x_{\alpha(k)}(k))(2x^\ast_{\alpha(k)}-x_{\alpha(k)}(k)-x_{\alpha(k)}(k+1)) \\
=\ &
g_{\alpha(k)}(k)
\cdot(x^\ast_{\alpha(k)}-x_{\alpha(k)}(k))
+\frac{1}{2\eta_{\alpha(k)}}\left(|x^\ast_{\alpha(k)}-x_{\alpha(k)}(k)|^2
-|x^\ast_{\alpha(k)}-x_{\alpha(k)}(k+1)|^2\right).
\end{align*}
By combining it with~\eqref{eq:RZFCD_smoothness_intermediate_1} and using the condition $\eta_\beta L_{F,\beta}\leq 1$ for all $\beta$, we can get
\begin{align*}
& F(x(k+1))-F(x(k)) \\
\leq\ &
g_{\alpha(k)}(k)\,(x^\ast_{\alpha(k)}-x_{\alpha(k)}(k))
+\frac{1}{2\eta_{\alpha(k)}}\left(|x^\ast_{\alpha(k)}-x_{\alpha(k)}(k)|^2
-|x^\ast_{\alpha(k)}-x_{\alpha(k)}(k+1)|^2\right) \\
& + \frac{1}{2}L_{\phi,\alpha(k)}r(k)(u_{\alpha(k)}-l_{\alpha(k)}).
\end{align*}
By taking the expectation conditioned on $\mathcal{F}_k$, we see that
\begin{equation}\label{eq:RZFCD_convex_intermediate1}
\begin{aligned}
& \mathbb{E}\!\left[F(x(k+1))-F(x(k))\mid\mathcal{F}_k\right] \\
\leq\ &
\mathbb{E}\!\left[\left.g_{\alpha(k)}(k)\,(x^\ast_{\alpha(k)}-x_{\alpha(k)}(k))
+\frac{1}{2d}\sum_{\beta=1}^d
\left(\frac{|x^\ast_{\beta}-x_{\beta}(k)|^2}{\eta_\beta}
-\frac{|x^\ast_{\beta}-x_{\beta}(k+1)|^2}{\eta_\beta}\right)\right|\mathcal{F}_k\right] \\
&
+ \frac{1}{2d}\sum_{\beta=1}^d
L_{\phi,\beta}(u_\beta-l_\beta) r_\beta(k).
\end{aligned}
\end{equation}
To bound $g_{\alpha(k)}(k)\,(x^\ast_{\alpha(k)}-x_{\alpha(k)}(k))$, we first note that
\begin{align*}
& g_{\alpha(k)}(k)\,(x^\ast_{\alpha(k)}-x_{\alpha(k)}(k)) \\
\leq \ &
\frac{\partial F(x(k))}{\partial x_{\alpha(k)}}
(x^\ast_{\alpha(k)}-x_{\alpha(k)}(k))
+
\left|\frac{\partial F(x(k))}{\partial x_{\alpha(k)}}
-g_{\alpha(k)}(k)\right|
\left|x_{\alpha(k)}^\ast-x_{\alpha(k)}(k)\right| \\
\leq\ &
\frac{\partial F(x(k))}{\partial x_{\alpha(k)}}
(x^\ast_{\alpha(k)}-x_{\alpha(k)}(k))
+
\frac{1}{2}L_{\phi,\alpha(k)}r(k)\cdot(u_{\alpha(k)}-l_{\alpha(k)}),
\end{align*}
where we used~\eqref{eq:RZFCD_grad_diff} in the last step. Then, by taking the expectation conditioned on $\mathcal{F}_k$, we get
\begin{align*}
& \mathbb{E}\!\left[\left.g_{\alpha(k)}(k)\,(x^\ast_{\alpha(k)}-x_{\alpha(k)}(k))\,\right|\mathcal{F}_k\right] \\
\leq\ &
\frac{1}{d}\sum_{\beta=1}^d
\left(\frac{\partial F(x(k))}{\partial x_\beta}(x^\ast_\beta-x_\beta(k))
+\frac{1}{2}L_{\phi,\beta}(u_\beta-l_\beta)r_\beta(k)\right) \\
=\ &
\frac{1}{d}\langle \nabla F(x(k)),x^\ast-x(k)\rangle
+\frac{1}{2d}\sum_{\beta=1}^d L_{\phi,\beta}(u_\beta-l_\beta)r_\beta(k) \\
\leq\ &
\frac{1}{d}(F(x^\ast)-F(x(k)))
+\frac{1}{2d}\sum_{\beta=1}^d L_{\phi,\beta}(u_\beta-l_\beta)r_\beta(k),
\end{align*}
where we used the convexity of $F$ in the last step. By plugging this bound into~\eqref{eq:RZFCD_convex_intermediate1}, we can obtain
\begin{align*}
& \mathbb{E}\!\left[F(x(k+1))-F(x(k))\mid \mathcal{F}_k\right] \\
\leq\ &
\frac{1}{d}(F(x^\ast)-F(x(k)))
+\frac{1}{d}\sum_{\beta=1}^d
L_{\phi,\beta}
(u_\beta-l_\beta)r_\beta(k) \\
&
+\frac{1}{2d}\sum_{\beta=1}^d\mathbb{E}\!\left[
\left.
\frac{|x^\ast_{\beta}-x_{\beta}(k)|^2}{\eta_\beta}
-\frac{|x^\ast_{\beta}-x_{\beta}(k+1)|^2}{\eta_\beta}\,\right|\mathcal{F}_k\right].
\end{align*}
We can now take the total expectation and the telescoping sum to get
\begin{align*}
& \mathbb{E}[F(x(K))-F(x(0))] \\
\leq\ &
\frac{1}{d}\,\mathbb{E}\!\left[\sum_{k=0}^{K-1}\left(F(x^\ast)-F(x(k))\right)\right]
+\frac{1}{d}\sum_{\beta=1}^d L_{\phi,\beta}(u_\beta-l_\beta)\sum_{k=0}^{K-1}r_\beta(k)
+\frac{1}{2d}\sum_{\beta=1}^d \frac{|x^\ast_\beta-x_\beta(0)|^2}{\eta_\beta}.
\end{align*}
Finally, observe that
\[
\min_{0\leq k\leq K}
\mathbb{E}\!\left[F(x(k))-F(x^\ast)\right]
\leq \frac{d}{K+d}(F(x(K))-F(x^\ast)) + \frac{1}{K+d}\sum_{k=0}^{K-1}(F(x(k))-F(x^\ast)),
\]
and we obtain
\begin{align*}
& \min_{0\leq k\leq K}\mathbb{E}[F(x(k))-F(x^\ast)] \\
\leq\ &
\frac{d}{K+d}
\left(
F(x(0))-F(x^\ast)
+ \frac{1}{d}\sum_{\beta=1}^d L_{\phi,\beta}(u_\beta-l_\beta)\sum_{k=0}^{K-1}r_\beta(k)
+\frac{1}{2d}\sum_{\beta=1}^d \frac{(u_\beta-l_\beta)^2}{\eta_\beta}
\right).
\end{align*}

\begin{comment}
Finally, the convexity of $F$ implies
\[
F(\bar{x}(K))
\leq
\frac{d}{K+d}F(x(K)) + \frac{1}{K+d}\sum_{k=0}^{K-1}F(x(k)),
\]
and we obtain
\begin{align*}
& \mathbb{E}[F(\bar{x}(K))-F(x^\ast)] \\
\leq\ &
\frac{d}{K+d}
\left(
F(x(0))-F(x^\ast)
+ \frac{1}{d}\sum_{\beta=1}^d L_{\phi,\beta}(u_\beta-l_\beta)\sum_{k=0}^{K-1}r_\beta(k)
+\frac{1}{2d}\sum_{\beta=1}^d \frac{(u_\beta-l_\beta)^2}{\eta_\beta}
\right).
\end{align*}
\end{comment}

\subsection{The Nonconvex Case}
Now we consider the situation where $F$ is not assumed to be convex.

We start our analysis by observing that
\begin{align*}
\|\mathfrak{g}(x(k),\underline{L}_F)\|^2
=\ &
\left\|\underline{L}_F
\left(
x(k) - \mathcal{P}_{\mathcal{X}}\!\left[
x(k) - \frac{1}{\underline{L}_F}\nabla F(x(k))
\right]
\right)\right\|^2 \\
=\ &
\sum_{\alpha=1}^d
\left|\underline{L}_F
\left(
x_\alpha(k) - \mathcal{P}_{[l_\alpha,u_\alpha]}\!\left[
x_\alpha(k) - \frac{1}{\underline{L}_F}\frac{\partial F(x(k))}{\partial x_{\alpha}}
\right]
\right)\right|^2 \\
\leq\ &
\sum_{\alpha=1}^d
\left|\frac{1}{\eta_\alpha}
\left(
x_\alpha(k) - \mathcal{P}_{[l_\alpha,u_\alpha]}\!\left[
x_\alpha(k) - \eta_\alpha\frac{\partial F(x(k))}{\partial x_{\alpha}}
\right]
\right)\right|^2 \\
=\ &
d\,\mathbb{E}\!\left[
\left.\left|\frac{1}{\eta_{\alpha(k)}}
\left(
x_{\alpha(k)}(k) - \mathcal{P}_{[l_\alpha,u_\alpha]}\!\left[
x_{\alpha(k)}(k) - \eta_{\alpha(k)}\frac{\partial F(x(k))}{\partial x_{\alpha(k)}}
\right]
\right)\right|^2\right|\mathcal{F}_k
\right],
\end{align*}
where the third step follows from \cite[Lemma 2]{nesterov2013gradient} and the fact that $\underline{L}_F\leq L_{F,\alpha}\leq 1/\eta_\alpha$ for all $\alpha=1,\ldots,d$.
To bound the right-hand side of the above inequality, we note that
\begin{align*}
&
\frac{1}{\eta_{\alpha(k)}^2}\left|
x_{\alpha(k)}(k)-\mathcal{P}_{[l_{\alpha(k)},u_{\alpha(k)}]}\!\left[
x_{\alpha(k)}(k)-\eta_{\alpha(k)}\frac{\partial F(x(k))}{\partial x_{\alpha(k)}}
\right]
\right|^2 \\
\leq\ &
\frac{1+1/4}{\eta_{\alpha(k)}^2}|x_{\alpha(k)}(k)-x_{\alpha(k)}(k+1)|^2 \\
&
+\frac{1+4}{\eta_{\alpha(k)}^2}
\left|x_{\alpha(k)}(k+1)-\mathcal{P}_{[l_{\alpha(k)},u_{\alpha(k)}]}\!\left[x_{\alpha(k)}(k)-\eta_{\alpha(k)}\frac{\partial F(x(k))}{\partial x_{\alpha(k)}}\right]\right|^2 \\
\leq\ &
\frac{5}{4\eta_{\alpha(k)}^2}|x_{\alpha(k)}(k)-x_{\alpha(k)}(k+1)|^2
+5\left|g_{\alpha(k)}(k)-\frac{\partial F(x(k))}{\partial x_{\alpha(k)}}\right|^2 \\
\leq\ &
\frac{5}{4\eta_{\alpha(k)}^2}|x_{\alpha(k)}(k)-x_{\alpha(k)}(k+1)|^2
+\frac{5}{4}L_{\phi,\alpha(k)}^2 r(k)^2,
\end{align*}
where the second step follows from the nonexpansiveness of projection onto convex sets, and the last step follows from the bound~\eqref{eq:RZFCD_grad_diff}.
Then, we can use~\eqref{eq:proj_intermediate_1} to derive a bound on the first term:
\[
g_{\alpha(k)}(k)\,(x_{\alpha(k)}(k+1)-x_{\alpha(k)}(k))
\leq -\frac{1}{\eta_{\alpha(k)}}|x_{\alpha(k)}(k+1)-x_{\alpha(k)}(k)|^2,
\]
which, combined with \eqref{eq:RZFCD_smoothness_intermediate_1}, leads to
\begin{align*}
& F(x(k+1))
-F(x(k)) \\
\leq\ &
-\frac{1}{\eta_{\alpha(k)}}\left(
1
-\frac{\eta_{\alpha(k)}L_{F,\alpha(k)}}{2}\right)
|x_{\alpha(k)}(k+1)-x_{\alpha(k)}(k)|^2
+
\frac{1}{2}L_{\phi,\alpha(k)}(u_{\alpha(k)}-l_{\alpha(k)})r(k)
\\
\leq\ &
-\frac{1}{2\eta_{\alpha(k)}}
|x_{\alpha(k)}(k+1)-x_{\alpha(k)}(k)|^2
+
\frac{1}{2}L_{\phi,\alpha(k)}(u_{\alpha(k)}-l_{\alpha(k)})r(k)
.
\end{align*}
As a result,
\begin{align*}
& \frac{1}{\eta_{\alpha(k)}^2}\left|
x_{\alpha(k)}(k)-\mathcal{P}_{[l_{\alpha(k)},u_{\alpha(k)}]}\!\left[
x_{\alpha(k)}(k)-\eta_{\alpha(k)}\frac{\partial F(x(k))}{\partial x_{\alpha(k)}}
\right]
\right|^2 \\
\leq\ &
\frac{5}{2\eta_{\alpha(k)}}(F(x(k))-F(x(k+1)))
+\frac{5L_{\phi,\alpha(k)}r(k)}{4}
\left(
\frac{u_{\alpha(k)}-l_{\alpha(k)}}{\eta_{\alpha(k)}}
+L_{\phi,\alpha(k)} r(k)\right) \\
\leq\ &
\frac{5}{2\eta_{\alpha(k)}}(F(x(k))-F(x(k+1)))
+\frac{2L_{\phi,\alpha(k)}(u_{\alpha(k)}-l_{\alpha(k)})r(k)}{\eta_{\alpha(k)}},
\end{align*}
where in the last step we used $r(k)\leq (u_{\alpha(k)}-l_{\alpha(k)})/2$ and $\eta_{\alpha(k)}L_{\phi,\alpha(k)}\leq \eta_{\alpha(k)}L_{F,\alpha(k)}<6/5$. By taking the expectation conditioned on $\mathcal{F}_k$ and denoting $\underline{\eta}=\min_{\beta}\eta_\beta$, we get
\begin{align*}
\left\|\mathfrak{g}(x(k),\underline{L}_F)\right\|^2
\leq \ &
\frac{5d}{2\underline{\eta}}\,\mathbb{E}\!\left[F(x(k))-F(x(k+1))|\mathcal{F}_k\right]
+2 \sum_{\beta=1}^d
\frac{L_{\phi,\beta}(u_\beta-l_\beta)}{\eta_{\beta}}r_\beta(k)
\end{align*}
and by taking the total expectation and telescoping sum, we get
\begin{align*}
\frac{1}{K}\sum_{k=0}^{K-1}
\mathbb{E}\!\left[
\|\mathfrak{g}(x(k),\underline{L}_F)\|^2
\right]
\leq\ &
\frac{d}{K}
\left(
\frac{5}{2\underline{\eta}}
(F(0)-F^\ast)
+
\frac{2}{d}\sum_{\beta=1}^d
\frac{L_{\phi,\beta}(u_\beta-l_\beta)}{\eta_{\beta}}\sum_{k=0}^{K-1}r_\beta(k)
\right),
\end{align*}
which completes the proof.

\section{Numerical Experiments}\label{section:case_studies}

In this section, we conduct numerical experiments to validate the performance of the proposed algorithms. Specifically, we first test our algorithms 2-ZFGD and RZFCD on a convex test case. Then, we conduct experiments on a nonconvex test case in which AC power flow and voltage constraints are taken into account.

\subsection{The Convex Test Case}
In the convex test case, the DDR problem consists of 100 agents coordinated by an aggregator as described in Section~\ref{section_3}. In this test case, only the load following requirement is considered in the global objective, i.e., the global objective function is given by $\phi(x) = \left(p_c(x)-D\right)^2$ without penalty associated with voltage safety. Furthermore, we employ an approximate model for the function $p_c$ given by
\[
p_c(x) = \sum\nolimits_{i=1}^{100} (1+\gamma_i)x_i
\]
(see~\cite{qin2018consensus}), with each loss-related coefficient $\gamma_i$ randomly selected from the interval $(0.03,0.15)$. Each feasible set $\mathcal{X}_i$ is set to $[0,u_i]$ where each $u_i$ is randomly selected from $(0\,\mathrm{kW},50\,\mathrm{kW})$. Each local cost function is a quadratic function $f_i(x) = a_ix_i^2 + b_i x_i$ with $a_i$ and $b_i$ randomly selected from $(0.5,1.5)$ and $(0,5)$, respectively. The desired load level is set to $D=\sum_i u_i-1500~\mathrm{kW}$.

We conduct experiments for both 2-ZFGD and RZFCD. For RZFCD, we test it with different \emph{constant step size (CS)} settings. For 2-ZFGD, apart from the constant step size settings that have been studied for theoretical analysis, we also test it under the \emph{diminishing step size (DS)} settings that are popular in stochastic optimization and distributed optimization.
For both algorithms, the smoothing radius $r(k)$ is set as $r(k)=\min\{1/(k+1)^{1.1},10^{-3}\}$. For 2-ZFGD under the setting of the diminishing step size, we set $\eta(k)=\eta(0)/\sqrt{k+1}$ with different $\eta (0)$; the parameter $\delta$ is also set to be diminishing as $\delta(k)=0.1/\sqrt{k+1}$ under the setting of both constant and diminishing step sizes.
% {\color{red}(2-ZFGD with CS uses constant $\delta$ or diminishing $\delta(k)$?)}.
Performances are evaluated using two metrics versus the iteration index $k$: i) the relative error~(RE) $(F(x(k))-F^\ast)/F^\ast$, where $F^\ast$ is the optimal value of the DDR problem; ii) the $\ell_2$ norm of the stationarity measure $\|\mathfrak{g}(x(k);M)\|$, where $M=1/0.3$ for all convex cases. All settings are tested for 50 random trials with the same algorithmic parameters, respectively.

\begin{figure}[tbp]
	\centering
	\includegraphics[width=.8\linewidth]{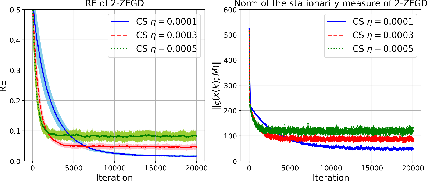}
	\caption{The performance of 2-ZFGD with constant step sizes in the convex case.}
	\label{fig1}
\end{figure}
\begin{figure}[tbp]
	\centering
	\includegraphics[width=.8\linewidth]{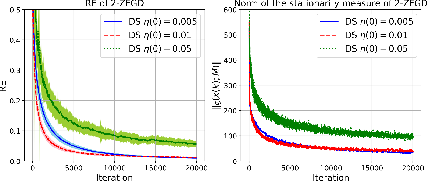}
	\caption{The performance of 2-ZFGD with diminishing step sizes in the convex case. }
	\label{fig2}
\end{figure}
\begin{figure}[tbp]
	\centering
	\includegraphics[width=.8\linewidth]{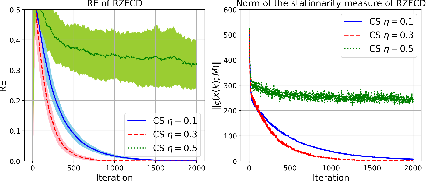}
	\caption{The performance of RZFCD with constant step sizes in the convex case.}
	\label{fig3}
\end{figure}

Figures~\ref{fig1}--\ref{fig3} illustrate the performance of 2-ZFGD and RZFCD. Here in the figures of the relative errors, dark curves represent the averaged values of 50 random trials, while the light shades represent the standard variance of all trials; in the figures of the norms of the stationarity measure, we only plot the average values for visual clarity. The figures show that both algorithms can converge with proper step sizes. Specifically, it can be seen that RZFCD can achieve a smaller final relative error than 2-ZFGD in both the CS and DS settings. For 2-ZFGD, we need to set the step size very small to achieve convergence due to the variance of gradient estimation, which also results in a much slower convergence rate. For RZFCD, the algorithm would fail to converge to the optimal solution if the step size is too large, which is in accordance with the condition $\eta_iL_i\leq 1$ in our theoretical analysis for RZFCD. However, if the step size is too small, the convergence of RZFCD will also become slow, which is typical behavior of first-order and zeroth-order methods.

To compare the two algorithms more clearly, we fix three levels of relative errors ($5\%$, $1\%$, and $0.1\%$), pick out the random trials with the best-tested parameters whose relative errors can drop below these levels within 20000 iterations, and compute the average numbers of iterations needed to achieve the three relative errors. We also compute the proportions out of $50$ runs that achieve these relative errors. The results are listed in Table~\ref{tab2}, in which ``N/A'' means no run can achieve such a relative error. It is obvious that 2-ZFGD needs much more iterations and zeroth-order queries for both settings. On the other hand, RZFCD achieves better performance with much fewer iterations. %Compared to the algorithms with global or local user information communication, such as \cite{deng2015fast,qin2018consensus}, the number of iterations is relatively larger, which is the cost of estimating invisible system dynamics and not sharing local information.\\

\begin{table}[H]
\centering
\setlength{\tabcolsep}{1mm}
{
\begin{tabular}{ccccc}
\toprule
\multicolumn{2}{c}{\multirow{2}{*}{Relative error}}      & 5\%                  & 1\%                  & 0.1\%                \\ \cmidrule{3-5} 
\multicolumn{2}{c}{}    & Iteration/Proportion & Iteration/Proportion & Iteration/Proportion \\ \midrule
\multirow{2}{*}{2-ZFGD} & CS $\eta=0.0001$ & 6234.5/$100\%$             & N/A                    & N/A                    \\ \cmidrule{2-5} 
                        & DS $\eta(0)=0.01$   & 2875.2/$100\%$             & 18435.4/$16\%$         & N/A                    \\ \midrule
RZFCD                   & CS $\eta=0.3$    & 376.7/$100\%$              & 621.4/$100\%$              & 981.7/$100\%$              \\
\bottomrule
\end{tabular}
}
\caption{Average numbers of iterations needed and proportions of trials for achieving certain levels of relative errors in  the convex case.}
\label{tab2}
\end{table}

\subsection{The Nonconvex Test Case}

In the nonconvex test case, we consider an aggregator coordinating multiple agents in the distribution feeder. The distribution feeder is based on the 141-bus radial system from~\cite{khodr2008maximum}, of which we adopt the topology and the line parameters. Decision variables include active and inactive power loads at all buses, whose lower bounds $l_i$ are zero and upper bounds $u_i$ are the nominated load levels from the original 141-bus system. The nonlinear and nonconvex AC power flows are incorporated to formulate penalties for voltage safety. The global objective functions are then given by~\eqref{eq:global_cost}, including the squared difference to the desired load level $(p_c(x)-D)^2$ as well as the penalty term $\rho(x)$ for voltage safety. We set $\underline{v}=0.96~\mathrm{p.u.}$, $\overline{v}=1.04~\mathrm{p.u.}$ and $\alpha_D=\alpha_v=20$ in our test case. The parameters $a_i,\ b_i$ are generated in the same way as in the convex test case. We assume that the aggregator can observe or measure the total active power fed into the network, as well as the voltages of all buses. The total load to be curtailed is set to $0.15~\mathrm{p.u.}=1500~ \mathrm{kW}$ in this case.

We test the RZFCD algorithm with constant step size as well as the 2-ZFGD algorithm with both settings of constant and diminishing step sizes. For RZFCD, the constant step size is set to be $\eta=0.025$ and the smoothing radius is set to be $r(k)=\min\{0.1/(k+1)^{1.2},2\times10^{-4}\}$. For 2-ZFGD under the constant step size setting, we set $\eta=3\times 10^{-6}$, $\delta=0.005$ and $r(k)=\min\{0.01/(k+4000)^{1.1},10^{-5}\}$. For 2-ZFGD under the diminishing step size setting, we set $\eta(k)=3\times 10^{-4}/\sqrt{k+1000}$, $\delta(k) = \min\{50/(k+1),0.1\}$ and $r(k)$ the same as 2-ZFGD under the constant step size setting. These parameters are tuned in such a way that the empirical convergence of the algorithms can be as fast as possible.

\begin{figure}[tbp]
\centering
\begin{subfigure}{.325\textwidth}
\centering
\includegraphics[width=\linewidth]{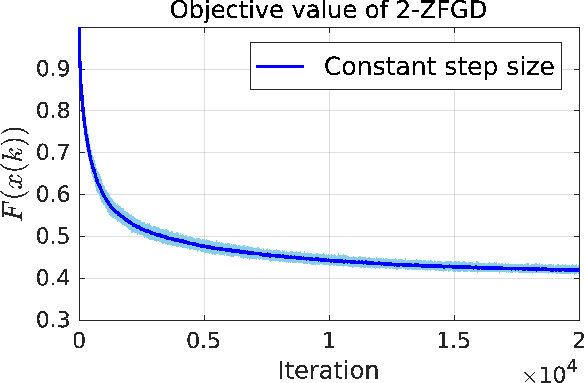} \\
\vspace{6pt}
\includegraphics[width=\linewidth]{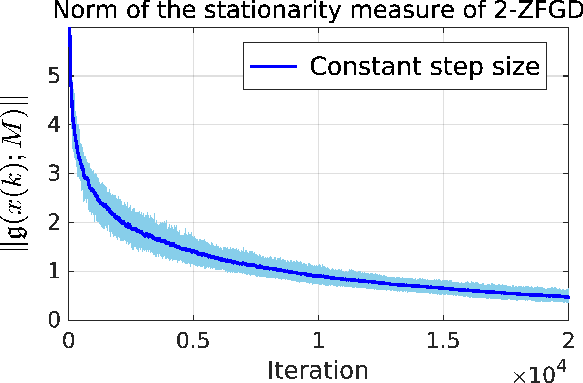}
\caption{2-ZFGD, constant stepsize}
\end{subfigure}
\begin{subfigure}{.325\textwidth}
\centering
\includegraphics[width=\linewidth]{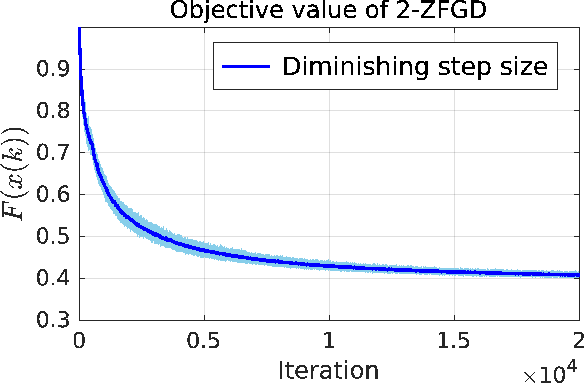} \\
\vspace{6pt}
\includegraphics[width=\linewidth]{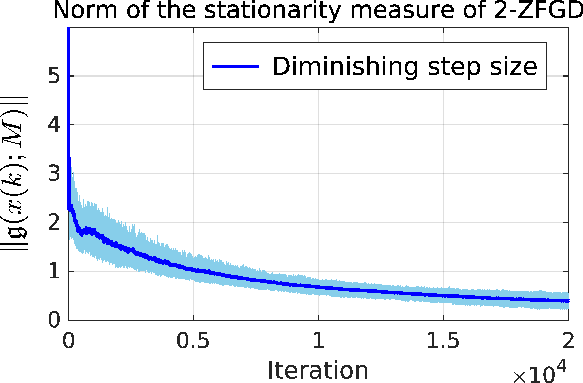}
\caption{2-ZFGD, diminishing stepsize}
\end{subfigure}
\begin{subfigure}{.325\textwidth}
\centering
\includegraphics[width=\linewidth]{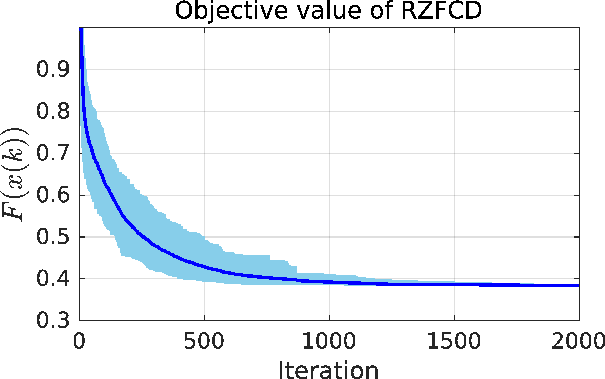} \\
\vspace{6pt}
\includegraphics[width=\linewidth]{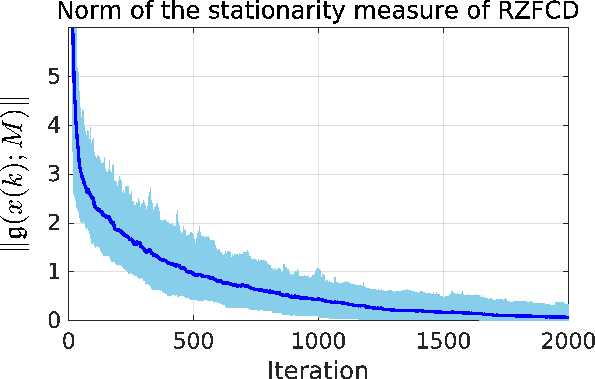}
\caption{RZFCD}
\end{subfigure}
\caption{The performance of 2-ZFGD and RZFCD in the nonconvex test case.}
\label{fig:nonconvex_numerical}
\end{figure}

Figure~\ref{fig:nonconvex_numerical} illustrates the numerical convergence behavior of the three settings, where we plot the objective value $F(x(k))$ and the norm of the stationarity measure $\|\mathfrak{g}(x(k);M)\|$ with $M=1/0.025$ versus $k$. Here, each dark solid curve shows the average trajectory of 100 random trials for each setting, and the light blue shades represent the interval from the 5th percentile to the 95th percentile among the 100 random trials. It can be seen that RZFCD with a proper constant step size converges much faster than 2-ZFGD under both the constant and the diminishing step size settings. These observations justify our theoretical results, and also suggest that 2-ZFGD does not seem to be able to compete with RZFCD even if we employ diminishing step sizes.

\section{Conclusion}
% We formulate a general model of distributed demand response considering the invisibility of the system dynamics without agents sharing their load and preference information. Then, 2d-point and 2-point gradient estimations are introduced to estimate the gradient based on zeroth-order feedback, and design a decentralized algorithm, i.e., 2-ZFGD, to solve distributed demand response. Furthermore, another random zeroth-order feedback gradient descent algorithm (RZFCD) is proposed to achieve faster convergence rate and reduce the number of gradient estimations considerably. Theoretical analysis is carried out to compare their convergence rate. We also conduct case studies to verify and compare the performance of proposed algorithms. There is much potential future work: 1) Due to the nature of zeroth-order feedback, the optimization over a future time horizon still needs further investigation; 2) Further algorithm design can be carried out to further optimize the iteration complexity; 3) Investigate accelerated versions of zeroth-order feedback gradient descent.

We studied a distributed demand response problem in which a mathematical model of the system's physics is not available. We proposed two distributed zeroth-order algorithms, 2-ZFGD and RZFCD, to address the issue of lacking the system model. Furthermore, the two algorithms do not require the agents to upload their load or preference information to the aggregator, which can help preserve the agents' privacy. We provided theoretical analysis of the two algorithms for both the convex case and the nonconvex case, and compared their advantages and disadvantages in detail. Numerical experiments were conducted to verify the performance of the proposed algorithms.

We emphasize that this work is only a starting point that illustrates the potential of applying zeroth-order optimization methods to distributed demand response, and there are still questions and issues that need to be addressed before the algorithms can be actually implemented in real systems, such as how to choose the algorithm parameters, how to ensure safety of random exploration during the optimization procedure, how to deal with measurement noise and error when accessing zeroth-order information, how to handle temporal coupling introduced by energy storage, etc. From a theoretical perspective, it will be interesting to see whether we can further reduce the complexity of the algorithms by, e.g., employing Nesterov's acceleration techniques or exploiting the structural properties of the power grid. It would be also interesting to investigate quantitative privacy guarantees for zeroth-order optimization methods.

\bibliographystyle{unsrt}
\bibliography{main_arXiv}
\appendix
\section{Auxiliary Results for 2-ZFGD}

Let $S(x,r)$ denote the Cartesian product $\prod_{i=1}^N S_i(x_i,r)$.

\begin{lemma}
\label{2_point_estimator_error}
Suppose $\mathcal{X}\subset\mathbb{R}^d$ is a compact and convex set satisfying $\underline{R}\mathbb{B}_d\subseteq\mathcal{X}\subseteq\overline{R}\mathbb{B}_d$. Let $h:\mathcal{X}\rightarrow\mathbb{R}$ be $\Lambda$-Lipschitz and $L$-smooth. Let $\delta\in(0,1)$ and $r\in(0,\delta\underline{R}/(3\sqrt{d}))$. Then
\[
\left\|\mathbb{E}_z\!\left[
G_h(x;r,z)
\right]
-\nabla h^r(x)
\right\|
\leq \Lambda\left(\frac{4\overline{R}^2}{r^2}+2\right)
\exp\!\left(\frac{d}{2}-\frac{\delta^2\underline{R}^2}{4r^2}\right)
\]
for all $x\in(1-\delta)\mathcal{X}$, where $z\sim \mathcal{Z}(x,r)$, and $h^r:(1-\delta)\mathcal{X}\rightarrow\mathbb{R}$ is a $\Lambda$-Lipschitz and $L$-smooth function satisfying
\[
|h^r(x)-h(x)|\leq \min\left\{r\Lambda\sqrt{d},\frac{1}{2}r^2Ld\right\},
\quad
\|\nabla h^r(x)- \nabla h(x)\|
\leq rL\sqrt{d}.
\]
\end{lemma}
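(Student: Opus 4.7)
The plan is to take $h^r$ to be a Gaussian smoothing of a suitable extension of $h$, and then bound the bias by decomposing according to whether the unprojected Gaussian sample lands in $S(x,r)$. The geometric engine is the Flaxman-style observation that $x\in(1-\delta)\mathcal{X}$ together with $\underline{R}\mathbb{B}_d\subseteq\mathcal{X}$ implies $x+\delta\underline{R}\mathbb{B}_d\subseteq\mathcal{X}$, which makes the bad event exponentially rare in $r$.

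\textbf{Constructing $h^r$.} I would first fix a $\Lambda$-Lipschitz and $L$-smooth extension $\tilde{h}:\mathbb{R}^d\to\mathbb{R}$ of $h$ (via a McShane--Whitney Lipschitz extension, combined with a small mollification if smoothness outside $\mathcal{X}$ needs to be restored), and define $h^r(x)\coloneqq \mathbb{E}_{\bar z\sim\mathcal{N}(0,I_d)}[\tilde{h}(x+r\bar z)]$. Then $h^r$ inherits $\Lambda$-Lipschitzness and $L$-smoothness from $\tilde h$. The estimate $|h^r(x)-h(x)|\leq r\Lambda\sqrt{d}$ follows from Jensen and $\mathbb{E}\|\bar z\|\leq\sqrt{d}$; the sharper bound $|h^r(x)-h(x)|\leq \tfrac12 r^2 Ld$ comes from a second-order Taylor expansion exploiting $\mathbb{E}[\bar z]=0$ and $\mathbb{E}\|\bar z\|^2=d$; and $\|\nabla h^r(x)-\nabla h(x)\|\leq rL\sqrt{d}$ follows by differentiating under the integral and applying smoothness.

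\textbf{Bias decomposition and tail estimate.} Stein's identity gives $\nabla h^r(x)=\mathbb{E}_{\bar z}[G_{\tilde h}(x;r,\bar z)]$, so
\[
\mathbb{E}_z[G_h(x;r,z)]-\nabla h^r(x)
=\mathbb{E}_{\bar z}\!\left[\bigl(G_h(x;r,z)-G_{\tilde h}(x;r,\bar z)\bigr)\mathbf{1}_{\{\bar z\notin S(x,r)\}}\right],
\]
because when $\bar z\in S(x,r)$ we have $z=\bar z$ and $x+r\bar z\in\mathcal{X}$, so $\tilde{h}(x+r\bar z)=h(x+r\bar z)$ and the two estimators coincide. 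By the geometric observation, $S(x,r)\supseteq(\delta\underline{R}/r)\mathbb{B}_d$, hence $\{\bar z\notin S(x,r)\}\subseteq\{\|\bar z\|>\rho\}$ where $\rho\coloneqq \delta\underline{R}/r$; the hypothesis $r<\delta\underline{R}/(3\sqrt{d})$ yields $\rho^2>9d$. By $\Lambda$-Lipschitzness, $\|G_h(x;r,z)\|\leq \Lambda\|z\|^2\leq 4\Lambda\overline{R}^2/r^2$ (since $z\in S(x,r)$ and $\mathcal{X}\subseteq\overline{R}\mathbb{B}_d$ force $\|z\|\leq 2\overline{R}/r$), and $\|G_{\tilde h}(x;r,\bar z)\|\leq\Lambda\|\bar z\|^2$. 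Hence the bias norm is at most $\Lambda\,\mathbb{E}[(4\overline{R}^2/r^2+\|\bar z\|^2)\mathbf{1}_{\|\bar z\|>\rho}]$. A Chernoff bound with the Gaussian MGF $\mathbb{E}[e^{\lambda\|\bar z\|^2}]=(1-2\lambda)^{-d/2}$ (and its $\lambda$-derivative for the second-moment term), optimized over $\lambda$, bounds both terms by constant multiples of $e^{d/2-\rho^2/4}$, producing the stated inequality.

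\textbf{Main obstacle.} The delicate part is the final tail bookkeeping: a naive Markov step applied to $\|\bar z\|^2\sim\chi^2_d$ leaves a stray factor of $d$ coming from $\mathbb{E}[\|\bar z\|^2 e^{\lambda\|\bar z\|^2}]=d(1-2\lambda)^{-d/2-1}$, so matching the advertised constants $(4\overline{R}^2/r^2+2)\exp(d/2-\delta^2\underline{R}^2/(4r^2))$ requires using the Chernoff bound with $\lambda$ optimized (rather than fixed at $1/4$) and invoking the inequality $(d/2)\ln(\rho^2/d)\leq \rho^2/4$, which holds precisely because $\rho^2\geq 9d$; this is what absorbs the polynomial-in-$d$ prefactor into the $e^{d/2}$ slack. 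A secondary subtlety is in constructing $\tilde h$ so that it simultaneously preserves Lipschitzness and $L$-smoothness, which generically requires convolution with a mollifier whose scale must be taken small enough that the stated approximation bounds on $h^r-h$ and $\nabla h^r-\nabla h$ are not perturbed.
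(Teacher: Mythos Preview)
Your approach differs from the paper's in how $h^r$ is constructed. The paper does \emph{not} extend $h$ to $\mathbb{R}^d$ and then apply Gaussian smoothing; instead it sets $h^r(x)=\mathbb{E}_{y\sim\mathcal{Y}(r)}[h(x+ry)]$ for a compactly supported isotropic distribution $\mathcal{Y}(r)$ (imported from the cited reference) with $\mathbb{E}[\|y\|^2]\leq d$ and $x+ry\in\mathcal{X}$ for all $x\in(1-\delta)\mathcal{X}$. Because every evaluation stays inside $\mathcal{X}$, the $\Lambda$-Lipschitzness, $L$-smoothness, and both approximation bounds for $h^r$ follow immediately from those of $h$, with no extension needed. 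The price is that $\mathbb{E}_z[G_h(x;r,z)]$ no longer equals $\nabla h^r(x)$; instead the cited lemma gives $\|\mathbb{E}_z[G_h]-\kappa(r)\nabla h^r\|\leq (4\Lambda\overline{R}^2/r^2)\exp(d/2-\delta^2\underline{R}^2/(4r^2))$ for a scalar $\kappa(r)\in(0,1]$, and the factor $2$ in the final bound then comes from $1-\kappa(r)\leq 2\exp(d/2-\delta^2\underline{R}^2/(4r^2))$ together with $\|\nabla h^r(x)\|\leq\Lambda$.

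Your route via a global extension $\tilde h$ and Stein's identity is natural, but the extension step is more delicate than you indicate. McShane--Whitney preserves the Lipschitz constant but not smoothness; mollification restores smoothness but then $\tilde h\neq h$ on $\mathcal{X}$, which breaks the identity $G_h(x;r,z)=G_{\tilde h}(x;r,\bar z)$ on the good event $\{\bar z\in S(x,r)\}$ that your bias decomposition relies on. What is actually needed is a $C^{1,1}$ extension theorem for convex domains (e.g.\ Le~Gruyer or Azagra--Mudarra) that preserves the Lipschitz constant of $\nabla h$ \emph{and} exact agreement on $\mathcal{X}$; such results exist, but this is a nontrivial black box rather than a routine mollification, and whether the extension is simultaneously $\Lambda$-Lipschitz requires a separate check. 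Separately, your tail term $\mathbb{E}[\|\bar z\|^2\mathbf{1}_{\|\bar z\|>\rho}]=d\,\mathbb{P}(\chi^2_{d+2}>\rho^2)$ carries an extra factor of order $d$; absorbing it into the $e^{d/2}$ slack as you propose yields a bound of the right shape but not the exact constant $2$. The paper's compactly supported smoothing plus the $\kappa(r)$ correction sidesteps both issues cleanly.
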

\begin{proof}
This lemma is an extension of \cite[Lemma 1]{tang2023zeroth}. Denote $\beta=\delta\underline{R}/(r\sqrt{d})$. By following the proof of \cite[Lemma 1]{tang2023zeroth}, we can show that
\[
\left\|\mathbb{E}_{z\sim\mathcal{Z}(x,r)}\!\left[
G_h(x;r,z)
\right]
-\kappa(r)\nabla h^r(x)
\right\|
\leq \frac{4\Lambda\overline{R}^2}{r^2}
\exp\!\left(\frac{d}{2}-\frac{\delta^2\underline{R}^2}{4r^2}\right).
\]
Here $h^r:(1-\delta)\mathcal{X}\rightarrow\mathbb{R}$ is defined by
\[
h^r(x)=\mathbb{E}_{y\sim\mathcal{Y}(r)}[h(x+ry)]
\]
with $\mathcal{Y}(r)$ being an isotropic probability distribution satisfying $\mathbb{E}_{y\sim\mathcal{Y}(r)}[\|y\|^2]\leq d$;
the quantity $\kappa(r)$ satisfies
\[
1-\kappa(r)
\leq \frac{1}{\sqrt{\pi d}}
\left(\beta e^{-(\beta^2-1)/2}\right)^{\!d}
+\left(\beta^2 e^{1-\beta^2}\right)^{\!d/2},
\]
where $\beta=r^{-1}\delta \underline{R}/\sqrt{d}$. By using $x\leq e^{x^2/4}$ and $x^2\leq e^{x^2/2}$ for $x>0$, we can derive that
\[
1-\kappa(r)
\leq \left(\frac{1}{\sqrt{\pi d}}
+1\right)\exp\!\left(\frac{d}{2}-\frac{\beta^2 d}{4}\right)
\leq 2 \exp\!\left(\frac{d}{2}-\frac{\beta^2 d}{4}\right).
\]
As a result,
\begin{align*}
& \left\|\mathbb{E}_{z\sim\mathcal{Z}(x,r)}\!\left[
G_h(x;r,z)
\right]
-\nabla h^r(x)
\right\| \\
\leq\ &
\left\|\mathbb{E}_{z\sim\mathcal{Z}(x,r)}\!\left[
G_h(x;r,z)
\right]
-\kappa(r)\nabla h^r(x)
\right\|
+(1-\kappa(r))\|\nabla h^r(x)\| \\
\leq\ &
\frac{4\Lambda\overline{R}^2}{r^2}
\exp\!\left(\frac{d}{2}-\frac{\beta^2 d}{4}\right)
+\Lambda(1-\kappa(r))
\leq \Lambda\left(\frac{4\overline{R}^2}{r^2}+2\right)
\exp\!\left(\frac{d}{2}-\frac{\beta^2 d}{4}\right).
\end{align*}
The bound for $|h^r(x)-h(x)|$ can be proved similarly as in the proof of \cite[Lemma 1]{tang2023zeroth}; the bound for $\|\nabla h^r(x)-\nabla h(x)\|$ can be derived by
\begin{align*}
\|\nabla h^r(x)-\nabla h(x)\|
=\ &
\|\mathbb{E}_{y\sim\mathcal{Y}(r)}[\nabla_x h(x+ry)-\nabla_x h(x)]\| \\
\leq\ &
\mathbb{E}_{y\sim\mathcal{Y}(r)}\!\left[
\|\nabla_x h(x+ry)-\nabla_x h(x)\|\right] \\
\leq\ &
\mathbb{E}_{y\sim\mathcal{Y}(r)}
\!\left[L\cdot r\|y\|\right]
\leq rL\sqrt{d}.
\qedhere
\end{align*}
\end{proof}

\begin{lemma}
\label{lemma:2point_second_moment}
Let $h:\mathcal{X}\rightarrow\mathbb{R}$ be $L$-smooth. Then for any $\delta\in(0,1)$,
\[
\mathbb{E}_z\!\left[
\left\|G_h(x;r,z)\right\|^2
\right]
\leq
2\left[d+2+\left(\frac{2\overline{R}}{r}\right)^{\!4}
\exp\left(\frac{d}{2}-\frac{\delta^2\underline{R}^2}{4r^2}\right)\right]\|\nabla h(x)\|^2
+ \frac{r^2L^2(d+6)^3}{2},
\]
where $z\sim\mathcal{Z}(x,r)$.
\end{lemma}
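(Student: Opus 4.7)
The plan is to start from a smoothness-based Taylor expansion and decompose the estimator into a linearized main term plus a smoothness-induced remainder. Since $h$ is $L$-smooth,
\[
\left|h(x+rz)-h(x)-r\langle\nabla h(x),z\rangle\right|\leq \frac{Lr^2}{2}\|z\|^2,
\]
so dividing by $r$ and multiplying by $z$ yields $G_h(x;r,z)=\langle\nabla h(x),z\rangle z+\varepsilon(z)z$ with $|\varepsilon(z)|\leq (Lr/2)\|z\|^2$. Applying $\|a+b\|^2\leq 2\|a\|^2+2\|b\|^2$ gives
\[
\|G_h(x;r,z)\|^2\leq 2\langle\nabla h(x),z\rangle^2\|z\|^2+\frac{L^2r^2}{2}\|z\|^6,
\]
reducing the task to controlling second and sixth moments of $z\sim\mathcal{Z}(x,r)$.

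For the remainder term, I would exploit the fact that $0\in S_i(x_i,r)$ since $x_i\in\mathcal{X}_i$, so $S(x,r)$ is a closed convex set containing the origin; projection onto such a set is non-expansive relative to the origin, hence $\|z\|=\|\mathcal{P}_{S(x,r)}[\bar z]\|\leq\|\bar z\|$ for the underlying Gaussian sample $\bar z\sim\mathcal{N}(0,I_d)$. The standard Gaussian moment formula $\mathbb{E}[\|\bar z\|^{2k}]=d(d+2)\cdots(d+2k-2)$ then gives $\mathbb{E}[\|z\|^6]\leq d(d+2)(d+4)\leq(d+6)^3$, producing the term $r^2L^2(d+6)^3/2$.

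For the main term, I would split according to the event $E=\{\bar z\in S(x,r)\}$ on which $z=\bar z$. On $E$, a direct Gaussian computation (rotating coordinates so $\nabla h(x)=\|\nabla h(x)\|e_1$ and using $\mathbb{E}[\bar z_1^4]=3$, $\mathbb{E}[\bar z_1^2]=1$) gives the identity $\mathbb{E}[\langle\nabla h(x),\bar z\rangle^2\|\bar z\|^2]=(d+2)\|\nabla h(x)\|^2$, contributing the $d+2$ factor. On $E^c$, I would drop the indicator via Cauchy--Schwarz and exploit the uniform diameter bound $\|z\|\leq 2\overline{R}/r$ that follows from $S(x,r)\subseteq\{(s-x)/r:s,x\in\mathcal{X}\subseteq\overline{R}\mathbb{B}_d\}$, yielding $\mathbb{E}[\langle\nabla h(x),z\rangle^2\|z\|^2\mathbb{1}_{E^c}]\leq (2\overline{R}/r)^4\|\nabla h(x)\|^2\Pr[E^c]$.

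The last and trickiest step is bounding $\Pr[E^c]$ to produce precisely the exponential factor $\exp(d/2-\delta^2\underline{R}^2/(4r^2))$. Using the shrinkage $x\in(1-\delta)\mathcal{X}$ implicit in the 2-ZFGD iteration, the same convex-combination argument as in the proof of Lemma~\ref{2_point_estimator_error} shows $S(x,r)\supseteq(\delta\underline{R}/r)\mathbb{B}_d$, so $\Pr[E^c]\leq\Pr[\|\bar z\|>\delta\underline{R}/r]$. I would then apply the Chernoff bound with MGF parameter $\lambda=1/4$, giving $\Pr[\|\bar z\|^2>t^2]\leq (1-2\lambda)^{-d/2}e^{-\lambda t^2}=2^{d/2}e^{-t^2/4}\leq\exp(d/2-t^2/4)$ with $t=\delta\underline{R}/r$. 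The main obstacle is keeping the constants crisp here: one must absorb $2^{d/2}$ into $e^{d/2}$ cleanly, and verify that the diameter bound $(2\overline{R}/r)^4$ suffices rather than the sharper but messier Cauchy--Schwarz approach with $\mathbb{E}[\|\bar z\|^8]$. Assembling the three contributions then yields the stated inequality.
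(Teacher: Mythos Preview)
Your proposal is correct and follows essentially the same approach as the paper's proof: the smoothness-based decomposition $G_h=\langle\nabla h(x),z\rangle z+\varepsilon(z)z$ followed by $\|a+b\|^2\leq 2\|a\|^2+2\|b\|^2$, the sixth-moment bound via $\|\mathcal{P}\bar z\|\leq\|\bar z\|$ and $\mathbb{E}[\|\bar z\|^6]\leq(d+6)^3$, the split of the linearized term into a ``good'' part bounded by the full Gaussian expectation $(d+2)\|\nabla h(x)\|^2$ and a ``bad'' part bounded by the diameter estimate $(2\overline{R}/r)^4$ times the chi-squared tail $\exp(d/2-\delta^2\underline{R}^2/(4r^2))$. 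The only cosmetic difference is that the paper splits on the post-projection event $\{\|z\|<\delta\underline{R}/r\}$ using that the density of $\mathcal{Z}(x,r)$ agrees with the Gaussian there, whereas you split on the pre-projection event $\{\bar z\in S(x,r)\}$; both reduce to the same tail probability via $S(x,r)\supseteq(\delta\underline{R}/r)\mathbb{B}_d$.
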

\begin{proof}
Let $x\in\mathcal{X}$ and $r>0$ be fixed.
For notational simplicity, we denote $\mathcal{P}z= \mathcal{P}_{S(x,r)}[z]$ and $\bar{h}(z)=h(x+rz)$ for any $z\in\mathbb{R}^d$. It's not hard to check that $\bar{h}$ is $r^2L$-smooth, and so
\begin{align*}
\left|\bar{h}(\mathcal{P}z)-\bar{h}(0)-\langle\nabla \bar{h}(0),\mathcal{P}z\rangle
\right|
\leq \frac{r^2L}{2}\|\mathcal{P}z\|^2
\leq \frac{r^2L}{2}\|z\|^2,
\end{align*}
which leads to
\begin{equation}\label{eq:variance_bound_temp1}
\begin{aligned}
& \mathbb{E}_{z\sim\mathcal{Z}(x,r)}\!\left[
\left\|G_h(x;r,z)-\langle \nabla h(x),z\rangle z\right\|^2\right] \\
=\ &
\frac{1}{r^2}\,\mathbb{E}_{z\sim\mathcal{N}(0,I_d)}\!\left[
\left|\bar{h}(\mathcal{P}z)-\bar{h}(0)-\langle\nabla \bar{h}(0),\mathcal{P}z\rangle
\right|^2
\|\mathcal{P}z\|^2\right] \\
\leq\ &
\frac{1}{r^2}\cdot\frac{r^4L^2}{4}\,
\mathbb{E}_{z\sim\mathcal{N}(0,I_d)}\!\left[
\|z\|^6\right]
\leq \frac{r^2L^2}{4}(d+6)^3.
\end{aligned}
\end{equation}
Then, since the distribution $\mathcal{Z}(x,r)$ has a density function in the interior of $\frac{\delta\underline{R}}{r}\mathbb{B}_d$ that coincides with the standard Gaussian distribution $p_{\mathcal{N}(0,I_d)}(z)$, we have
\begin{align*}
& \mathbb{E}_{z\sim\mathcal{Z}(x,r)}
\!\left[
|\langle\nabla h(x),z\rangle|^2 \|z\|^2\right]
-\int_{\mathbb{R}^d}
|\langle\nabla h(x),z\rangle|^2 \|z\|^2
p_{\mathcal{N}(0,I_d)}(z)
\cdot\mathsf{1}_{\|z\|<\delta\underline{R}/r}(z)\,dz \\
=\ &
\mathbb{E}_{z\sim\mathcal{Z}(x,r)}
\!\left[
\left|\langle\nabla h(x),z\rangle\right|^2 \|z\|^2
\mathsf{1}_{\|z\|\geq \delta\underline{R}/r}(z)
\right] \\
\leq\ &
\sup_{z\in S(x,r),\|z\|\geq \delta\underline{R}/r} \left(
\left|\langle\nabla h(x),z\rangle\right|^2 \|z\|^2
\right)
\cdot
\left(1-\mathbb{P}_{z\sim\mathcal{Z}(x,r)}(\|z\|<\delta\underline{R}/r)\right) \\
\leq\ &
\|\nabla h(x)\|^2
\sup_{z\in S(x,r),\|z\|\geq \delta\underline{R}/r}\|z\|^4
\cdot\mathbb{P}_{z\sim\mathcal{N}(0,I_d)}\!\left(\sum_{i=1}^d z_i^2\geq\frac{\delta^2\underline{R}^2}{r^2}\right) \\
\leq\ &
\|\nabla h(x)\|^2 \left(\frac{2\overline{R}}{r}\right)^4
\exp\left(\frac{d}{2}-\frac{\delta^2\underline{R}^2}{4r^2}\right),
\end{align*}
where we used the inequality
\[
\mathbb{P}_{z\sim\mathcal{N}(0,I_d)}\left(\sum_{i=1}^d z_i^2\geq \beta^2 d\right)\leq \exp\!\left(\frac{d}{2}-\frac{\beta^2d}{4}\right).
\]
Next, we notice that
\begin{align*}
& \int_{\mathbb{R}^d}
|\langle\nabla h(x),z\rangle|^2 \|z\|^2
p_{\mathcal{N}(0,I_d)}(z)
\cdot\mathsf{1}_{\|z\|<\delta\underline{R}/r}(z)\,dz \\
\leq\ &
\int_{\mathbb{R}^d}
|\langle\nabla h(x),z\rangle|^2 \|z\|^2
p_{\mathcal{N}(0,I_d)}(z)\,dz \\
=\ &
\mathbb{E}_{z\sim\mathcal{N}(0,I_d)}
\!\left[|\langle \nabla h(x),z\rangle|^2 \|z\|^2 \right] \\
=\ &
\nabla h(x)^\tran\cdot \mathbb{E}_{z\sim\mathcal{N}(0,I_d)}\!\left[\|z\|^2zz^\tran\right]
\cdot \nabla h(x) \\
=\ &
\nabla h(x)^\tran \cdot (d+2)I_d \cdot \nabla h(x)
= (d+2)\|\nabla h(x)\|^2.
\end{align*}
As a result, we have
\[
\mathbb{E}_{z\sim\mathcal{Z}(x,r)}
\!\left[
|\langle\nabla h(x),z\rangle|^2 \|z\|^2\right]
\leq 
(d+2)\|\nabla h(x)\|^2
+
\Lambda^2 \left(\frac{2\overline{R}}{r}\right)^4
\exp\left(\frac{d}{2}-\frac{\delta^2\underline{R}^2}{4r^2}\right).
\]
We can now upper bound $\mathbb{E}_{z\sim\mathcal{Z}(x,r)}\!\left[
\left\|G_h(x;r,z)\right\|^2
\right]$ by
\begin{align*}
& \mathbb{E}_{z\sim\mathcal{Z}(x,r)}\!\left[
\left\|G_h(x;r,z)\right\|^2
\right] \\
\leq\ &
2\,\mathbb{E}_{z\sim\mathcal{Z}(x,r)}\!\left[
\left\|G_h(x;r,z)-\langle \nabla h(x),z\rangle z \right\|^2
\right]
+2\,\mathbb{E}_{z\sim\mathcal{Z}(x,r)}\!\left[
\|\langle\nabla h(x),z\rangle z \|^2
\right] \\
\leq\ &
2\left[d+2+\left(\frac{2\overline{R}}{r}\right)^{\!4}
\exp\left(\frac{d}{2}-\frac{\delta^2\underline{R}^2}{4r^2}\right)\right]\|\nabla h(x)\|^2
+ \frac{r^2L^2(d+6)^3}{2}.
\qedhere
\end{align*}
\end{proof}

\begin{lemma}\label{lemma:shrink_proj_difference}
Let $\mathcal{X}\subset\mathbb{R}^d$ be a convex and compact set, and let $\delta\in(0,1)$ be arbitrary. Then for any $x\in\mathbb{R}^d$, we have
\[
\|\mathcal{P}_{\mathcal{X}}[x]-\mathcal{P}_{(1-\delta)\mathcal{X}}[x]\|
\leq \delta \overline{R}.
\]
\end{lemma}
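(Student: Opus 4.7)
The plan is to exploit the variational (obtuse-angle) characterization of metric projections onto convex sets, and to combine two carefully chosen test points so that the dependence on $x$ cancels. Let me write $y = \mathcal{P}_\mathcal{X}[x]$ and $z = \mathcal{P}_{(1-\delta)\mathcal{X}}[x]$; the goal is to bound $\|y - z\|$.

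The first step is to record two elementary membership facts. Since $y \in \mathcal{X}$, the scaled vector $(1-\delta)y$ lies in $(1-\delta)\mathcal{X}$; conversely, since $z \in (1-\delta)\mathcal{X}$ with $\delta \in (0,1)$, the vector $z/(1-\delta)$ lies in $\mathcal{X}$. I would then substitute these as test points into the respective projection inequalities to obtain
\[
\langle x - z,\,(1-\delta)y - z\rangle \leq 0
\qquad\text{and}\qquad
\langle x - y,\, z - (1-\delta)y\rangle \leq 0,
\]
where the second is obtained after clearing the positive factor $1-\delta$. Adding these two inequalities cancels all $x$-dependent terms and produces the key inequality
\[
\langle y - z,\,(1-\delta)y - z\rangle \leq 0.
\]

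From here the calculation is quick: rewriting $(1-\delta)y - z = (y - z) - \delta y$ and expanding the inner product yields $\|y - z\|^2 \leq \delta \langle y - z, y\rangle$. A single application of Cauchy--Schwarz gives $\|y - z\|^2 \leq \delta\,\|y - z\|\,\|y\|$, and dividing by $\|y - z\|$ (the case $y = z$ being trivial) combined with $\|y\| \leq \overline{R}$, which follows from $\mathcal{X} \subseteq \overline{R}\,\mathbb{B}_d$, gives the desired bound $\|y - z\| \leq \delta\overline{R}$.

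The main subtlety is spotting the correct pair of test vectors. The pair $(1-\delta)y$ and $z/(1-\delta)$ is natural because the two feasible sets are related by uniform scaling, and this specific choice is exactly what makes the cross-terms in $x$ vanish when the two projection inequalities are summed. Notably, the argument uses only convexity and compactness of $\mathcal{X}$; in particular, neither $0\in\mathcal{X}$ nor any regularity of the boundary of $\mathcal{X}$ is required.
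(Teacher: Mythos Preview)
Your proof is correct and follows essentially the same route as the paper: both arguments use the test points $(1-\delta)\mathcal{P}_{\mathcal{X}}[x]$ and $(1-\delta)^{-1}\mathcal{P}_{(1-\delta)\mathcal{X}}[x]$ in the two projection inequalities, combine them (the paper multiplies the first inequality by $1-\delta$ before adding, which is exactly your ``clearing the positive factor'' step) to reach $\langle y-z,(1-\delta)y-z\rangle\leq 0$, and finish with Cauchy--Schwarz and $\|y\|\leq\overline{R}$.
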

\begin{proof}
Since $(1-\delta)\mathcal{P}_{\mathcal{X}}[x]\in(1-\delta)\mathcal{X}$ and $(1-\delta)^{-1}\mathcal{P}_{(1-\delta)\mathcal{X}}[x]\in\mathcal{X}$, by the properties of projection operators onto convex sets, we have
\begin{align*}
\left\langle x-\mathcal{P}_{\mathcal{X}}[x],
\frac{1}{1-\delta}\mathcal{P}_{(1-\delta)\mathcal{X}}[x]-\mathcal{P}_{\mathcal{X}}[x]\right\rangle\ &
\leq 0, \\
\left\langle x-\mathcal{P}_{(1-\delta)\mathcal{X}}[x],
(1-\delta)\mathcal{P}_{\mathcal{X}}[x]
-\mathcal{P}_{(1-\delta)\mathcal{X}}[x]\right\rangle\ &
\leq 0.
\end{align*}
By multiplying the first inequality with $1-\delta$ and adding it to the second inequality, we get
\[
\left\langle
\mathcal{P}_{(1-\delta)\mathcal{X}}[x]
-\mathcal{P}_{\mathcal{X}}[x],
\mathcal{P}_{(1-\delta)\mathcal{X}}[x]
-(1-\delta)\mathcal{P}_{\mathcal{X}}[x]
\right\rangle\leq 0.
\]
We then see that
\begin{align*}
\|\mathcal{P}_{(1-\delta)\mathcal{X}}[x]
-\mathcal{P}_{\mathcal{X}}[x]\|^2
\leq\ &
-\delta \left\langle
\mathcal{P}_{(1-\delta)\mathcal{X}}[x]
-\mathcal{P}_{\mathcal{X}}[x],
\mathcal{P}_{\mathcal{X}}[x]
\right\rangle \\
\leq\ &
\delta \|\mathcal{P}_{(1-\delta)\mathcal{X}}[x]
-\mathcal{P}_{\mathcal{X}}[x]\|
\|\mathcal{P}_{\mathcal{X}}[x]\| \\
\leq\ &
\delta \overline{R} \|\mathcal{P}_{(1-\delta)\mathcal{X}}[x]
-\mathcal{P}_{\mathcal{X}}[x]\|,
\end{align*}
from which we can directly obtain the desired bound.
\end{proof}

\section{Proof of Theorem~\ref{theorem:2ZFGD_convergence}}
\label{proof_of_theorem1}

We denote the filtration $\mathcal{F}_k = \sigma(x(1),z(1),\ldots,x(k-1),z(k-1),x(k))$. We also denote
\[
x(k) = (x_1(k),\ldots,x_N(k)),
\quad
g(k) = (g_1(k),\ldots,g_N(k)),
\]
so that the iterations of 2-ZFGD can be equivalently written as
\[
x(k+1) = \mathcal{P}_{(1-\delta)\mathcal{X}}
\!\left[x(k) - \eta\, g(k)\right].
\]

First, we introduce the following general result for the analysis of projected-SGD-type algorithms.
\begin{lemma}
Consider the iterations $x(k+1)=\mathcal{P}_{(1-\delta)\mathcal{X}}\!\left[
x(k)-\eta\,g(k)
\right]$. Suppose $F$ is $L_F$-smooth and $\eta\leq 1/(2L_F)$. Then we have
\begin{equation}
\label{mirror_descent_result}
\begin{aligned}
\frac{1}{2\eta}(\left\|\tilde{x}-x(k+1)\right\|^2-\left\|\tilde{x}-x(k)\right\|^2)
\leq\ &
\langle g(k),\tilde{x}-x(k)\rangle+
\eta \|g(k)-\nabla F(x(k))\|^2 \\
& + F(x(k))-F(x(k+1))
\end{aligned}
\end{equation}
\end{lemma}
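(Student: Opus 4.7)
The plan is to view this as a standard one-step progress bound for projected (inexact) gradient descent, where the inexactness is the deviation $g(k)-\nabla F(x(k))$. The argument will combine the first-order optimality condition for the projection with the smoothness-based descent inequality, and will require $\tilde{x}\in(1-\delta)\mathcal{X}$ so that the projection inequality can be applied with $\tilde{x}$ as the test point.

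First, I would start from the projection optimality condition. Since $x(k+1)=\mathcal{P}_{(1-\delta)\mathcal{X}}[x(k)-\eta g(k)]$, for any $\tilde{x}\in(1-\delta)\mathcal{X}$,
\[
\langle x(k+1)-(x(k)-\eta g(k)),\,\tilde{x}-x(k+1)\rangle\geq 0,
\]
which, together with the three-point identity
\[
\langle x(k+1)-x(k),\tilde{x}-x(k+1)\rangle
=\tfrac{1}{2}\bigl(\|\tilde{x}-x(k)\|^2-\|\tilde{x}-x(k+1)\|^2-\|x(k+1)-x(k)\|^2\bigr),
\]
yields
\[
\frac{1}{2\eta}\bigl(\|\tilde{x}-x(k+1)\|^2-\|\tilde{x}-x(k)\|^2\bigr)
\leq \langle g(k),\tilde{x}-x(k)\rangle+\langle g(k),x(k)-x(k+1)\rangle-\frac{1}{2\eta}\|x(k+1)-x(k)\|^2.
\]

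Second, I would split $g(k)=\nabla F(x(k))+(g(k)-\nabla F(x(k)))$ in the middle term and bound the two pieces separately. For the deterministic piece, the $L_F$-smoothness of $F$ gives the descent inequality
\[
\langle\nabla F(x(k)),x(k)-x(k+1)\rangle\leq F(x(k))-F(x(k+1))+\tfrac{L_F}{2}\|x(k+1)-x(k)\|^2.
\]
For the error piece, I would use Young's inequality in the form $\langle a,b\rangle\leq \eta\|a\|^2+\tfrac{1}{4\eta}\|b\|^2$ to get
\[
\langle g(k)-\nabla F(x(k)),\,x(k)-x(k+1)\rangle\leq \eta\|g(k)-\nabla F(x(k))\|^2+\tfrac{1}{4\eta}\|x(k+1)-x(k)\|^2.
\]

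Finally, I would collect all the quadratic $\|x(k+1)-x(k)\|^2$ terms. Their coefficient becomes $\tfrac{L_F}{2}+\tfrac{1}{4\eta}-\tfrac{1}{2\eta}=\tfrac{L_F}{2}-\tfrac{1}{4\eta}$, which is nonpositive precisely under the hypothesis $\eta\leq 1/(2L_F)$. Dropping that nonpositive term yields exactly the claimed inequality~\eqref{mirror_descent_result}. There is no real obstacle here; the only subtlety is making sure $\tilde{x}\in(1-\delta)\mathcal{X}$ so that the projection inequality applies, and choosing the Young coefficient so that the leftover $\|x(k+1)-x(k)\|^2$ term is killed by the step-size condition rather than needing to be tracked.
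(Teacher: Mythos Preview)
Your proposal is correct and follows essentially the same approach as the paper: the projection optimality inequality combined with the three-point identity, then splitting $g(k)=\nabla F(x(k))+(g(k)-\nabla F(x(k)))$, applying the $L_F$-smoothness descent inequality to the first piece and Young's inequality (with coefficient $\eta$) to the second, and finally using $\eta\leq 1/(2L_F)$ to discard the residual $\|x(k+1)-x(k)\|^2$ term. The paper's proof and yours differ only in the order in which the inner product $\langle g(k),\tilde{x}-x(k+1)\rangle$ is expanded, and both note that $\tilde{x}\in(1-\delta)\mathcal{X}$ is required.
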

\begin{proof}
Let $\tilde{x}\in(1-\delta)\mathcal{X}$ be arbitrary. Note that $x(k+1)=\mathcal{P}_{(1-\delta)\mathcal{X}}\!\left[
x(k)-\eta\,g(k)
\right]$ implies
\[
\langle
x(k)-\eta\,g(k)-x(k+1), \tilde{x}-x(k+1)\rangle\leq 0.
\]
By using $\|x(k)-x(k+1)\|^2+\|\tilde{x}-x(k+1)\|^2-\|\tilde{x}-x(k)\|^2=2\langle x(k)-x(k+1),\tilde{x}-x(k+1)\rangle$, we can derive from the above inequality that
\begin{align*}
\frac{1}{2\eta}
\left(\|\tilde{x}-x(k+1)\|^2-\|\tilde{x}-x(k)\|^2\right)
\leq\ &
\langle g(k),\tilde{x}-x(k+1)\rangle
-\frac{1}{2\eta}\|x(k+1)-x(k)\|^2.
\end{align*}
The inner product term can be further bounded by
\begin{align*}
\langle g(k),\tilde{x}-x(k+1)\rangle
=\ & 
\langle g(k),\tilde{x}-x(k)\rangle
+ \langle g(k)-\nabla F(x(k)),x(k)-x(k+1)\rangle \\
&
+ \langle \nabla F(x(k)),x(k)-x(k+1)\rangle \\
\leq\ &
\langle g(k),\tilde{x}-x(k)\rangle
+ \eta\|g(k)-\nabla F(x(k))\|^2
+\frac{1}{4\eta}\|x(k)-x(k+1)\|^2 \\
&
+ \langle \nabla F(x(k)), x(k)-x(k+1)\rangle,
\end{align*}
and by plugging in $F(x(k+1))\leq F(x(k)) + \langle\nabla F(x(k)),x(k+1)-x(k)\rangle
+\frac{L_F}{2}\|x(k+1)-x(k)\|^2$ which is a consequence of the $L_F$-smoothness of $F$, we get
\begin{align*}
\langle g(k),\tilde{x}-x(k+1)\rangle
\leq\ &
\langle g(k),\tilde{x}-x(k)\rangle
+\eta\|g(k)-\nabla F(x(k))\|^2 \\
&
+\left(\frac{L_F}{2}+\frac{1}{4\eta}\right)\|x(k)-x(k+1)\|^2
+F(x(k))-F(x(k+1)).
\end{align*}
Combining all previous results, we get
\begin{align*}
& \frac{1}{2\eta}
\left(\|\tilde{x}-x(k+1)\|^2-\|\tilde{x}-x(k)\|^2\right) \\
\leq\ &
\langle g(k),\tilde{x}-x(k)\rangle
+ \eta \|g(k)-\nabla F(x(k))\|^2 \\
&
+ F(x(k))-F(x(k+1))
+\left(\frac{L_F}{2}-\frac{1}{4\eta}\right)
\|x(k)-x(k+1)\|^2.
\end{align*}
We complete the proof by using the condition that $\eta\leq 1/(2L_F)$.
\end{proof}

Our analysis consists of the following steps:

\noindent\textbf{1. Bound the expectation of the right-hand side of (\ref{mirror_descent_result})}. The first term on the right-hand side of (\ref{mirror_descent_result}) can be bounded by the following lemma.
\begin{lemma}
\label{term1_lemma}
Assume $0<r(k)\leq\frac{\delta \underline{R}}{3\sqrt{d}}$ and $\delta\in(0,1)$. Then for any $\tilde{x}\in (1-\delta)\mathcal{X}$,
\begin{align*}
\mathbb{E}\left[\langle g(k),\tilde{x}-x(k)\rangle\right]
\leq\ &\mathbb{E}\left[F(\tilde{x})-F(x(k))\right]
+
2r(k)L_\phi \overline{R}\sqrt{d}
+
2\Lambda_\phi \overline{R}\left(\frac{4\overline{R}^2}{r(k)^2}+2\right)\exp{\left(\frac{d}{2}-\frac{\delta^2\underline{R}^2}{4r(k)^2}\right)}.
\end{align*}
\end{lemma}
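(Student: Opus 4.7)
The plan is to decompose $g(k)$ into its two natural pieces and handle them separately. Writing $g(k) = \nabla f(x(k)) + G_\phi(x(k); r(k), z(k))$ where $G_\phi$ is the $2$-point zeroth-order estimator of $\nabla\phi$ defined in~\eqref{grad_esti2}, I condition on $\mathcal{F}_k$. Since $\nabla f(x(k))$ is $\mathcal{F}_k$-measurable, only the $G_\phi$ term carries randomness, and Lemma~\ref{2_point_estimator_error} (applied with $h=\phi$, which is $\Lambda_\phi$-Lipschitz and $L_\phi$-smooth) tells me that $\mathbb{E}_z[G_\phi(x(k);r(k),z(k))]$ is close to $\nabla\phi^{r(k)}(x(k))$, which in turn is within $r(k)L_\phi\sqrt{d}$ of $\nabla\phi(x(k))$ in norm. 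So the natural decomposition is
\begin{align*}
\mathbb{E}[\langle g(k),\tilde{x}-x(k)\rangle\mid\mathcal{F}_k]
=\ &\langle\nabla F(x(k)),\tilde{x}-x(k)\rangle
+\langle\nabla\phi^{r(k)}(x(k))-\nabla\phi(x(k)),\tilde{x}-x(k)\rangle \\
&+\langle\mathbb{E}_z[G_\phi]-\nabla\phi^{r(k)}(x(k)),\tilde{x}-x(k)\rangle.
\end{align*}

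Next, I bound each piece. The first inner product is handled by convexity of $F$: $\langle\nabla F(x(k)),\tilde{x}-x(k)\rangle\leq F(\tilde{x})-F(x(k))$. For the remaining two pieces I use Cauchy--Schwarz together with the diameter bound $\|\tilde{x}-x(k)\|\leq 2\overline{R}$, which holds because both $\tilde{x}$ and $x(k)$ lie in $(1-\delta)\mathcal{X}\subseteq\overline{R}\mathbb{B}_d$ (note that $x(k)\in(1-\delta)\mathcal{X}$ since it is the output of the projection in the previous iteration; for $k=0$ one assumes this, which is consistent with the algorithm setup). Applying Lemma~\ref{2_point_estimator_error} then directly yields the factor $r(k)L_\phi\sqrt{d}\cdot 2\overline{R}$ for the smoothing-bias term, and the factor $2\overline{R}\cdot\Lambda_\phi(4\overline{R}^2/r(k)^2+2)\exp(d/2-\delta^2\underline{R}^2/(4r(k)^2))$ for the truncation-bias term.

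Finally, I verify the hypothesis of Lemma~\ref{2_point_estimator_error}: the condition $r(k)\leq\delta\underline{R}/(3\sqrt{d})$ is exactly what is assumed in the statement, and $x(k)\in(1-\delta)\mathcal{X}$ as noted above, so the lemma applies at each iteration. Taking total expectation and summing the three bounds gives the claimed inequality.

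I do not anticipate a serious technical obstacle here; the proof is essentially a conditional-expectation computation combined with the bias bound of Lemma~\ref{2_point_estimator_error}. The only point that requires a bit of care is making sure the estimator bias is split in the right way (through the auxiliary smoothed function $\phi^{r(k)}$ rather than directly comparing to $\nabla\phi$), so that both error terms can be read off cleanly from Lemma~\ref{2_point_estimator_error} without losing constants.
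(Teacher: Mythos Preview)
Your proposal is correct and follows essentially the same approach as the paper: both condition on $\mathcal{F}_k$, decompose $\mathbb{E}[g(k)\mid\mathcal{F}_k]$ as $\nabla F(x(k))+(\nabla\phi^{r(k)}(x(k))-\nabla\phi(x(k)))+(\mathbb{E}_z[G_\phi]-\nabla\phi^{r(k)}(x(k)))$, then handle the first piece via convexity of $F$ and the remaining two via Cauchy--Schwarz, the diameter bound $\|\tilde{x}-x(k)\|\leq 2\overline{R}$, and the two estimates from Lemma~\ref{2_point_estimator_error}. Your care in verifying $x(k)\in(1-\delta)\mathcal{X}$ and the hypothesis $r(k)\leq\delta\underline{R}/(3\sqrt{d})$ is appropriate and matches what the paper uses.
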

\begin{proof}
We have
\begin{align*}
\mathbb{E}\!\left[\langle g(k),\tilde{x}-x(k)\rangle|\mathcal{F}_k\right]
=\ &
\langle
\mathbb{E}\!\left[g(k)|\mathcal{F}_k\right],\tilde{x}-x(k)\rangle \\
=\ &
\langle \nabla F(x(k))+\nabla \phi^{r(k)}(x(k))-\nabla \phi(x(k)),\tilde{x}-x(k)\rangle \\
&
+
\left\langle
\mathbb{E}\!\left[g(k)|\mathcal{F}_k\right]
-(\nabla F(x(k))-\nabla\phi(x(k)))
-\nabla \phi^{r(k)}(x(k)),\tilde{x}-x(k)\right\rangle.
\end{align*}
For the first term on the right-hand side, we have
\begin{align*}
& \langle \nabla F(x(k))+\nabla \phi^{r(k)}(x(k))-\nabla \phi(x(k)),\tilde{x}-x(k)\rangle \\
\leq\ &
\langle \nabla F(x(k)), \tilde{x}-x(k)\rangle
+
\|\nabla\phi^{r(k)}(x(k))-\nabla\phi(x(k))\|
\cdot\|\tilde{x}-x(k)\| \\
\leq\ &
F(\tilde{x})-F(x(k))
+ r(k)L_\phi\sqrt{d}\cdot 2\overline{R},
\end{align*}
where we used the convexity of $F$ and Lemma~\ref{2_point_estimator_error}. For the second term, we have
\begin{align*}
& \langle
\mathbb{E}\!\left[g(k)|\mathcal{F}_k\right]
-(\nabla F(x(k))-\nabla\phi(x(k)))
-\nabla \phi^{r(k)}(x(k)),\tilde{x}-x(k)\rangle \\
\leq\ &
\|\mathbb{E}\!\left[g(k)|\mathcal{F}_k\right]
-(\nabla F(x(k))-\nabla\phi(x(k)))
-\nabla \phi^{r(k)}(x(k))\|
\cdot \|\tilde{x}-x(k)\| \\
\leq\ &
\Lambda_\phi\left(\frac{4\overline{R}^2}{r(k)^2}+2\right)\exp{\left(\frac{d}{2}-\frac{\delta^2\underline{R}^2}{4r(k)^2}\right)}\cdot 2\overline{R},
\end{align*}
where we used Lemma~\ref{2_point_estimator_error} in the last step. Summarizing these bounds and taking the total expectation, we get the desired inequality.
\end{proof}

For the expectation of the second term on the right-hand side of (\ref{mirror_descent_result}), we note that
\begin{align*}
\mathbb{E}\!\left[\|g(k)-\nabla F(x(k))\|^2
|\mathcal{F}_k\right]
=\ &
\mathbb{E}\!\left[
\|G_\phi(x(k);r(k),z(k))
-\nabla\phi(x(k))\|^2
|\mathcal{F}_k
\right] \\
=\ &
\mathbb{E}\!\left[
\|G_\phi(x(k);r(k),z(k))\|^2
|\mathcal{F}_k\right]
- \|\nabla\phi(x(k))\|^2 \\
&
-
2\left\langle\mathbb{E}\!\left[G_\phi(x(k);r(k),z(k))|\mathcal{F}_k\right]
-\nabla\phi(x(k)),\nabla \phi(x(k))\right\rangle \\
\leq\ &
\mathbb{E}\!\left[
\|G_\phi(x(k);r(k),z(k))\|^2
|\mathcal{F}_k\right]-\|\nabla\phi(x(k))\|^2 \\
&
+ 2\left\|
\mathbb{E}\!\left[
G_\phi(x(k);r(k),z(k))|\mathcal{F}_k\right]-\nabla\phi(x(k))\right\|\cdot\|\nabla\phi(x(k))\|,
\end{align*}
By Lemma~\ref{2_point_estimator_error} and the assumption that $\phi$ is $\Lambda_\phi$-Lipschitz and $L_\phi$-smooth, we have
\begin{align*}
& 2\left\|\mathbb{E}\!\left[
G_\phi(x(k);r(k),z(k))|\mathcal{F}_k\right]-\nabla\phi(x(k))\right\|\cdot \|\nabla\phi(x(k))\| \\
\leq\ &
2\left(\left\|
\mathbb{E}\!\left[
G_\phi(x(k);r(k),z(k))|\mathcal{F}_k\right]-\nabla\phi^{r(k)}(x(k))\right\|
+\left\|
\nabla\phi^{r(k)}(x(k))-\nabla\phi(x(k))\right\|
\right)\|\nabla\phi(x(k))\|\\
\leq\ &
2\Lambda_\phi\left\|
\mathbb{E}\!\left[
G_\phi(x(k);r(k),z(k))|\mathcal{F}_k\right]-\nabla\phi^{r(k)}(x(k))\right\|
+
\left\|
\nabla\phi^{r(k)}(x(k))-\nabla\phi(x(k))\right\|^2
+ \|\nabla\phi(x(k))\|^2 \\
\leq\ &
2\Lambda_\phi^2\left(\frac{4\overline{R}^2}{r(k)^2}+2\right)
\exp\!\left(\frac{d}{2}-\frac{\delta^2\overline{R}^2}{4r(k)^2}\right)
+r(k)^2 L_\phi^2 d + \|\nabla\phi(x(k))\|^2,
\end{align*}
and by Lemma~\ref{lemma:2point_second_moment},
\begin{align*}
\mathbb{E}\!\left[
\|G_\phi(x(k);r(k),z(k))\|^2
|\mathcal{F}_k\right]
\leq\ &
2\Lambda_\phi^2\left[d+2+\left(\frac{2\overline{R}}{r(k)}\right)^{\!4}
\exp\left(\frac{d}{2}-\frac{\delta^2\underline{R}^2}{4r(k)^2}\right)\right]
+ \frac{r(k)^2L^2(d+6)^3}{2}.
\end{align*}
By summarizing the previous bounds, we obtain
\begin{equation}\label{eq:var_gk_bound}
\begin{aligned}
\mathbb{E}\!\left[
\|g(k)-\nabla F(x(k))\|^2
|\mathcal{F}_k\right]
\leq\ &
2\Lambda_\phi^2(d+2)
+ r(k)^2L_\phi^2(d+5)^3\\
&
+
2\Lambda_\phi^2
\left[\left(\frac{2\overline{R}}{r(k)}\right)^{\!4}
+\left(\frac{2\overline{R}}{r(k)}\right)^{\!2}
+2
\right]
\exp\left(\frac{d}{2}-\frac{\delta^2\underline{R}^2}{4r(k)^2}\right),
\end{aligned}
\end{equation}
where we used $(d+6)^3/2+d\leq (d+5)^3$. We remark that the bound ~\eqref{eq:var_gk_bound} applies also to the nonconvex case.

We can now take the expectation of~\eqref{mirror_descent_result} and plug in all the derived bounds to get
\begin{equation}
\label{eq:2ZFGD_convex_step1_final}
\begin{aligned}
&\frac{1}{2\eta}\left(\|\tilde{x}-x(k+1)\|^2
-\|\tilde{x}-x(k)\|^2\right) \\
\leq\ &
\mathbb{E}\left[F(\tilde{x})-F(x(k+1))\right]
+
2r(k)L_\phi \overline{R}\sqrt{d}
+
2\eta\Lambda_\phi^2(d+2)
+ \eta r(k)^2L_\phi^2(d+5)^3 \\
&
+2\Lambda_\phi
\left[\overline{R}\left(\frac{4\overline{R}^2}{r(k)^2}+2\right)
+
\eta\Lambda_\phi
\left(\left(\frac{2\overline{R}}{r(k)}\right)^{\!4}
+\left(\frac{2\overline{R}}{r(k)}\right)^{\!2}
+2
\right)
\right]\exp{\left(\frac{d}{2}-\frac{\delta^2\underline{R}^2}{4r(k)^2}\right)}.
\end{aligned}
\end{equation}

\noindent\textbf{2. Take the telescoping sum}. Let $x^\ast$ denote the optimizer of $F(x)$ over $x\in\mathcal{X}$, and let $\tilde{x}=(1-\delta)x^\ast$. Since $\|x^\ast-\tilde{x}\|=\delta\|x^\ast\|\leq\delta\overline{R}$ and $F$ is $\Lambda_F$-Lipschitz, we have
$F(\tilde{x})\leq F(x^\ast) + \delta\overline{R}\Lambda_F$. By combining it with~\eqref{eq:2ZFGD_convex_step1_final} and taking the telescoping sum, we can derive
\begin{equation}
\label{eq:2ZFGD_convex_step2_beginning}
\begin{aligned}
\mathbb{E}\!\left[
\frac{1}{K}\sum_{k=1}^K
F(x(k))
\right]-F(x^\ast)
\leq\ &
\delta\overline{R}\Lambda_F
+\frac{\|\tilde{x}-x(0)\|^2}{2\eta K}
+
2\eta\Lambda_\phi^2(d+2)
+\frac{2L_\phi\overline{R}\sqrt{d}}{K}\sum_{k=0}^{K-1}r(k) \\
&
+\frac{\eta L_\phi^2(d+5)^3}{K}\sum_{k=0}^{K-1}
r(k)^2
+\frac{1}{K}\sum_{k=0}^{K-1}
\omega_k,
\end{aligned}
\end{equation}
where we denote
\begin{align*}
\omega_k
=
2\Lambda_\phi
\left[\overline{R}\left(\frac{4\overline{R}^2}{r(k)^2}+2\right)
+
\eta\Lambda_\phi
\left(\left(\frac{2\overline{R}}{r(k)}\right)^{\!4}
+\left(\frac{2\overline{R}}{r(k)}\right)^{\!2}
+2
\right)
\right]\exp{\left(\frac{d}{2}-\frac{\delta^2\underline{R}^2}{4r(k)^2}\right)}.
\end{align*}
Next, we use the conditions on the algorithmic parameters to show that the right-hand side of~\eqref{eq:2ZFGD_convex_step2_beginning} is upper bounded by $\epsilon$. Indeed, by the condition on $\delta$ we have $\delta\overline{R}\Lambda_F\leq \epsilon/5$. Then by the conditions on $\eta$ and $K$, we get
\[
2\eta\Lambda_\phi^2(d+2)\leq \frac{\epsilon}{5},
\qquad
\frac{\|\tilde{x}-x(0)\|^2}{2\eta K}
\leq\frac{\epsilon}{5}.
\]
Moreover,
\[
\frac{2L_\phi\overline{R}\sqrt{d}}{K}\sum_{k=0}^{K-1}r(k)
\leq \frac{1}{2(d+5)L_F}\cdot
\frac{\epsilon}{10\overline{R}^2}\cdot 2L_\phi\overline{R}\sqrt{d}\cdot 2\sqrt{d}\overline{R}
\leq\frac{\epsilon}{5},
\]
and
\[
\frac{\eta L_\phi^2(d+5)^3}{K}\sum_{k=0}^{K-1}r(k)^2
\leq 
\left(\frac{1}{2(d+5)L_F}\right)^2
\cdot \frac{\epsilon}{10\overline{R}^2}
\cdot L_\phi^2(d+5)^3\cdot
\frac{4\overline{R}^2}{d+5}
\leq\frac{\epsilon}{10}.
\]
Finally, to bound $\frac{1}{K}\sum_{k=0}^{K-1}\omega_k$, we note that $\Lambda_\phi\leq\Lambda_F$, $\eta L_F\leq 1/(2d)$ and the condition on $r(k)$ imply
\begin{align*}
\omega_k
\leq\ &
2\Lambda_F\left(\overline{R}+\frac{\Lambda_\phi}{2L_F d}\right)
\cdot 3\left(\frac{2\overline{R}}{r(k)}\right)^{\!4}
\exp{\left(\frac{d}{2}-\frac{\delta^2\underline{R}^2}{4r(k)^2}\right)} \\
\leq\ &
6\Lambda_F\left(\overline{R}+\frac{\Lambda_\phi}{2L_F d}\right)
\cdot\left(\frac{4\overline{R}}{\delta\underline{R}}
\sqrt{\frac{d}{2}+4\ln\frac{8\overline{R}}{\underline{R}}+2\ln\frac{d}{\delta^3}}\right)^{\!4}
\cdot \left(\frac{\underline{R}}{8\overline{R}}\right)^{\!4}
\left(\frac{\delta^3}{d}\right)^{\!2} \\
=\ &
\frac{3}{8}\Lambda_F\left(\overline{R}+\frac{\Lambda_\phi}{2L_F d}\right)
\frac{\delta^2}{d^2}
\left(\frac{d}{2}+4\ln\frac{8\overline{R}}{\underline{R}}+2\ln\frac{d}{\delta^3}\right)^{\!2} \\
\leq\ &
\frac{3\epsilon}{40}
\cdot \delta
\left(\frac{1}{2}+\frac{4}{d}\ln\frac{8\overline{R}}{\underline{R}}
+\frac{2}{d}\ln\frac{d}{\delta^3}\right)^{\!2}
\leq\frac{\epsilon}{10}
\end{align*}
as long as $\epsilon$ (and consequently $\delta$) is sufficiently small. We can now put together all previous bounds and conclude that
\[
\min_{1\leq k\leq K}
\mathbb{E}[F(x(k))-F(x^\ast)]
\leq
\mathbb{E}\!\left[
\frac{1}{K}\sum_{k=1}^K
F(x(k))
\right]-F(x^\ast)
\leq\epsilon.
\]

\section{Proof of Theorem~\ref{theorem:2ZFGD_convergence_nonconvex}}
\label{proof_of_theorem2}
We first define some auxiliary quantities and bounds. Define
\begin{equation}\label{eq:grad_mapping_delta_def}
\mathfrak{g}_\delta(x;M) \coloneqq M\left(x-
\mathcal{P}_{(1-\delta)\mathcal{X}}\!\left[x-\frac{1}{M}\nabla F(x)\right]\right).
\end{equation}
By Lemma~\ref{lemma:shrink_proj_difference}, we can relate $\|\mathfrak{g}_\delta(x;M)\|$ with $\|\mathfrak{g}(x;M)\|$ by
\begin{equation}\label{eq:g_delta_bound}
\begin{aligned}
\left\|\mathfrak{g}(x;M)\right\|^2
\leq\ & \frac{5}{4} \left\|\mathfrak{g}_\delta(x;M)\right\|^2
+ 5 \left\|\mathfrak{g}(x;M) - \mathfrak{g}_\delta(x;M)\right\|^2 \\
=\ &
\frac{5}{4} \left\|\mathfrak{g}_\delta(x;M)\right\|^2 + 5
M^2\left\|
\mathcal{P}_{(1-\delta)\mathcal{X}}\!\left[x-\frac{1}{M}\nabla F(x)\right]
-\mathcal{P}_{\mathcal{X}}\!\left[x-\frac{1}{M}\nabla F(x)\right]
\right\|^2 \\
\leq\ &
\frac{5}{4} \left\|\mathfrak{g}_\delta(x;M)\right\|^2
+ 5\delta^2 M^2 \overline{R}^2.
\end{aligned}
\end{equation}
Then, we notice that the $L_F$-smoothnes of $F$ implies that for any fixed $z\in\mathbb{R}^d$, $x\mapsto F(x)+L_F\|x-z\|^2$ is $L_F$-strongly convex. Thus we can define
\[
\hat{x}(k) = \argmin_{y\in(1-\delta)\mathcal{X}}
\!\left(F(y)+L_F\|y-x(k)\|^2\right),
\quad
\hat{F}(x) = \min_{y\in(1-\delta)\mathcal{X}}
\!\left(F(y)+L_F\|y-x\|^2\right).
\]
We next provide a bound on $\left\|
\mathbb{E}[g(k)|\mathcal{F}_k]-\nabla F(x(k))\right\|$, which will be used for subsequent analysis:
\begin{align*}
& \left\|
\mathbb{E}[g(k)|\mathcal{F}_k]-\nabla F(x(k))\right\| \\
=\ &
\left\|
\mathbb{E}[G_\phi(x(k);r(k),z(k))|\mathcal{F}_k]-\nabla \phi(x(k))\right\| \\
\leq\ &
\left\|
\mathbb{E}[G_\phi(x(k);r(k),z(k))|\mathcal{F}_k]-\nabla \phi^{r(k)}(x(k))\right\|
+\left\|\nabla \phi^{r(k)}(x(k))-\nabla \phi(x(k))\right\| \\
\leq\ &
\Lambda_\phi\left(
\frac{4\overline{R}^2}{r(k)}+2
\right)
\exp\!\left(\frac{d}{2}-\frac{\delta^2\underline{R}^2}{4r(k)^2}\right)
+r(k) L_\phi\sqrt{d}
\end{align*}

Our analysis consists of the following steps:

\noindent\textbf{1. Derive a descent property for the iterates}. We introduce the following lemma as our starting point:
\begin{lemma}[{\cite[Eq. (3.11)]{davis2019stochastic}}]
\label{lemma:PSGD_starting}
We have
\[
\|x(k+1)-\hat{x}(k)\|^2
\leq \left\|
(1-2\eta L_F)(x(k)-\hat{x}(k)) - \eta(g(k)-\nabla F(\hat{x}(k)))
\right\|^2.
\]
\end{lemma}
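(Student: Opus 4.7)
The plan is to realize both $x(k+1)$ and $\hat{x}(k)$ as projections onto the same convex set $(1-\delta)\mathcal{X}$ and then invoke the nonexpansiveness of the metric projection. The iterate $x(k+1) = \mathcal{P}_{(1-\delta)\mathcal{X}}[x(k) - \eta g(k)]$ is already in that form by construction, so the substantive work is to rewrite $\hat{x}(k)$ similarly.

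To this end, I would introduce the proximal objective $\Phi(y) \coloneqq F(y) + L_F\|y - x(k)\|^2$, which, as noted just before the statement of the lemma in the excerpt, is $L_F$-strongly convex and continuously differentiable with gradient $\nabla\Phi(y) = \nabla F(y) + 2L_F(y - x(k))$. Because $\hat{x}(k)$ is the unique minimizer of $\Phi$ over the convex set $(1-\delta)\mathcal{X}$, the first-order optimality condition $\langle \nabla\Phi(\hat{x}(k)), y - \hat{x}(k)\rangle \geq 0$ for every $y\in(1-\delta)\mathcal{X}$ is equivalent to the fixed-point identity
\[
\hat{x}(k) = \mathcal{P}_{(1-\delta)\mathcal{X}}\!\left[\hat{x}(k) - \eta\, \nabla\Phi(\hat{x}(k))\right],
\]
which holds for any $\eta > 0$.

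With both $x(k+1)$ and $\hat{x}(k)$ expressed as projections of vectors onto the same convex set, nonexpansiveness of the metric projection yields
\[
\|x(k+1) - \hat{x}(k)\|^2 \leq \bigl\|(x(k) - \eta g(k)) - (\hat{x}(k) - \eta\, \nabla\Phi(\hat{x}(k)))\bigr\|^2.
\]
I would then substitute $\nabla\Phi(\hat{x}(k)) = \nabla F(\hat{x}(k)) + 2L_F(\hat{x}(k) - x(k))$ on the right and collect the terms proportional to $x(k) - \hat{x}(k)$; what remains is exactly $(1 - 2\eta L_F)(x(k) - \hat{x}(k)) - \eta(g(k) - \nabla F(\hat{x}(k)))$, giving the claimed bound.

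I do not anticipate any substantive obstacle: the derivation is essentially the standard one-step descent inequality for projected methods, with the proximal point $\hat{x}(k)$ playing the role that the minimizer plays in the unconstrained smooth analysis. The only item that warrants explicit verification is that the fixed-point identity for $\hat{x}(k)$ holds for every $\eta > 0$, which follows because $-\nabla\Phi(\hat{x}(k))$ lies in the normal cone of $(1-\delta)\mathcal{X}$ at $\hat{x}(k)$, so shifting $\hat{x}(k)$ by any positive multiple of $-\nabla\Phi(\hat{x}(k))$ and projecting back leaves $\hat{x}(k)$ unchanged.
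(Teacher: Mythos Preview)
Your argument is correct. The paper does not actually prove this lemma; it is stated with a citation to \cite[Eq.~(3.11)]{davis2019stochastic} and used as a black box. Your proposal supplies a self-contained derivation: express both $x(k+1)$ and $\hat{x}(k)$ as projections onto $(1-\delta)\mathcal{X}$ (the latter via the fixed-point identity coming from first-order optimality of the proximal minimizer), apply nonexpansiveness, and expand $\nabla\Phi(\hat{x}(k))=\nabla F(\hat{x}(k))+2L_F(\hat{x}(k)-x(k))$. The algebra you indicate does yield exactly $(1-2\eta L_F)(x(k)-\hat{x}(k))-\eta(g(k)-\nabla F(\hat{x}(k)))$, and your justification that the fixed-point identity holds for every $\eta>0$ via the normal-cone characterization is sound. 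So there is no gap; you have simply filled in what the paper outsources to the reference.
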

We continue our analysis from the inequality given by Lemma~\ref{lemma:PSGD_starting}. Note that
\begin{align*}
& \mathbb{E}\!\left[\left\|
(1-2\eta L_F)(x(k)-\hat{x}(k)) - \eta(g(k)-\nabla F(\hat{x}(k)))
\right\|^2 |\mathcal{F}_k\right] \\
=\ &
\mathbb{E}\!\left[\left\|
(1-2\eta L_F)(x(k)-\hat{x}(k))
-\eta(\nabla F(x(k))-\nabla F(\hat{x}(k)))
-\eta(g(k)-\nabla F(x(k)))
\right\|^2|\mathcal{F}_k\right] \\
=\ &
\|(1-2\eta L_F)(x(k)-\hat{x}(k))
-\eta(\nabla F(x(k))-\nabla F(\hat{x}(k)))\|^2
+ \eta^2\,\mathbb{E}\!\left[\|g(k)-\nabla F(x(k))\|^2 |\mathcal{F}_k\right] \\
& -
2\eta\langle (1-2\eta L_F)(x(k)-\hat{x}(k))
-\eta(\nabla F(x(k))-\nabla F(\hat{x}(k))),
\mathbb{E}[g(k)|\mathcal{F}_k]-\nabla F(x(k))\rangle.
\end{align*}
Now for the first term, we have
\begin{align*}
& \|(1-2\eta L_F)(x(k)-\hat{x}(k))
-\eta(\nabla F(x(k))-\nabla F(\hat{x}(k)))\|^2 \\
=\ &
(1-2\eta L_F)^2 \|x(k)-\hat{x}(k)\|^2
+\eta^2 \|\nabla F(x(k))-\nabla F(\hat{x}(k))\|^2 \\
&
-2\eta(1-2\eta L_F)\langle
\nabla F(x(k))-\nabla F(\hat{x}(k)),x(k)-\hat{x}(k)\rangle \\
\leq\ &
(1-2\eta L_F)^2 \|x(k)-\hat{x}(k)\|^2
+\eta^2 L_F^2 \|x(k)-\hat{x}(k)\|^2
+ 2\eta(1-2\eta L_F)L_F \|x(k)-\hat{x}(k)\|^2 \\
=\ &
\left(1-\eta L_F\right)^2\|x(k)-\hat{x}(k)\|^2,
\end{align*}
which will further imply
\begin{align*}
& -2\eta\langle (1-2\eta L_F)(x(k)-\hat{x}(k))
-\eta(\nabla F(x(k))-\nabla F(\hat{x}(k))),
\mathbb{E}[g(k)|\mathcal{F}_k]-\nabla F(x(k))\rangle \\
\leq\ &
2\eta\left\|(1-2\eta L_F)(x(k)-\hat{x}(k))
-\eta(\nabla F(x(k))-\nabla F(\hat{x}(k)))\right\|
\cdot
\left\|
\mathbb{E}[g(k)|\mathcal{F}_k]-\nabla F(x(k))\right\| \\
\leq\ &
2\eta(1-\eta L_F)\|x(k)-\hat{x}(k)\|\cdot
\left\|
\mathbb{E}[g(k)|\mathcal{F}_k]-\nabla F(x(k))\right\| \\
\leq\ &
2\eta \|x(k)-\hat{x}(k)\|\cdot \Lambda_\phi\left(
\frac{4\overline{R}^2}{r(k)}+2
\right)
\exp\!\left(\frac{d}{2}-\frac{\delta^2\underline{R}^2}{4r(k)^2}\right)
+2\eta\|x(k)-\hat{x}(k)\|\cdot r(k)L_\phi\sqrt{d} \\
\leq\ &
4\eta\overline{R}\Lambda_\phi\left(
\frac{4\overline{R}^2}{r(k)}+2
\right)
\exp\!\left(\frac{d}{2}-\frac{\delta^2\underline{R}^2}{4r(k)^2}\right)
+\eta L_\phi \|x(k)-\hat{x}(k)\|^2
+\eta L_\phi r(k)^2d.
\end{align*}
We can now apply Lemma~\ref{lemma:PSGD_starting} and combine all the previous bounds with~\eqref{eq:var_gk_bound} to obtain
\begin{align*}
& \mathbb{E}\!\left[\|x(k+1)-\hat{x}(k)\|^2|\mathcal{F}_k\right] \\
\leq\ &
(1-2\eta L_F + \eta L_\phi + \eta^2 L_F^2)
\|x(k)-\hat{x}(k)\|^2
+ 2\eta^2 \Lambda_\phi^2(d+2) \\
&
+ \eta^2 L_\phi^2 r(k)^2(d+5)^3
+
2\eta^2\Lambda_\phi^2
\left[\left(\frac{2\overline{R}}{r(k)}\right)^{\!4}
+\left(\frac{2\overline{R}}{r(k)}\right)^{\!2}
+2
\right]
\exp\left(\frac{d}{2}-\frac{\delta^2\underline{R}^2}{4r(k)^2}\right) \\
&
+4\eta\overline{R}\Lambda_\phi\left(
\frac{4\overline{R}^2}{r(k)}+2
\right)
\exp\!\left(\frac{d}{2}-\frac{\delta^2\underline{R}^2}{4r(k)^2}\right)
+\eta L_\phi r(k)^2d \\
\leq\ &
(1-\eta L_F+\eta^2 L_F^2) \|x(k)-\hat{x}(k)\|^2
+ 2\eta^2\Lambda_\phi^2(d+2)
+\eta L_\phi r(k)^2(\eta L_\phi(d+5)^3+d)+
6\eta\varpi_k,
\end{align*}
where we denote
\[
\varpi_k = \Lambda_\phi(2\overline{R}+\eta\Lambda_\phi)\left(
\frac{2\overline{R}}{r(k)}
\right)^{\!4}\exp\!\left(\frac{d}{2}-\frac{\delta^2\underline{R}^2}{4r(k)^2}\right).
\]
Consequently,
\begin{align*}
\mathbb{E}[\hat{F}(x(k+1))]
\leq\ &
\mathbb{E}\!\left[F(\hat{x}(k)) + L_F\|\hat{x}(k)-x(k+1)\|^2\right] \\
\leq\ &
\mathbb{E}\!\left[F(\hat{x}(k))+ L_F\|x(k)-\hat{x}(k)\|^2\right]
+
\mathbb{E}\!\left[
-\eta L_F^2(1-\eta L_F)\|x(k)-\hat{x}(k)\|^2
\right] \\
&
+2\eta^2L_F\Lambda_\phi^2(d+2)
+\eta L_F L_\phi r(k)^2(\eta L_\phi(d+5)^3+d)
+
6\eta L_F \varpi_k \\
=\ &
\mathbb{E}\!\left[\hat{F}(x(k))\right]
-\eta L_F^2(1-\eta L_F)\mathbb{E}\!\left[\|x(k)-\hat{x}(k)\|^2\right] \\
&+2\eta^2L_F\Lambda_\phi^2(d+2)
+\eta L_F L_\phi r(k)^2(\eta L_\phi(d+5)^3+d)
+
6\eta L_F \varpi_k.
\end{align*}
\noindent\textbf{2. Take the telescoping sum}.
By taking the telescoping sum and using $\eta L_F\leq 1/6$, we can show that
\begin{align*}
\frac{1}{K}\sum_{k=0}^{K-1}
\mathbb{E}\!\left[L_F^2\|x(k)-\hat{x}(k)\|^2\right]
\leq\ &
\frac{
6\left(\hat{F}(x(0))-\mathbb{E}\big[\hat{F}(x(K))\big]\right)
}{5\eta K}
+\frac{12\eta L_F \Lambda_\phi^2(d+2)}{5} \\
&
+\frac{6L_FL_\phi(\eta L_\phi(d+5)^3+d)}{5K}\sum_{k=0}^{K-1}r(k)^2
+\frac{36L_F}{5K}\sum_{k=0}^{K-1}\varpi_k.
\end{align*}
Notice that
\begin{align*}
L_F^2\|x(k)-\hat{x}(k)\|^2
\geq\ &
\frac{1}{2}\|\mathfrak{g}_{\delta}(x(k);2L_F)\|^2
\geq \frac{1}{2}\|\mathfrak{g}_{\delta}(x(k);L_F)\|^2 \\
\geq\ &
\frac{2}{5}\|\mathfrak{g}(x(k);L_F)\|^2
-2\delta^2 L_F^2\overline{R}^2,
\end{align*}
where the first inequality follows from~\cite[Theorem 3.5]{drusvyatskiy2018error},  the second inequality follows from~\cite[Lemma 2]{nesterov2013gradient}, and the last step follows from~\eqref{eq:g_delta_bound}. Moreover, it's not hard to see that $\hat{F}(x(0))\leq F(x(0))$ and
\[
\hat{F}(x(K)) = \min_{y\in(1-\delta)\mathcal{X}}
\left(
F(y)+L_F\|y-x(K)\|^2
\right)
\geq \min_{y\in(1-\delta)\mathcal{X}}
F(y)
\geq \min_{y\in\mathcal{X}}
F(y)=F^\ast.
\]
As a result, we can obtain
\begin{align*}
\frac{1}{K}\sum_{k=0}^{K-1}
\mathbb{E}\!\left[
\|\mathfrak{g}(x(k);L_F)\|^2
\right]
\leq\ &
\frac{3(F(x(0))-F^\ast)}{\eta K}
+6\eta L_F\Lambda_\phi^2(d+2)
+5\delta^2 L_F^2\overline{R}^2
\\
&
+
\frac{3L_FL_\phi(\eta L_\phi(d+5)^3+d)}{K}
\sum_{k=0}^{K-1}r(k)^2
+\frac{18 L_F}{K}\sum_{k=0}^{K-1}\varpi_k.
\end{align*}
We can now apply the conditions on the algorithmic parameters and obtain
\[
5\delta^2 L_F^2\overline{R}^2
\leq\frac{\epsilon}{5},
\qquad
6\eta L_F\Lambda_\phi^2(d+2)
\leq\frac{\epsilon}{5},
\qquad
\frac{3(F(x(0))-F^\ast)}{\eta\epsilon}
\leq\frac{\epsilon}{5},
\]
and
\begin{align*}
\frac{3L_FL_\phi(\eta L_\phi(d+5)^3+d)}{K}\sum_{k=0}^{K-1}r(k)^2
\leq\ &
\frac{3L_FL_\phi((d+5)^2+d)}{15(F(x(0))-F^\ast)}\cdot\frac{\epsilon}{L_F(d+5)}
\cdot\frac{F(x(0))-F^\ast}{L_\phi(d+6)}
\leq\frac{\epsilon}{5},
\end{align*}
and for sufficiently small $\epsilon$,
\begin{align*}
18L_F\varpi_k
\leq\ &
36L_F\Lambda_\phi
\left(\overline{R}+\frac{\Lambda_\phi}{2L_Fd}\right)
\left(
\frac{4\overline{R}}{\delta\underline{R}}
\sqrt{\frac{d}{2}+4\ln\frac{8\overline{R}}{\underline{R}}
+\ln\frac{d^3}{\delta^7}}
\right)^{\!4}
\cdot\left(\frac{\underline{R}}{8\overline{R}}\right)^{\!4}\cdot\frac{\delta^7}{d^3} \\
=\ &
\frac{9}{2} \cdot\frac{L_F\Lambda_\phi}{2d}\left(
\overline{R}+\frac{\Lambda_\phi}{2L_F d}
\right)
\frac{\delta^3}{d^2}
\left(\frac{d}{2}+4\ln\frac{8\overline{R}}{\underline{R}}
+\ln\frac{d^3}{\delta^7}\right)^{\!2} \\
\leq\ &
\frac{9\epsilon}{50}
\cdot\delta\left(\frac{1}{2}
+\frac{4}{d}\ln\frac{8\overline{R}}{\underline{R}}+\frac{1}{d}\ln\frac{d^3}{\delta^7}\right)
\leq\frac{\epsilon}{5}.
\end{align*}
The proof is now complete.

\section{Proof of the Equality~\eqref{eq:variance_two_point_zero_r}}
\label{appendix:proof_variance_two_point_zero_r}

Note that by~\eqref{eq:variance_bound_temp1}, we have
\[
\lim_{r\downarrow 0}
\mathbb{E}_{z\sim\mathcal{Z}(x,r)}
\!\left[
\left\|
G_h(x;r,z)-\langle\nabla h(x),z\rangle z
\right\|^2
\right]
=0.
\]
Therefore we only need to show that
\[
\lim_{r\downarrow 0}
\mathbb{E}_{z\sim\mathcal{Z}(x,r)}\!\left[
\|\langle\nabla h(x),z\rangle z
-\nabla h(x)
\|^2
\right]
=(d+1)\|\nabla h(x)\|^2.
\]
Indeed, we have
\begin{align*}
& \mathbb{E}_{z\sim\mathcal{Z}(x,r)}\!\left[
\|\langle\nabla h(x),z\rangle z
-\nabla h(x)
\|^2
\right] - (d+1)\|\nabla h(x)\|^2 \\
=\ &
\nabla h(x)^\tran\mathbb{E}_{z\sim\mathcal{Z}(x,r)}\!\left[
(zz^\tran-I)^2
-(d+1)I\right]\nabla h(x) \\
=\ &
\nabla h(x)^\tran\mathbb{E}_{z\sim\mathcal{Z}(x,r)}\!\left[
(\|z\|^2 zz^\tran-(d+2)I) - 2(zz^\tran - I)
\right]\nabla h(x) \\
=\ &
\nabla h(x)^\tran
\left(\mathbb{E}_{z\sim\mathcal{Z}(x,r)}
\!\left[\|z\|^2 zz^\tran\right]
-\mathbb{E}_{z\sim\mathcal{N}(0,I_d)}
\!\left[\|z\|^2 zz^\tran\right]\right)\nabla h(x) \\
&
-2\nabla h(x)^\tran \left(\mathbb{E}_{z\sim\mathcal{Z}(x,r)}
\!\left[zz^\tran\right]
-\mathbb{E}_{z\sim\mathcal{N}(0,I_d)}
\!\left[zz^\tran\right]\right)\nabla h(x) \\
=\ &
\nabla h(x)^\tran
\mathbb{E}_{z\sim\mathcal{N}(0,I_d)}
\!\left[\|\mathcal{P}z\|^2 \cdot \mathcal{P}z\cdot \mathcal{P}z^\tran
-\|z\|^2 zz^\tran\right]\nabla h(x) \\
&
-2\nabla h(x)^\tran \mathbb{E}_{z\sim\mathcal{N}(0,I_d)}
\!\left[\mathcal{P}z\cdot \mathcal{P}z^\tran
-zz^\tran\right]\nabla h(x),
\end{align*}
where we used $\mathbb{E}_{z\sim\mathcal{N}(0,I_d)}[\|z\|^2 zz^\tran]=(d+2)I_d$, and $\mathcal{P}$ denotes the projection onto the convex set $S(x,r)=\prod_{i=1}^N S_i(x_i,r)$. Note that as $r\downarrow 0$, we have $\mathcal{P}z\rightarrow z$ and $\|\mathcal{P}z\|\leq\|z\|$ for every $z\in\mathbb{R}^d$. By using the dominated convergence theorem, we see that
\[
\lim_{r\downarrow 0}
\mathbb{E}_{z\sim\mathcal{N}(0,I_d)}
\!\left[\|\mathcal{P}z\|^2 \cdot \mathcal{P}z\cdot \mathcal{P}z^\tran
-\|z\|^2 zz^\tran\right]
=0,
\]
and
\[
\lim_{r\downarrow 0}
\mathbb{E}_{z\sim\mathcal{N}(0,I_d)}
\!\left[\mathcal{P}z\cdot \mathcal{P}z^\tran
-zz^\tran\right]
=0.
\]
As a result,
\[
\lim_{r\downarrow 0}
\left(\mathbb{E}_{z\sim\mathcal{Z}(x,r)}\!\left[
\|\langle\nabla h(x),z\rangle z
-\nabla h(x)
\|^2
\right] - (d+1)\|\nabla h(x)\|^2\right)=0,
\]
which completes the proof.
\end{document}